\documentclass[11pt,3p,times]{elsarticle}

\usepackage{amsmath,amsfonts,amssymb}
\usepackage{nicematrix}
\usepackage{scalerel}
\usepackage{amsthm}
\usepackage{comment}
\usepackage{mathtools}
\usepackage{xcolor}
\usepackage{tikz}
\usepackage{soul}

\usepackage{float}

\usepackage{xpatch}

\newcommand{\proofnameformat}{\itshape}
\xpatchcmd{\proof}{\itshape}{\proofnameformat}{}{}
\renewcommand{\proofnameformat}{\itshape\color{astral}}

\geometry{
 a4paper,
 left=0.7 in,
 top=1 in,
 bottom = 1 in,
 right = 0.7in
 }

\usepackage{amsthm,thmtools,xcolor}

\declaretheoremstyle[
  headfont=\color{astral}\normalfont\bfseries,
]{colored}

\declaretheorem[
  style=colored,
  name=Theorem,
]{thm}

\declaretheorem[
  style=colored,
  name=Lemma,
]{lem}

\declaretheorem[
  style=colored,
  name=Corollary,
]{coro}

\declaretheorem[
  style=colored,
  name=Remark,
]{rmk}

\newcommand{\R}{{\mathbb{R}}}

\newcommand{\N}{{\mathbb{N}}}
\renewcommand{\arraystretch}{1.75}
\usepackage{sectsty}

\definecolor{astral}{RGB}{46,116,181}
\subsectionfont{\color{astral}}
\sectionfont{\color{astral}}

\usepackage{amssymb}
\usepackage{amsthm}




\usepackage{fancyhdr}
\pagestyle{fancy}
\lhead{\color{astral} Fixed points of Personalized PageRank centrality: From irreducible to reducible networks}
\cfoot{\\\thepage}

\usepackage[figuresright]{rotating}

\usepackage{hyperref}
\hypersetup{
    colorlinks = false,
    linkbordercolor = {green}
}



\bibliographystyle{elsarticle-num}

\begin{document}

\begin{frontmatter}




\title{\color{astral}{\huge \textbf{Fixed points of Personalized PageRank centrality: \\ From irreducible to reducible networks}}}


\author{\color{black}{David Aleja$^{1, 2, 3}$, Julio Flores$^{1,2}$, Eva Primo$^{1, 2}$, Daniel Rodríguez$^{1, 2}$ and Miguel Romance$^{1, 2, 3}$}}

\address{$^{1)}$Departamento de Matemática Aplicada, Ciencia e Ingeniería de los Materiales y Tecnología Electrónica,
Universidad Rey Juan Carlos, 28933 Móstoles (Madrid), Spain\\
$^{2)}$Laboratory of Mathematical Computation on Complex Networks and their Applications, Universidad Rey Juan Carlos,
28933 Móstoles (Madrid), Spain\\
$^{3)}$Data, Complex networks and Cybersecurity Research Institute, Universidad Rey Juan Carlos, 28028 (Madrid), Spain}

\begin{abstract}
 In this paper we analyze the PageRank of a complex network as a function of its personalization vector. By using this approach, a complete characterization of  the existence and uniqueness of fixed points of PageRank of a graph is given in terms of the number and nature of its strongly connected components. The method presented  includes the use of a {\it feedback-PageRank} in order to compute exactly  the fixed points following the classic {\it Power's Method} in terms of the (left-hand) Perron vector of each strongly connected components.\\

\hspace*{-5mm}{\bf\color{astral}{Keywords:}} PageRank, fixed-point iterations, irreducible matrix, Perron-Frobenius theory, strongly connected network, Power Iteration Method\\
 
\hspace*{-5mm}{\bf\color{astral}{AMS subject classifications:}} 15A60, 47J26
 \end{abstract}




\end{frontmatter}


\section{Introduction}
\label{Section:introduction}

PageRank is one of the most relevant and successful examples of a centrality measure in complex networks that jumps from theory to real-life applications. Originally developed by Larry Page and Sergey Brin in the late 1990s to rank web pages in Google’s search engine \cite{BrinPage,PageBrin}, PageRank has since demonstrated an exceptional degree of ubiquity, being applied to a wide range of real-world problems beyond its original domain. Actual applications of PageRanks include problems in Social Sciences (for example measuring the influence of users on on-line platforms, where social interactions form complex interaction patterns, \cite{Riquelme}), the protein-protein interaction networks or metabolic pathways in Biology \cite{Ivan},  systemic risk in financial networks, identifying key institutions whose failure could cascade through the system in Economics \cite{Yun},  brain connectivity in Neuroscience \cite{Williamson}, the scientific impact of academic papers \cite{Senanayake} or public transportation modeling and analysis \cite{Criado}, among many others.

Roughly speaking, PageRank in a complex network $\mathcal{G}=(V,E)$ is the steady-state distribution of a random walk navigating on $\mathcal{G}$ such that at each step, the walker randomly chooses one of the available outgoing links from the current node, 
but with a given probability, he may {\it ``teleport"} to any other node in the network instead of following a link (see, for example \cite{Gleich,LM2006,PageBrin}). As we will see in more details in Section~\ref{Section:notation}, the basic ingredients for computing the PageRank in a complex network $\mathcal{G}=(V,E)$ are: 
\begin{itemize}
 \item the probability that the random walker follows a link rather than {\it``teleporting"} to a random node (the so called {\it damping factor}),
 \item a probability distribution vector on the nodes of the network that chooses a random node once the walker decides teleporting instead of using the network link's structure ({\it personalization vector}).  
\end{itemize}
There is a long list of excellent references that studied analytically the behavior of PageRank, as well as the influence and  sensibility of the damping factor and personalization vectors in PageRank, such as \cite{Bianchini, Boldi, Gleich, LM2004,LM2006}.

If we take a complex network $\mathcal{G}=(V,E)$ of $N\in\N$ nodes, and we fix a damping factor $\lambda\in (0,1)$, we observe that the basic features of PageRank allows for it to be mathematically understood as a function of the personalization vector \cite{Contreras}, that is, if we denote the $(N-1)$-simplex 
\[
\Delta_N^+=\left\{\bold{x}=(x_1,\cdots,x_N)^T\in\R^{N\times1}\,:\,\enspace x_1+\cdots+x_N=1,\, x_i> 0,\, {\text{for all }}1\le i\le N \,\right\},
\] 
then the PageRank is a function $PR_\lambda$ from $\Delta_N^+$ into itself. Hence, if we take a vector $\bold{v}\in\Delta_N^+$, then $PR_\lambda(\bold{v})$ is the PageRank of network $\mathcal{G}$ with damping factor $\lambda$ and personalization vector $\bold{v}$  (further details about the nature and behavior of such function will be given in Section~\ref{Section:notation}). 

This functional model approach to  PageRank suggests to employ the classical language of Functional Analysis and  consider functional properties of the operator $PR_\lambda$, such as the existence of fixed points, spectral properties and so on, as they may provide valuable information about the centrality of nodes of the underlying complex network $\mathcal{G}$. In this sense, the main goal of this paper is to analyze the existence and uniqueness, as well as the  computation, of the fixed points of $PR_\lambda$ in terms of the structure of $\mathcal{G}$. More precisely, we will prove that if  $\mathcal{G}$ is strongly connected, then there is always a unique fixed point of $PR_\lambda$ (with an analytical result that computes it precisely), while if $\mathcal{G}$ is not strongly connected, the existence of fixed points is proven and the uniqueness is related to the structure of the strongly connected  components of the graph (sinks). Note that a fixed point of $PR_\lambda$ can be understood as a vector $\bold{v}$, such that the PageRank of $\mathcal{G}$, with damping factor $\lambda$ and personalization vector $\bold{v}$, is $\bold{v}$ itself; in other words the teleportation vector considered gives exactly the PageRank. As we will see, the method used  includes the use of a {\it feedback-PageRank} in order to compute the fixed point - following the classic {\it Power's Method} for computing the fixed point of an operator. At this point, it is worth mentioning that some well-known results in Functional Analysis concerning fixed points (for instance, the Banach fixed point Theorem) no longer apply in this context, since the operator $PR_\lambda$ from $\Delta_N^+$ into itself is a non-contractive mapping as its spectral radius is equal to 1.

In addition to the intrinsic mathematical interest of analyzing the fixed points of $PR_\lambda$ as an operator,  the proposed methods and results are potentially useful in some applications. If we consider a temporal network (i.e. a complex network whose topological structure changes over time) and a random walker such that the time scale of the structure evolution is much slower that the time scale of the random walker itself, then it seems natural that the personalization vector could be modified along time. In this regard the equally natural  question arises: how can we select a personalization vector that is related to the structural properties of the network itself? In some applications, it may be natural to teleport to more central nodes with higher probability than to the less popular ones. Let us, for example, consider a web graph, where the random walkers (web-surfers) use the PageRank of the nodes in a previous day as the personalization vector to select the destination of a teleportation. By using this model, the steady-state of this stochastic process should be the  {\it feedback-PageRank}, if it exists, and therefore it should be the fixed point of the operator $PR_\lambda$, as  pointed out before.

The remainder of the paper is organized as follows. In Section \ref{Section:notation} we fix the notation and some preliminary definitions related to the PageRank algorithm. In Section \ref{Section:irreducible} we present some results concerning the row-stochasticity of the matrix $X(\lambda)$ and we state the main result of this paper on the convergence of the iterative process of the PageRank vector $\pi\in\R^{N\times 1}$ under the action of the matrix $X(\lambda)$, specifically for the case when the matrix $P_A$ is row-stochastic and irreducible. In Section \ref{Section:reducible} we extend this convergence to the case that $P_A$ is row-stochastic but no longer irreducible. To this aim, we investigate three basic cases: the diagonal case, the zero-block column case and the general reducible case. It is worth mentioning that not only we prove the convergence of this iterative process, but also explicitly obtain the form of the limit vector, which is expressed in terms of the left-hand Perron vector of the irreducible parts of the matrix $P_A$. Finally, in Section \ref{Section:dangling-cluster}, we apply the results obtained in the previous sections to show that the iterative process of the PageRank vector under the action of the matrix $X(\lambda)$ converges for networks that are not strongly connected. In fact, the limit vector of this iterative process is related to the left-hand Perron vector of the matrices corresponding to the dangling components of the network.
\section{Notation and preliminary definitions}
\label{Section:notation}

We recall some standard notation that will be used throughout the paper. Vectors of $\mathbb{R}^{N\times 1}$ will be denoted by column matrices and we will use the superscript $T$ to indicate matrix transposition. The vector of $\mathbb{R}^{N\times 1}$ with all its components equal to 1 will be denoted by $\mathbf{e}$, that is, $\mathbf{e} = (1,\dots,1)^T$ . A matrix $A=(a_{ij})_{i,j}$ will be called \em non-negative \em (respectively \em positive \em) if all its entries satisfy $a_{ij}\ge 0$ (respectively $a_{ij}>0$). The same applies to vectors when being column matrices.\\

Let $\mathcal{G}=(V,E)$ be a directed graph where $V=\{1,2,\dots,N\}$ is the set of nodes and $N\in\mathbb{N}$. The pair $(i,j)$ belongs to the set $E$ if and only if there exists a link connecting node $i$ to node $j$. The \textit{adjacency matrix} of $\mathcal{G}$ is an $N\times N$-matrix
\begin{equation*}
A=(a_{ij})_{i,j}\quad \text{ where }\quad
a_{ij}=
\begin{cases}
    1,\text{ if $(i,j)$ is a link of $\mathcal{G}$,}\\
    0,\text{ otherwise}.
\end{cases}
\end{equation*}

A link $(i,j)$ is said to be an \textit{outlink} for node $i$ and an \textit{inlink} for node $j$. We denote $k_{out}(i)$ the \textit{outdegree} of node $i$, i.e, the number of outlinks of a node $i$. Notice that $k_{out} (i) = \sum_{k}a_{ik}$. A node will be referred to as a \em source \em if it has no inlinks. Also, the graph $\mathcal{G}$ may have \textit{dangling nodes}, which are nodes $i\in V$ with zero outdegree. Dangling nodes are characterized by a vector $\mathbf{d}=(d_1,\cdots,d_N)^T\in\mathbb{R}^{N\times 1}$ with components $d_i$ defined by
$$d_{i}=
\begin{cases}
    1,\text{ if $i$ is a dangling node of $\mathcal{G}$,}\\
    0,\text{ otherwise}.
\end{cases}$$
Without loss of generality,  the directed graph $\mathcal{G}$ can be weighted with $a_{ij}\geq0$. We remark that the results in the paper remain valid if $\mathcal{G}$ is weighted.\\

At this point, let $P_A = (p_{ij})_{i,j}\in\mathbb{R}^{N\times N}$ be the row-stochastic matrix associated to directed graph $\mathcal{G}$ defined in the following way:
\begin{enumerate}
    \item If $i$ is a dangling node, then $p_{ij}=0$ for all $j=1,\dots,N$.
    \item Otherwise, $p_{ij}=a_{ij}/k_{out}(i)=a_{ij}/\sum_{k}a_{ik}$.
\end{enumerate}

Roughly speaking, each coefficient $p_{ij}$ is the probability of moving from the node $i$ to the node $j$. The matrix $P_A$ will be referred to as the \em row-normalization \em of $A$.  Note that if we take
\begin{equation}\label{eq:diagonal}
D=\left(
\begin{array}{cccc}
k_{out}(1)& 0 & \cdots & 0\\
0 & k_{out}(2) & \cdots & 0\\
\vdots & \cdots & \ddots & \vdots\\
0 & \cdots & \cdots & k_{out}(N)
\end{array}
\right),
\end{equation}
then the matrix $P_A$ is given by $P_A=D^{-1}A$.\\

One of the most remarkable features of the PageRank algorithm is that it contemplates the possibility to travel from one node to any other node in the graph. This teleportation probability is given by the \textit{personalization vector} $\textbf{v}\in\R^{N\times1}$ which, in fact, is a probability distribution vector. In addition, in the presence of dangling nodes, a probability vector $\bold{u}\in\mathbb{R}^{N\times1}$ should be considered to provide an extra probability of jumping from these nodes. Finally, let $\lambda\in(0,1)$ be a decision parameter, or \em damping \em factor, which determines the likelihood to move through the graph by using a path in  $\mathcal{G}$ or  by randomly jumping instead, according to  the personalization vector. For more information, we refer the reader to \cite{AFPR}.\\

Formally, let $G=G(\lambda, \bold{u},\bold{v})$ be the \em Google matrix\em, with $\lambda\in(0,1)$ defined as
\begin{equation}\label{eq:google-matrix1}
G=\lambda (P_A+\bold{du}^T)+(1-\lambda)\bold{ev}^T\in\mathbb{R}^{N\times N}.
\end{equation} 

A straightforward observation is that the Google matrix $G$ is row-stochastic, i.e., $G\bold{e} = \bold{e}$ where $\bold{e}=(1,\dots,1)^T$. Recall that the personalization vector $\bold{v}\in\mathbb{R}^{N\times 1}$ is such that all its entries are positive (i.e. $\bold{v}>0$) and satisfies $\bold{v}^T\bold{e} = 1$. In a similar way, the vector $\bold{u}\in\mathbb{R}^{N\times1}$ is such that $\bold{u}>0$ and $\bold{u}^T\bold{e} = 1$.\\

Notice that the Google matrix $G$ is positive and thus, by the Perron-Frobenius Theorem (see \cite[Section 8.3]{Meyer}) there is a unique vector  $\pi= \pi(\lambda;\bold{u};\bold{v})>0$ satisfying $\pi^T\textbf{e}=1$  and $G^T\pi=\pi$, or equivalently $\pi^TG=\pi^T$. This vector is the \textit{PageRank vector} $\pi$ of $G$. \\

The existence of dangling nodes will not affect the results presented for the matrix $G$ (see the end of paper \cite{AFPR} for more details). In this sense, for the sake of simplicity, the graph $\mathcal{G}$ will be assumed to have no dangling nodes.\\

From the definition of the Google matrix in equation (\ref{eq:google-matrix1}) and the properties of the PageRank vector $\pi\in\R^{N\times1}$, a straightforward calculation shows that $\pi$ satisfies the equation
$$\pi^T=\lambda \pi ^T P_A+(1-\lambda)\bold{v}^T.$$

Notice that $\pi^T(I_N-\lambda P_A)=(1-\lambda)\mathbf{v}^T$, where $I_N\in\R^{N\times N}$ is the identity square matrix of order $N$. Therefore, the PageRank vector $\pi$ can be explicitly calculated as
\begin{equation}\label{eq:pi-vector}
\pi^T=(1-\lambda)\mathbf{v}^T(I_N-\lambda P_A)^{-1}=\mathbf{v}^TX(\lambda),
\end{equation}
where $X(\lambda)=(1-\lambda)R(\lambda)$ and $R(\lambda)=(I_N-\lambda P_A)^{-1}$ is the resolvent of $P_A$ defined for all suitable damping factor $\lambda\in (0,1)$ (see \cite{Boldi} for more information on this topic).\\

At this point we might ask, what happens if we compute a new PageRank vector using $\pi\in\R^{N\times1}$ as the personalization vector? Is there a limit for the iteration of the PageRank vector $\pi$ under the action of the matrix $X(\lambda)$? Does the limit depend of the initial personalization vector $\bold{v}\in\R^{N\times1}$? In the next section, we present a result concerning the convergence of this iterative process of the PageRank vector $\pi$ under the action of the matrix $X(\lambda)$, specifically for the case when the row-normalization matrix $P_A$ is irreducible. Additionally, we will show that the limit vector of this iteration is, in fact, the left-hand Perron vector associated to irreducible row-normalization matrix $P_A$. \\

In what follows, and unless otherwise specified,  the notation $\bold{0}$ will denote for simplicity the  zero matrix without further reference to the dimension (which may not necessarily be square).

\section{Iteration for a non-negative irreducible probability matrix $P_A$}
\label{Section:irreducible}

In this section, we present a result on the convergence of the iterative process of the PageRank vector $\pi\in\R^{N\times1}$ under the action of the matrix $X(\lambda)=(1-\lambda)(I_N-\lambda P_A)^{-1}$, whenever $P_A$ is the row-normalization of a non-negative and irreducible square matrix $A$ of order $N$. To this aim, let us recall some useful definitions and results.\\

A square matrix $A$ of order $N\geq2$ is said to be \em reducible \em if there exists a permutation matrix $R$ such
that 
\begin{equation*}
R^TAR=\left(\begin{array}{c|c} 
    X & Y \\ 
    \hline
    \bold{0} & Z \end{array}\right),
\end{equation*}
where $X$ and $Z$ are both square matrices (see \cite[Section 4.4]{Meyer}). Otherwise, $A$ is said to be \em irreducible\em . A non-negative square matrix $A$ of order $N$ is \em primitive \em if it is irreducible and has only one non-zero eigenvalue of maximum modulus (\cite[Section 8.5]{Horn-Johnson}). It is worth mentioning that primitivity plays a key role for the convergence to a unique vector when applying the Power Iteration Method to the matrix $X(\lambda)$ (see \cite[Problem 8.5.P16]{Horn-Johnson}).\\ 

A directed graph $\mathcal{G}=(V,E)$ of $N$ nodes is \em strongly connected \em if for any ordered pair $(i,j)\in V\times V$, $1\leq i,j\leq N$, there exists a directed path of edges in $E$ leading from node $i$ to node $j$ (see \cite[Section 1]{Varga}). In  fact, the following result establishes an equivalence between the irreducibility  of a matrix in terms of its strong connectivity. More precisely,

\begin{lem}(\cite[Theorem 1.17]{Varga}){\label{lemma:strongly-connected-irreducible}}
An $N\times N$  matrix $A$ is irreducible if and only if its directed graph $\mathcal{G}$ associated to $A$ is strongly connected.\\
\end{lem}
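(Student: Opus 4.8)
The plan is to prove the logically equivalent statement obtained by contraposition on both sides, namely that $A$ is \emph{reducible} if and only if its associated digraph $\mathcal{G}$ is \emph{not} strongly connected. The key translation underlying both implications is that a permutation matrix $R$ merely relabels the nodes of $\mathcal{G}$, and that the vanishing of a block of $R^TAR$ encodes the absence of edges between two groups of nodes. Throughout I keep in mind the convention that $a_{ij}\neq 0$ signals an edge from node $i$ to node $j$.

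First I would prove that reducibility implies the failure of strong connectivity. Assume there is a permutation matrix $R$ with
\[
R^TAR=\left(\begin{array}{c|c} X & Y \\ \hline \mathbf{0} & Z\end{array}\right),
\]
with $X$ and $Z$ square of orders $r$ and $N-r$, where $1\le r\le N-1$. The permutation $R$ splits the node set into two nonempty blocks $S_1$ and $S_2$ (of sizes $r$ and $N-r$), and the zero block in position $(2,1)$ says precisely that $a_{ij}=0$ whenever $i\in S_2$ and $j\in S_1$. Hence no edge leaves $S_2$ towards $S_1$; consequently every directed path starting in $S_2$ remains in $S_2$, so no node of $S_2$ can reach any node of the nonempty set $S_1$, and $\mathcal{G}$ is not strongly connected.

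For the converse, I would argue from the existence of an unreachable pair. If $\mathcal{G}$ is not strongly connected, pick nodes $i,j$ such that no directed path leads from $i$ to $j$. Let $S$ be the set of all nodes reachable from $i$ (including $i$ itself); then $i\in S$ and $j\notin S$, so both $S$ and its complement $S^c$ are nonempty. The crucial observation is that $S$ admits no outgoing edges to $S^c$: an edge from a node of $S$ to some $k\in S^c$ would make $k$ reachable from $i$, contradicting $k\in S^c$. Relabelling the nodes so that $S^c$ precedes $S$ therefore produces a permutation matrix $R$ for which the $(2,1)$ block of $R^TAR$, collecting exactly the entries $a_{ij}$ with $i\in S$ and $j\in S^c$, vanishes, while the two diagonal blocks are square of positive order. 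This exhibits $A$ as reducible.

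The mathematical content is elementary, so the main obstacle is bookkeeping rather than ideas: one must be careful to match the orientation of edges ($i\to j$ if and only if $a_{ij}\neq 0$) with the correct off-diagonal block, to check that the chosen partition yields two genuinely square diagonal blocks (both $S$ and $S^c$ nonempty), and to treat the trivial reachability of $i$ from itself consistently. With these conventions fixed, both implications follow directly.
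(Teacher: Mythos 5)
Your proof is correct: both directions are sound, the edge-orientation bookkeeping ($a_{ij}\neq 0$ meaning an edge $i\to j$, hence the vanishing $(2,1)$ block encoding the absence of edges from the second node group to the first) is handled consistently, and the reachability-set construction in the converse is the standard argument. Note that the paper itself gives no proof of this lemma --- it is quoted directly from Varga's book (Theorem 1.17) --- and your argument is essentially the classical one found there, so there is nothing to reconcile; the only cosmetic caveat is the trivial case $N=1$ (or pairs $(i,i)$ in the definition of strong connectivity), which is excluded anyway since reducibility is only defined for $N\geq 2$.
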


Notice that  a non-negative matrix $A\neq0$ is irreducible if and only if for every $1\leq i,j\leq N$, there exists a positive integer $k\geq 1$ such that the $(i,j)$-entry of $A^k$ is positive (see \cite[Chapter 2, Theorem 2.1]{BemannPlemmons}).\\

In the context of the Perron-Frobenius theory, if $A\in\R^{N\times N}$ is a non-negative and irreducible matrix, there exists a unique vector $\bold{q}\in\R^{N\times1}$, called the \textit{left-hand Perron vector of the matrix $A$}, such that 
\begin{equation*}
    \bold{q}^TA=r\bold{q}^T\,\qquad\text{with}\qquad \, \bold{q}>0 \qquad\text{and}\qquad \bold{q}^T\bold{e}=1,
\end{equation*}
where $r=\rho(A)$ is the spectral radius of the matrix $A$ (see \cite[Section 8.3]{Meyer}). \\

For $p\geq1$, the $p$-norm of a vector $\bold{x}=(x_1,\dots,x_N)^T\in\R^{N\times 1}$ is defined as $\Vert \bold{x}\Vert_p=\left(\sum_{i=1}^N \vert x_i\vert^p\right)^{1/p}$. Observe that condition $\bold{x}^T\bold{e}=1$ for a non-negative vector $\bold{x}\in\R^{N\times1}$ can be expressed in terms of the $1$-norm as $\Vert \bold{x}\Vert_1=1$. Throughout the paper we mainly use the $p$-norm with $p=1$. \\

The following result asserts that the positivity, irreducibility and row-stochasticity properties of the matrix $P_A$ (the row-normalization of the adjacency matrix $A$) are preserved for the matrix $X(\lambda)=(1-\lambda)(I_N-\lambda P_A)^{-1}$. More precisely,\\

\begin{lem}{\label{lemma:resolvent-row-stochastic}}
Let $A$ be a non-negative irreducible square matrix of order $N$ and let $P_A$ be its row-normalization described in Section \ref{Section:notation}. For a fixed $\lambda\in (0,1)$, the matrix $X(\lambda)=(1-\lambda)(I_N-\lambda P_A)^{-1}$ is positive, irreducible and row-stochastic, i.e. $X(\lambda)\bold{e}=\bold{e}$.
\end{lem}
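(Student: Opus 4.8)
The plan is to treat the three assertions in the order row-stochasticity, positivity, irreducibility, since each feeds the next, and to use the Neumann expansion of the resolvent as the main engine. The natural anchor is row-stochasticity. First I would record that $P_A$ is row-stochastic, so $P_A\mathbf{e}=\mathbf{e}$ and consequently $\rho(P_A)=1$; hence $\rho(\lambda P_A)=\lambda<1$ for $\lambda\in(0,1)$, which guarantees that $I_N-\lambda P_A$ is invertible and that the Neumann series
\[
(I_N-\lambda P_A)^{-1}=\sum_{k=0}^{\infty}\lambda^k P_A^k
\]
converges. From $P_A\mathbf{e}=\mathbf{e}$ one immediately gets $(I_N-\lambda P_A)\mathbf{e}=(1-\lambda)\mathbf{e}$, so that $(I_N-\lambda P_A)^{-1}\mathbf{e}=\tfrac{1}{1-\lambda}\mathbf{e}$ and therefore $X(\lambda)\mathbf{e}=(1-\lambda)(I_N-\lambda P_A)^{-1}\mathbf{e}=\mathbf{e}$, which is exactly the claimed row-stochasticity.

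For positivity I would exploit the above Neumann expansion termwise. Since $P_A=D^{-1}A$ with $D$ diagonal and strictly positive on its diagonal (no dangling nodes), $P_A$ has the same zero/nonzero pattern as $A$; in particular $P_A$ and $A$ share the same associated directed graph, so by Lemma \ref{lemma:strongly-connected-irreducible} the irreducibility of $A$ transfers to $P_A$. Each summand $\lambda^k P_A^k$ is non-negative, the $k=0$ term $I_N$ already supplies positive diagonal entries, and by the characterization of irreducibility recalled above (see \cite[Chapter 2, Theorem 2.1]{BemannPlemmons}) for every pair $(i,j)$ there exists some $k\ge 1$ with $(P_A^k)_{ij}>0$. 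Hence every entry of $\sum_{k\ge 0}\lambda^k P_A^k$ is strictly positive, and multiplying by the scalar $(1-\lambda)>0$ yields $X(\lambda)>0$.

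Finally, irreducibility of $X(\lambda)$ is immediate once positivity is in hand: a strictly positive matrix has no zero entries and therefore cannot be permuted into the block-triangular form with a zero block that defines reducibility, so $X(\lambda)$ is irreducible. The only genuinely delicate point is the positivity step, where one must ensure that convergence of the Neumann series legitimizes the termwise non-negativity argument and that irreducibility is indeed inherited by $P_A$ from $A$; everything else reduces to a short direct computation. I expect no serious obstacle here, the argument being essentially a careful bookkeeping of the Perron--Frobenius and Neumann-series facts already assembled in this section.
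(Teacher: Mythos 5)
Your proof is correct, and its core engine is the same as the paper's: expand $X(\lambda)$ as the Neumann series $(1-\lambda)\sum_{k\ge 0}(\lambda P_A)^k$ and use the irreducibility of $P_A$ (equivalently, strong connectivity of its graph, Lemma~\ref{lemma:strongly-connected-irreducible}) to conclude that every entry of the series is strictly positive. There are, however, two genuine departures worth noting. First, for row-stochasticity the paper sums the series term by term using $P_A^i\mathbf{e}=\mathbf{e}$, whereas you observe directly that $(I_N-\lambda P_A)\mathbf{e}=(1-\lambda)\mathbf{e}$ and invert; this is a minor but cleaner computation. Second, and more substantively, for irreducibility the paper gives a separate argument: it writes $R(\lambda)=\lambda P_A+B$ with $B=I_N+\sum_{i\ge 2}(\lambda P_A)^i\ge 0$ and invokes the result of Schwarz (and \cite[Theorem 3, Section 3, Chapter XIII]{Gantmacher} for existence) that the sum of an irreducible matrix and a non-negative matrix of the same order is irreducible, citing \cite{SchwarzB}. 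You instead deduce irreducibility as an immediate corollary of positivity, since a strictly positive matrix cannot be symmetrically permuted to have a zero block. Your route is more elementary and self-contained: it renders the paper's appeal to Schwarz's theorem logically redundant (positivity, which the paper establishes first anyway, already implies irreducibility). You also make explicit a point the paper leaves implicit in this lemma, namely that $P_A$ inherits irreducibility from $A$ because $D^{-1}$ has strictly positive diagonal and so preserves the zero pattern. Both proofs are sound; yours is the tighter of the two.
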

\begin{proof}
Firstly, since $P_{A}\geq0$ is irreducible and $\rho(P_{A})=1$, the matrix 
$X(\lambda)
=(1-\lambda)(I- \lambda P_{A})^{-1}$ exists and is non-negative for all $\lambda\in(0,1)$ (see \cite[Theorem 3, Section 3, Chapter XIII]{Gantmacher}). In fact, the  matrix $X(\lambda)$ which is given by the Neumann series 
\begin{equation*}
\displaystyle X(\lambda)=(1-\lambda)(I_N-\lambda P_A)^{-1}=(1-\lambda)\sum_{i=0}^{\infty}\left(\lambda P_A\right)^i=(1-\lambda)\left[I_N+(\lambda P_A)+ (\lambda P_A)^2+\dots\right]  
\end{equation*} 
is clearly positive by Lemma \ref{lemma:strongly-connected-irreducible} applied to the irreducible matrix $P_A$. 
If we denote by 
$$B = I_N + (\lambda P_A)^2 + (\lambda P_A)^3+\dots = I_N+\sum_{i=2}^{\infty}\left(\lambda P_A\right)^i,$$
we get $R(\lambda)=(I_N-\lambda P_A)^{-1}=\lambda P_A+ B$. This implies that $R(\lambda)$ is irreducible since  $\lambda P_A$ is so and $B$ is  non-negative with the same order (see \cite{BemannPlemmons, SchwarzA, SchwarzB} for details). Therefore, for a fixed $\lambda\in(0,1)$, we conclude that the matrix $X(\lambda)=(1-\lambda)R(\lambda)$ is a positive and irreducible square matrix of order $N$.\\

Now, since $P_A$ is  row-stochastic we have $P_A\bold{e}=\bold{e}$, which implies that $(P_A)^i\bold{e}=\bold{e}$, for all $i\geq1$. Therefore, since $\lambda\in (0,1)$ we have
\begin{equation*}
    X(\lambda)\bold{e}=(1-\lambda)\sum_{i=0}^{\infty}\left(\lambda P_A\right)^i\bold{e}=(1-\lambda)\sum_{i=0}^{\infty}\lambda^i\,\left(P_A^i\bold{e}\right)=(1-\lambda)\sum_{i=0}^{\infty}\lambda^i\bold{e}=(1-\lambda)\frac{1}{1-\lambda}\bold{e}=\bold{e},
\end{equation*}
and the proof is completed.

\end{proof} 

\begin{rmk}{\label{rmk:resolvent-isometry}}
Notice that the previous result implies that $X(\lambda)$ is a linear $\Vert\cdot\Vert_1$-isometry in the positive cone, that is: if $\bold{x}\in \R^{N\times 1}$ with $\bold{x}\ge 0$ and $\Vert \bold{x}\Vert_1=1$, then $\Vert \bold{x}^TX(\lambda)\Vert_1=1$. This is visualized as the matrix $X(\lambda)$ mapping vectors from the unit circle lying in the positive cone into the unit circle, and  has a nice consequence when the Power Iteration Method is applied to the positive matrix $X(\lambda)$ (see \cite[Problem 8.5.P16 and Theorem 8.2.8]{Horn-Johnson}), in as far as the $\Vert\cdot\Vert_1$-normalization step in this method is no longer needed.\\
\end{rmk}

At this point, we are in position to state the main result of this paper. It shows that the iteration of the PageRank vector $\pi\in\R^{N\times1}$ under the action of the matrix $X(\lambda)$ converges to the left-hand Perron vector of the row-normalization matrix $P_A$ whenever $A$ is a non-negative and irreducible square matrix. Remarkably, the convergence to the left-hand Perron vector of $P_A$ occurs regardless of the personalization vector $\bold{v}\in\R^{N\times1}$ considered. More precisely,\\

\begin{thm}\label{thm:main-theorem}
Let $A$ be a non-negative irreducible square matrix of order $N$ and let $P_A$ be its row-normalization described in Section \ref{Section:notation}. For a fixed $\lambda\in (0,1)$, let consider $X(\lambda)=(1-\lambda)(I_N-\lambda P_A)^{-1}$. Then, for any personalization vector $\bold{v}\in\R^{N\times 1}$ with $\|\bold{v}\|_1=1$, the recursive sequence $\{\bold{x}_k\}_{k\geq0}\subset\R^{N\times1}$, with $\bold{x}_0=\bold{v}$, defined as 
\begin{equation*}
 \bold{x}_k^{T}=\bold{x}_{k-1}^{T}X(\lambda)=\bold{v}^TX(\lambda)^k\,,\quad k\geq 1,
\end{equation*}
converges to the left-hand Perron vector $\bold{c}\in\R^{N\times 1}$ of the matrix $P_A$.
\end{thm}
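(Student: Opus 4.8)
The plan is to read the iteration as a Power Iteration Method applied to the positive row-stochastic matrix $X(\lambda)$ and then to recognise its limit. By Lemma \ref{lemma:resolvent-row-stochastic} the matrix $X(\lambda)$ is positive, irreducible and row-stochastic; being positive it is in particular primitive, and row-stochasticity forces $\rho(X(\lambda))=1$ with right Perron vector $\bold{e}$. Consequently the eigenvalue $1$ is simple and strictly dominant, so by the convergence of the Power Iteration Method for primitive matrices (cf. Remark \ref{rmk:resolvent-isometry} and \cite[Problem 8.5.P16 and Theorem 8.2.8]{Horn-Johnson}) the sequence $\bold{x}_k^T=\bold{v}^TX(\lambda)^k$ must converge to a scalar multiple of the left-hand Perron vector of $X(\lambda)$. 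There remain two points to settle: to identify that left-hand Perron vector with $\bold{c}$, and to check that the scalar multiple equals $1$.

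For the first point I would show that the left-hand Perron vector $\bold{c}$ of $P_A$ is also a left eigenvector of $X(\lambda)$ for the eigenvalue $1$. Since $P_A$ is row-stochastic and irreducible, $\rho(P_A)=1$ and hence $\bold{c}^TP_A=\bold{c}^T$, which gives $\bold{c}^TP_A^i=\bold{c}^T$ for every $i\geq 0$. Feeding this into the Neumann series for $X(\lambda)$ obtained in the proof of Lemma \ref{lemma:resolvent-row-stochastic} yields
\begin{equation*}
\bold{c}^TX(\lambda)=(1-\lambda)\sum_{i=0}^{\infty}\lambda^i\,\bold{c}^TP_A^i=(1-\lambda)\sum_{i=0}^{\infty}\lambda^i\,\bold{c}^T=\bold{c}^T.
\end{equation*}
As $\bold{c}>0$ and $\|\bold{c}\|_1=1$, simplicity of the Perron eigenvalue of $X(\lambda)$ forces $\bold{c}$ to be exactly its (normalised) left-hand Perron vector.

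To conclude, I would combine both ingredients. Having $\bold{e}$ as right and $\bold{c}$ as left Perron vectors of the primitive matrix $X(\lambda)$, normalised so that $\bold{c}^T\bold{e}=\|\bold{c}\|_1=1$, the powers converge to the rank-one spectral projection, $X(\lambda)^k\to\bold{e}\bold{c}^T$. Therefore $\bold{x}_k^T=\bold{v}^TX(\lambda)^k\to(\bold{v}^T\bold{e})\,\bold{c}^T$, and since $\bold{v}$ is a probability vector we have $\bold{v}^T\bold{e}=\|\bold{v}\|_1=1$, so $\bold{x}_k\to\bold{c}$, as claimed. I expect the conceptual crux to be the identification step above, since it is what ties the Perron vector of the auxiliary matrix $X(\lambda)$ back to that of $P_A$; the convergence itself is routine once primitivity is in hand, and the usual worry of the Power Iteration---that the starting vector might have no component along the Perron direction---does not arise here, because the relevant component is $(\bold{v}^T\bold{e})\bold{c}^T$ with $\bold{v}^T\bold{e}=1\neq0$.
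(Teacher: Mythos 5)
Your proposal is correct and follows essentially the same route as the paper's proof: both rest on Lemma \ref{lemma:resolvent-row-stochastic}, the Perron-Frobenius theorem applied to $P_A$ and to $X(\lambda)$, and the convergence of powers of the positive matrix $X(\lambda)$ (the Power Iteration Method of \cite[Theorem 8.2.8]{Horn-Johnson}), with the identification of the left-hand Perron vector of $X(\lambda)$ with $\bold{c}$ as the common crux. The only difference is cosmetic: you verify $\bold{c}^TX(\lambda)=\bold{c}^T$ directly from the Neumann series, whereas the paper argues that $X(\lambda)$ and $P_A$ share eigenvectors with eigenvalues related by $\beta\mapsto(1-\lambda)(1-\lambda\beta)^{-1}$; and your explicit limit $X(\lambda)^k\to\bold{e}\bold{c}^T$, giving $\bold{x}_k^T\to(\bold{v}^T\bold{e})\bold{c}^T=\bold{c}^T$, makes the normalization of the limit slightly more transparent than the paper's appeal to Remark \ref{rmk:resolvent-isometry}.
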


\begin{proof}
From the assumption trivially follows that  the row-normalization matrix $P_A$ is also irreducible. On the other hand, from the definition of the entries $p_{ij}$ in the non-negative matrix $P_A$, the inequality $\min_i\sum_{j}p_{ij}\leq\rho(P_A)\leq\max_i\sum_{j}p_{ij}$ implies that $\rho(P_A)=1$ (see \cite[Theorem 8.1.22]{Horn-Johnson}). Now, by the Perron-Frobenius Theorem (see \cite[Section 8.3]{Meyer}) applied to the irreducible and non-negative matrix $P_A$, there exists a unique positive vector $\bold{c}\in\mathbb{R}^{N\times 1}$ with $\Vert\bold{c}\Vert_1=1$ (the left-hand Perron vector) associated to the eigenvalue $\rho(P_A)=1$, such that $\bold{c}^TP_A=\bold{c}^T$.\\

Now, by Lemma \ref{lemma:resolvent-row-stochastic} and for a fixed $\lambda\in(0,1)$, the matrix $X(\lambda) = (1-\lambda)(I_N-\lambda P_A)^{-1}$ is positive, irreducible and row-stochastic. Moreover, by the Perron-Frobenius Theorem applied to the matrix $X(\lambda)$, there exists a unique positive vector $\bold{u}\in\mathbb{R}^{N\times 1}$ with $\Vert\bold{u}\Vert_1=1$ such that $\bold{u}^TX(\lambda)=r\bold{u}^T$ where $r$ is the spectral radius of $X(\lambda)$. It is worth mentioning that $r=1$ also follows from the observation that the eigenvectors of $X(\lambda)$ are the same as the eigenvectors of $P_A$, and the corresponding eigenvalues are $\{(1-\lambda)(1-\lambda\beta_i)^{-1}\}$, where $\{\beta_i\}$ are the eigenvalues of $P_A$. A straightforward computation shows that $\vert( 1-\lambda)(1-\lambda\beta_i)^{-1}\vert\leq 1$ whenever $\vert \beta_i\vert\leq1$ and $( 1-\lambda)(1-\lambda\beta_i)^{-1}=1$ if and only if $\beta_i=1$. In fact, the uniqueness of the left-hand Perron vector for the positive matrix $X(\lambda)$ implies that $\bold{u}=\bold{c}$.\\

Finally, for a given vector $\bold{x}_0\in\mathbb{R}^{N\times 1}$, let us consider a recursive sequence of vectors $\{\bold{x}_k\}_{k\geq0}$ in $\R^{N\times1}$ defined by
\begin{equation}\label{eq:sequence -xk-main-theorem}
 \bold{x}_k^{T}=\bold{x}_{k-1}^{T}X(\lambda),\quad (k\geq 1)\,,\,\qquad\text{with}\qquad\Vert \bold{x_0}\Vert_1=1.
\end{equation}
A straightforward computation shows that $\bold{x}_k^{T}=\bold{v}^TX(\lambda)^k$, $k\geq 1$, with $\bold{x}_0=\bold{v}>0$. Since $X(\lambda)$ is a positive matrix, by the Power Iteration Method (see \cite[Problem 8.5.P16 and Theorem 8.2.8]{Horn-Johnson}) and Remark \ref{rmk:resolvent-isometry} applied to the matrix $X(\lambda)$ we conclude that the sequence $\{\bold{x}_k\}_{k\geq0}$ in equation (\ref{eq:sequence -xk-main-theorem}) converges to a non-zero vector $\bold{x}\in\R^{N\times 1}$ with $\Vert\bold{x}\Vert_1=1$ such that
\begin{equation*}
    \bold{x}^T=\lim_{k\to\infty}\bold{x}_k^T
    =\lim_{k\to\infty}\bold{v}^TX^{k}(\lambda)
    =\bold{c}^T,
\end{equation*}
where $\bold{c}\in\R^{N\times 1}$ is the left-hand Perron vector of the matrix $P_A$ associated to the eigenvalue $\rho(P_A)=1$.
\end{proof}

In Figure \ref{fig:irreducible-case} we show a numerical simulation of Theorem \ref{thm:main-theorem}. Let us consider a directed graph $\mathcal{G}=(V,E)$ where $V=\{1,2,\dots,10\}$ is the set of nodes and let $A=(a_{ij})_{i,j}$ be its adjacency square matrix of order $10$. A straightforward observation is that the directed graph $\mathcal{G}$ in Figure \ref{fig:irreducible-case}(a) is strongly connected and therefore, by Lemma \ref{lemma:strongly-connected-irreducible}, the row-normalization matrix $P_A$ of $A$ is a non-negative, irreducible and row-stochastic square matrix of order $10$.\\

Now, for a fixed $\lambda\in (0,1)$, we compute the recursive sequence of vectors $\{\bold{x}_k\}_{k\geq0}$ in $\R^{10\times1}$ defined as
\begin{equation}\label{eq:different-personalization}
 \bold{x}_k^{T}=\bold{x}_{k-1}^{T}X(\lambda)=\bold{v}^TX(\lambda)^k\,,\quad (k\geq 1),
\end{equation}
using three different personalization vectors: 
\begin{itemize}
    \item A balanced vector $\bold{v}\in \R^{10\times1}$ with $\Vert \bold{v}\Vert_1=1$, where its $1$-norm is the uniformly distributed among the $10$ nodes. This vector corresponds to the color blue in Figure \ref{fig:irreducible-case}(b).
    \item A first unbalanced vector $\bold{v}_{\downarrow}\in \R^{10\times1}$ with $\Vert \bold{v}_{\downarrow}\Vert_1=1$, where the 1\% of its $1$-norm is assigned to node $1$ and the remaining 99\% is uniformly distributed among nodes $2$ through $10$. This vector is represented in orange in Figure \ref{fig:irreducible-case}(b).
    \item A second unbalanced vector $\bold{v}_{\uparrow}\in \R^{10\times1}$ with $\Vert \bold{v}_{\uparrow}\Vert_1=1$, where the 91\% of its $1$-norm is assigned to node $1$ and the remaining 9\% is uniformly distributed among nodes $2$ through $10$. This vector is represented in green in Figure \ref{fig:irreducible-case}(b).
\end{itemize}
Since node 1 takes a different value from the remaining nodes in the unbalanced cases, we highlight this node by coloring it purple, as it can be observed in Figure \ref{fig:irreducible-case}(a).\\

For the directed graph in Figure \ref{fig:irreducible-case}(a), the left-hand Perron vector $\bold{c}\in\R^{10\times1}$ corresponding to $P_A$ can be explicitly calculated using symbolic computation software. Specifically, $\bold{c}=\frac{1}{101}(13,14,14,14,20,12,6,4,2,2)^T$ with $\bold{c}>0$ and $\Vert \bold{c}\Vert_1=1$. With this left-hand Perron vector in hand, Figure \ref{fig:irreducible-case}(b) illustrates how the $1$-norm distance between the vector $\bold{c}\in\R^{10\times1}$ and the (first seven) vectors of sequence $\{\bold{x}_k\}_{k\geq1}$ in $\R^{10\times1}$ defined by equation (\ref{eq:different-personalization}) tends to zero for the different personalization vectors $\bold{v}, \bold{v}_{\downarrow}$, and $\bold{v}_{\uparrow}$ described above.\\
 
\begin{figure}[h]
        \centering
        \includegraphics[width=1\linewidth]{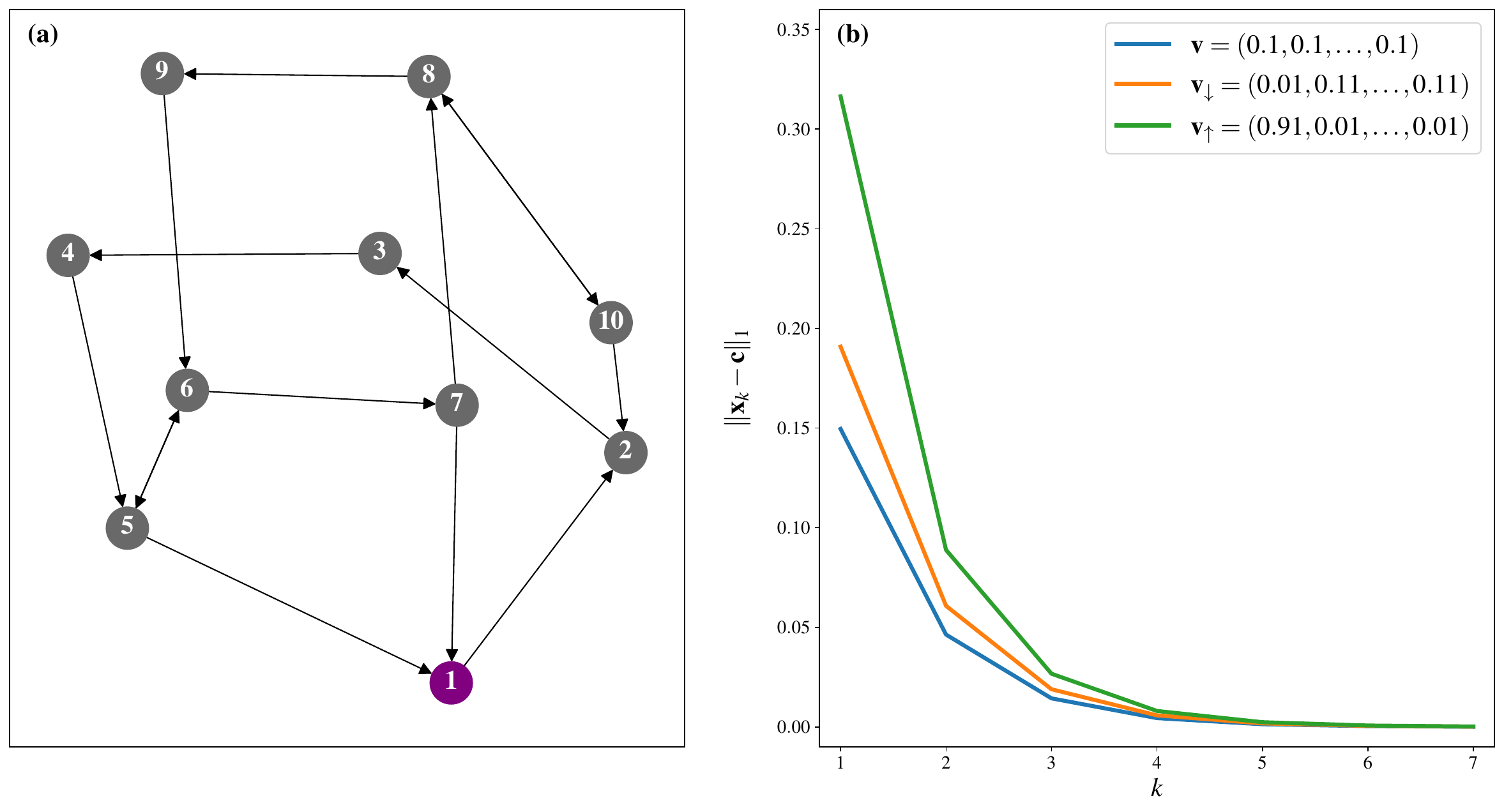}
        \caption{Numerical simulation for the irreducible case of the matrix $P_A$ in Theorem \ref{thm:main-theorem}.}
        \label{fig:irreducible-case}
    \end{figure}
In the next section, we will investigate the convergence of the PageRank vector $\pi$ under the iteration of the matrix $X(\lambda) = (1-\lambda)(I_N-\lambda P_A)^{-1}$ when the row-normalization matrix $P_A$ is no longer irreducible. To this aim, we will look at some configurations for the matrix $P_A$.\\

Note that it is straightforward to check that Theorem~\ref{thm:main-theorem} solves the existence and uniqueness of fixed points of the PageRank as a function $PR_\lambda: \Delta_N^+\longrightarrow \Delta_N^+$ for every damping factor $\lambda\in (0,1)$ if the network considered is strongly connected. Therefore, it can be rewritten as follows,\\

\begin{coro}\label{cor:FixedPointStronglyConnected}
If $\mathcal{G}=(V,E)$ be a strongly connected directed graph, then for every damping factor $\lambda\in (0,1)$, there is a unique fixed point of the PageRank as a function $PR_\lambda: \Delta_N^+\longrightarrow \Delta_N^+$. Furthermore, such fixed point is the left-hand Perron vector  of the matrix $P_A$.
\end{coro}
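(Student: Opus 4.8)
The plan is to recognize Corollary \ref{cor:FixedPointStronglyConnected} as a direct consequence of Theorem \ref{thm:main-theorem}, once the fixed-point condition for $PR_\lambda$ is translated into the eigenvalue language of the matrix $X(\lambda)$. First I would note that since $\mathcal{G}$ is strongly connected, Lemma \ref{lemma:strongly-connected-irreducible} guarantees that the row-normalization $P_A$ is irreducible, so the whole machinery of Section \ref{Section:irreducible} applies. Next I would make the defining identity explicit: by (\ref{eq:pi-vector}) the PageRank associated with a personalization vector $\bold{v}\in\Delta_N^+$ satisfies $PR_\lambda(\bold{v})^T=\bold{v}^TX(\lambda)$. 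Consequently, a vector $\bold{v}\in\Delta_N^+$ is a fixed point of $PR_\lambda$ if and only if $\bold{v}^TX(\lambda)=\bold{v}^T$, i.e. if and only if $\bold{v}$ is a positive, $\Vert\cdot\Vert_1$-normalized left-eigenvector of $X(\lambda)$ associated with the eigenvalue $1$.

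For existence, I would exhibit the left-hand Perron vector $\bold{c}$ of $P_A$ as such a fixed point. Since $\rho(P_A)=1$ we have $\bold{c}^TP_A=\bold{c}^T$, hence $\bold{c}^TP_A^i=\bold{c}^T$ for all $i\geq0$, and then the Neumann series for $X(\lambda)$ (as in the proof of Lemma \ref{lemma:resolvent-row-stochastic}) yields
\[
\bold{c}^TX(\lambda)=(1-\lambda)\sum_{i=0}^{\infty}\lambda^i\,\bold{c}^TP_A^i=(1-\lambda)\sum_{i=0}^{\infty}\lambda^i\,\bold{c}^T=\bold{c}^T.
\]
Because $\bold{c}>0$ and $\Vert\bold{c}\Vert_1=1$, the vector $\bold{c}$ lies in $\Delta_N^+$, so it is a genuine fixed point of $PR_\lambda$ and it is precisely the one announced in the statement.

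Uniqueness is where Theorem \ref{thm:main-theorem} does the real work. Given any fixed point $\bold{w}\in\Delta_N^+$, the identity $\bold{w}^TX(\lambda)=\bold{w}^T$ iterates to $\bold{w}^TX(\lambda)^k=\bold{w}^T$ for every $k\geq0$, so the orbit $\{\bold{w}^TX(\lambda)^k\}_{k\geq0}$ is the constant sequence $\bold{w}^T$. On the other hand, $\bold{w}$ is an admissible starting vector (it is positive with $\Vert\bold{w}\Vert_1=1$), and Theorem \ref{thm:main-theorem} asserts that this orbit converges to $\bold{c}^T$ regardless of the starting personalization vector. A constant sequence equals its limit, whence $\bold{w}^T=\bold{c}^T$, and the fixed point is unique. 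Equivalently, one could invoke Perron--Frobenius directly: $X(\lambda)$ is positive by Lemma \ref{lemma:resolvent-row-stochastic}, so its Perron eigenvalue $1$ is simple and its left-eigenspace is one-dimensional, forcing a unique normalized positive left-eigenvector.

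I do not expect a genuine obstacle here, since the analytic content is entirely absorbed by Theorem \ref{thm:main-theorem}; the only points requiring care are the clean reformulation of ``fixed point of $PR_\lambda$'' as ``unit-norm positive left-eigenvector of $X(\lambda)$ for the eigenvalue $1$'', and the verification that the candidate $\bold{c}$ actually lies in the \emph{open} simplex $\Delta_N^+$ (strict positivity), which the Perron--Frobenius theorem supplies for the irreducible matrix $P_A$.
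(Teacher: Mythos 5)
Your proposal is correct and takes essentially the same route as the paper, which states this corollary as a direct reformulation of Theorem~\ref{thm:main-theorem}: one translates the fixed-point condition into $\mathbf{v}^TX(\lambda)=\mathbf{v}^T$ and uses the convergence of the iterates $\mathbf{v}^TX(\lambda)^k$ to the left-hand Perron vector $\mathbf{c}$ of $P_A$, which forces any fixed point to equal $\mathbf{c}$. The details you supply (the Neumann-series verification that $\mathbf{c}$ is fixed, the constant-orbit uniqueness argument, and the strict positivity of $\mathbf{c}$ guaranteeing membership in $\Delta_N^+$) are precisely the ``straightforward check'' the paper leaves implicit.
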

 
Since Corollary~\ref{cor:FixedPointStronglyConnected} states that the only fixed point of the PageRank is  the left-hand Perron vector  of the matrix $P_A$ in the case of strongly connected directed graphs, the computation of such fixed point can be computationally simplified in some cases, as the following result shows.\\

\begin{thm}\label{thm:PageRank-Degree}
Let $\mathcal{G}=(V,E)$ be a graph where $V=\{1,\dots,N\}$ is the set of nodes.
\begin{itemize}
 \item[{\it (i)}] If $\mathcal{G}$ is an undirected and connected graph, then the left-hand Perron vector of the matrix $P_A$ is 
 \[
 \bold{c}=\frac 1{2|E|}\left(k(1),\cdots,k(N)\right)^T,
 \]
 where $k(i)$ denotes the degree of node $i$, for every $1\le i\le N$.
 \item[{\it (ii)}] If $\mathcal{G}$ is a directed and strongly connected graph, then then the left-hand Perron vector  of the matrix $P_A$ is 
 \[
 \bold{c}=\frac 1{|E|}\left(k_{out}(1),\cdots,k_{out}(N)\right)^T
 \]
if and only if $\mathcal{G}$ admits some Eulerian path (i.e. there is a closed path in $\mathcal{G}$ that visits every edge exactly once). 
\end{itemize} 
\end{thm}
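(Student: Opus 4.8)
The plan is to verify, in each case, that the proposed closed-form vector satisfies the three defining properties of the left-hand Perron vector of $P_A$—positivity, unit $1$-norm, and the fixed-point equation $\bold{c}^TP_A=\bold{c}^T$—and then to invoke the uniqueness guaranteed by the Perron--Frobenius Theorem. In both parts the hypotheses ensure, via Lemma~\ref{lemma:strongly-connected-irreducible}, that $P_A$ is irreducible (an undirected connected graph becomes strongly connected once each edge is read in both directions), so that $\rho(P_A)=1$ and the left-hand Perron vector exists and is unique.

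For part \emph{(i)}, I would take $\bold{c}=\frac{1}{2|E|}(k(1),\dots,k(N))^T$ and check the three conditions directly. Positivity follows since connectedness forces $k(i)\geq 1$; the normalization $\|\bold{c}\|_1=1$ is the handshaking identity $\sum_i k(i)=2|E|$. For the fixed-point equation, using $p_{ij}=a_{ij}/k(i)$ I would compute the $j$-th component
\[
(\bold{c}^TP_A)_j=\sum_i \frac{k(i)}{2|E|}\cdot\frac{a_{ij}}{k(i)}=\frac{1}{2|E|}\sum_i a_{ij}=\frac{k(j)}{2|E|}=c_j,
\]
where the crucial step is the symmetry $a_{ij}=a_{ji}$ of the undirected adjacency matrix, which turns the column sum $\sum_i a_{ij}$ into the degree $k(j)$. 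Uniqueness of the Perron vector then identifies $\bold{c}$ as the desired object.

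For part \emph{(ii)}, the analogous computation with $\bold{c}=\frac{1}{|E|}(k_{out}(1),\dots,k_{out}(N))^T$ gives $(\bold{c}^TP_A)_j=\frac{1}{|E|}\sum_i a_{ij}=\frac{k_{in}(j)}{|E|}$, so that $\bold{c}^TP_A=\bold{c}^T$ holds \emph{if and only if} $k_{in}(j)=k_{out}(j)$ for every node $j$; here the normalization uses $\sum_i k_{out}(i)=|E|$ and positivity follows from strong connectivity (every node has an outlink). This reduces the theorem to the equivalence between $P_A$ having this degree vector as its Perron vector and $\mathcal{G}$ being \emph{balanced} (equal in- and out-degree at each node). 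The remaining—and genuinely non-linear-algebraic—ingredient is the classical Euler theorem for directed graphs: a strongly connected digraph admits a closed Eulerian trail if and only if it is balanced. Chaining the two equivalences (fixed-point equation $\Leftrightarrow$ balanced $\Leftrightarrow$ Eulerian circuit), and using uniqueness of the Perron vector for the direction that reconstructs $\bold{c}$, yields the stated biconditional.

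The main obstacle I anticipate is not the linear algebra, which is routine, but articulating the graph-theoretic half cleanly: one must be careful that strong connectivity is exactly the hypothesis needed to upgrade the balance condition $k_{in}=k_{out}$ to the existence of an Eulerian circuit, since Euler's theorem requires a connectivity assumption in addition to balance. Care is also needed in the ``only if'' direction to deduce the balance condition componentwise from the single vector equation $\bold{c}^TP_A=\bold{c}^T$ before appealing to Euler's theorem.
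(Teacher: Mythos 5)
Your proof is correct, and it reaches the paper's conclusions by a slightly different (more direct) route. The paper first proves a conjugation lemma (its \textbf{Fact 1}): $\bold{w}$ is a left eigenvector of $D^{-1}A$ for the eigenvalue $\mu$ if and only if $D^{-1}\bold{w}$ is a left eigenvector of $AD^{-1}$ for $\mu$. It then works with the conjugate matrix $AD^{-1}$ rather than with $P_A$ itself: in part \emph{(i)} it observes that $AD^{-1}=P_A^T$ is column-stochastic (using symmetry of $A$), so $\bold{e}$ is a left Perron vector of it, and maps back through $D$ to get $D\bold{e}\propto(k(1),\dots,k(N))^T$; in part \emph{(ii)} it computes $\bold{e}^TAD^{-1}=\left(k_{in}(1)/k_{out}(1),\dots,k_{in}(N)/k_{out}(N)\right)$ and concludes that the degree vector is the Perron vector of $P_A$ if and only if $k_{in}(j)=k_{out}(j)$ for all $j$, i.e.\ if and only if $\mathcal{G}$ is Eulerian. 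Your direct entrywise verification of $\bold{c}^TP_A=\bold{c}^T$ followed by Perron--Frobenius uniqueness bypasses the lemma entirely; the degree factors cancel in exactly the same way, so the two proofs share their arithmetic core and both reduce part \emph{(ii)} to the balance condition $k_{in}=k_{out}$ plus Euler's theorem for strongly connected digraphs. What the paper's detour buys is a reusable statement and a conceptual explanation of where the degree vector comes from: it is the image under $D$ of the all-ones Perron vector of the column-stochastic conjugate $AD^{-1}$. What your route buys is brevity and self-containedness; you are also somewhat more explicit than the paper about the Perron--Frobenius ingredients (positivity and normalization of the candidate vector, uniqueness of the Perron vector) and about the fact that strong connectivity is exactly what upgrades balance to the existence of an Eulerian circuit.
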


\begin{proof}
Before completing the proof, the following fact must be checked:
\begin{itemize}
    \item [\textbf{Fact 1.}] If $A,\,D$ are two square matrices and $D$ is diagonal, then $\bold{w}\in\R^{N\times1}$ is a left eigenvector of $D^{-1}A$ associated to the eigenvalue $\mu$ if and only if $D^{-1}\bold{w}$ is a  left eigenvector of $AD^{-1}$ associated to the eigenvalue $\mu$, where $D$ is any diagonal matrix.
 \end{itemize}
 In order to prove this fact notice, on the one hand, that if $\bold{w}\in\R^{N\times1}$ is a left eigenvector of $D^{-1}A$ associated to the eigenvalue $\mu$, then
 \[
 (D^{-1}\bold{w})^T AD^{-1}=\bold{w}^T (D^{-1})^T AD^{-1}=\left(\bold{w}^TD^{-1}A\right)D^{-1}=\mu \bold{w}^TD^{-1}=\mu\left(D^{-1}\bold{w}\right)^T,
 \]
 since $D$ is a diagonal matrix. Hence, $D^{-1}\bold{w}$ is a  left eigenvector of $AD^{-1}$ associated to the eigenvalue $\mu$. 
 
On the other hand, if $D^{-1}\bold{w}$ is a  left eigenvector of $AD^{-1}$ associated to the eigenvalue $\mu$, then 
\[
\bold{w}^T\left(D^{-1}A\right)=\bold{w}^TD^{-1}AD^{-1}D=\left(D^{-1}\bold{w}\right)^T AD^{-1}D=\mu \left(D^{-1}\bold{w}\right)^TD=\mu \bold{w}^T\left(D^{-1}\right)^TD=\mu\bold{w}^T,
\]
so $\bold{w}\in\R^{N\times1}$ is a left eigenvector of $D^{-1}A$ associated to the eigenvalue $\mu$. 

\medskip Once  {\bf Fact 1} has been proved, we can start proving {\it (i)}. Since $P_A=D^{-1}A$ is row-stochastic matrix and $\mathcal{G}$ is connected, it follows that $\rho(P_A)=\rho(P_A^T)=1$ and the corresponding (left) eigenspaces associated to eigenvalue $\mu=1$ are 1-dimensional. In addition to this, since $P_A^T=(D^{-1}A)^T=AD^{-1}$ is column-stochastic, then $\bold{e}=(1,\cdots,1)^T$ is a left eigenvector of $P_A^T$ associated to eigenvalue $\mu=1$. Hence, by {\bf Fact~1}, $D\bold{e}$ is a left eigenvector of $P_A$ associated to eigenvalue $\mu=1$ and therefore
\[
\frac 1{\displaystyle\sum_i k(i)}D\bold{e}=\frac 1{2|E|}\left(k(1),\cdots,k(N)\right)^T
\]   
is the left-hand Perron vector of $P_A$.

\medskip In order to proof {\it (ii)}, note that  since $P_A=D^{-1}A$, then
\[
AD^{-1}=
\left(
\begin{array}{ccc}
\frac{a_{11}}{k_{out}(1)}& \cdots& \frac{a_{1N}}{k_{out}(N)}\\
\cdots & \cdots & \cdots\\
\frac{a_{N1}}{k_{out}(1)} & \cdots & \frac{a_{NN}}{k_{out}(N)}
\end{array}
\right).
\] 
Hence the sum of each column of  $AD^{-1}$ is $k_{in}(1)/k_{out}(1),\,\cdots,\, k_{in}(N)/k_{out}(N)$ respectively, where $k_{in}(j)=\sum_{i}a_{ij}$ for every $1\le j\le N$. Therefore, by using {\bf Fact\,1},
\[
 \bold{c}=\frac 1{|E|}\left(k_{out}(1),\cdots,k_{out}(N)\right)^T
\] 
is the normalized left-hand Perron vector of $D^{-1}A$ if and only if $\bold{e}$ is a left eigenvector of $AD^{-1}$ associated to eigenvalue $\mu=1$, but 
\[
\bold{e}^T AD^{-1}= \left(\frac {k_{in}(1)}{k_{out}(1)},\,\cdots,\, \frac{k_{in}(N)}{k_{out}(N)}\right),
\] 
which makes that $\bold{e}$ is the normalized left-hand Perron vector of $AD^{-1}$ if and only $k_{in}(j)=k_{out}(j)$ for every $1\le j\le N$, but this is equivalent to the fact that $\mathcal{G}$ admits some Eulerian paths (see, for example \cite{Diestel}).
\end{proof}


\section{Iteration for a non-negative reducible probability matrix $P_A$}
\label{Section:reducible}

In this section, we investigate the convergence of the Power Iteration Method applied to the matrix $X(\lambda)=(1-\lambda)(I_N-\lambda P_A)^{-1}$ for the case when the row-normalization matrix $P_A$ is no longer irreducible. To this aim, we will assume (see \cite[Section 2.3]{Varga}) that for the reducible square adjacency matrix $A=(a_{ij})_{i,j}$ of order $N$ defined in Section \ref{Section:notation} there exists a permutation matrix $S$ such that the matrix $SAS^T$ has an upper-triangular form
\begin{equation*}
SAS^T=\left(\begin{array}{c|c} 
    A_1 & B \\ 
    \hline
    \bold{0} & A_2 \end{array}\right),
\end{equation*}
where $A_1$ is either an irreducible or identically zero square matrix of order $n_1$ and $A_2$ is an irreducible square matrix of order $n_2$, with $n_1+n_2=N$, and $B$ is a $n_1\times n_2$ matrix. Notice that the absence of dangling nodes (see Section \ref{Section:notation}) means that $A_2$ cannot be  the zero matrix. At this point, we distinguish  three different situations for the  matrix $SAS^T$: 
\begin{itemize}
\addtolength{\itemindent}{3.5cm}
    \item[\textbf{Diagonal case:}] Where $B=\bold{0}_{n_1\times n_2}$, but $A_1$ and $A_2$ are non-zero square matrices (Theorem \ref{thm:diagonal-case}).
    \item[\textbf{Zero-block column case:}]  Where $A_1=\bold{0}_{n_1\times n_1}$, but $B$ and  $A_2$ are non-zero matrices (Theorem \ref{thm:zero-case}).
    \item[\textbf{General reducible case:}]  Where $A_1, B$, and $A_2$ are non-zero matrices (Theorem \ref{thm:non-zero-case}).
\end{itemize}

For all three cases we will prove not only the convergence of the Power Iteration Method applied to the matrix $X(\lambda)$, but also the form of the vector to which the recursive sequence $\{\bold{x}_k\}_{k\geq0}$ in $\R^{N\times1}$ defined as 
\begin{equation}\label{eq:recursive-sequence}
 \bold{x}_k^{T}=\bold{x}_{k-1}^{T}X(\lambda)=\bold{v}^{T}X^k(\lambda)\,,\quad (k\geq 1),
\end{equation}
converges. Notice that  the row-stochasticity of the matrix $P_A$ together with the damping factor being strictly smaller than 1, imply the existence of the non-negative matrix $X(\lambda)=(I_N-\lambda P_A)^{-1}$, but it can be no longer assumed irreducible. \\

For instance, in the \textbf{Diagonal case}, we will show that the sequence in equation (\ref{eq:recursive-sequence}) converges to a non-zero vector of the form $(\enspace\alpha_1\bold{c}_1^T\quad | \quad \alpha_2\bold{c}_2^T\enspace)$, for some $\alpha_1,\alpha_2\geq0$ and $\alpha_1+\alpha_2=1$, where $\bold{c}_1\in \R^{n_1\times 1}$ and $\bold{c}_2\in \R^{n_2\times 1}$ are the left-hand Perron vectors for the row-normalization matrices $P_{A_1}$ and $P_{A_2}$, respectively. Roughly speaking, in the diagonal case where clusters $A_1$ and $A_2$ are disjoint, the $1$-norm of the PageRank vector $\pi\in \R^{N\times1}$ is distributed (perhaps not equitably) between vectors $\bold{c}_1\in\R^{n_1\times 1}$ and $\bold{c}_2\in\R^{n_2\times 1}$.\\

On the other hand, for the \textbf{Zero-block column case} and the \textbf{General reducible case} where nodes in cluster $A_1$ are connected to nodes in cluster $A_2$ (but not conversely), we will show that the sequence in equation (\ref{eq:recursive-sequence}) converges to non-zero vector of the form $(\enspace\bold{0}_{n_1\times1}^T\quad | \quad \bold{c}_2^T\enspace)$, where $\bold{c}_2\in \R^{n_2\times 1}$ is the left-hand Perron vector of matrix $P_{A_2}$. This fact can be understood as follows: The $1$-norm of the PageRank vector $\pi\in \R^{N\times1}$ is divided between clusters $A_1$ and $A_2$. However, over time, the portion of this $1$-norm in cluster $A_1$ is transferred to cluster $A_2$ through the connections represented by matrix $B$.  Ultimately, all the $1$-norm of vector $\pi$ asymptotically ends up in cluster $A_2$ and finally, by Theorem \ref{thm:main-theorem} applied to matrix $P_{A_2}$, the sequence $\{\bold{x}_k\}_{k\geq0}$ defined in equation (\ref{eq:recursive-sequence}) converges to the vector $(\enspace\bold{0}_{n_1\times1}^T\quad | \quad \bold{c}_2^T\enspace)$. In this sense, the cluster $A_2$ is what we call a \textit{Dangling Cluster}, that is, a set of nodes that has no outgoing links to other clusters within the network.\\

The situation described above is illustrated in Figure \ref{fig:dangling-cluster}. For this numerical simulation, we consider a graph with $N=20$ nodes which is in fact two copies of the graph in Figure \ref{fig:irreducible-case}(a). One of the copies plays the role of Cluster  and the second copy plays the role of Dangling Cluster by simply adding one edge that goes from the first copy to the second copy, which is represented by the black arrow in Figure \ref{fig:dangling-cluster}(a).
Notice that the term Dangling Cluster, as it was mentioned above, refers to a group of nodes with no out-links to any other cluster in the network, and it can be considered an extension of the notion of a dangling node to a group of nodes in the network. Additionally, in Figure \ref{fig:dangling-cluster}(b), we compute a heat map where we represent the different normalized values of nodes (in rows) $\{w_1,\dots,w_{10}\}$ in the Cluster and $\{u_1,\dots,u_{10}\}$ in the Dangling Cluster  along the $k$-th iteration (in columns) under the action of the matrix $X(\lambda)$.
\begin{figure}[h!]
        \centering
        \includegraphics[width=1\linewidth]{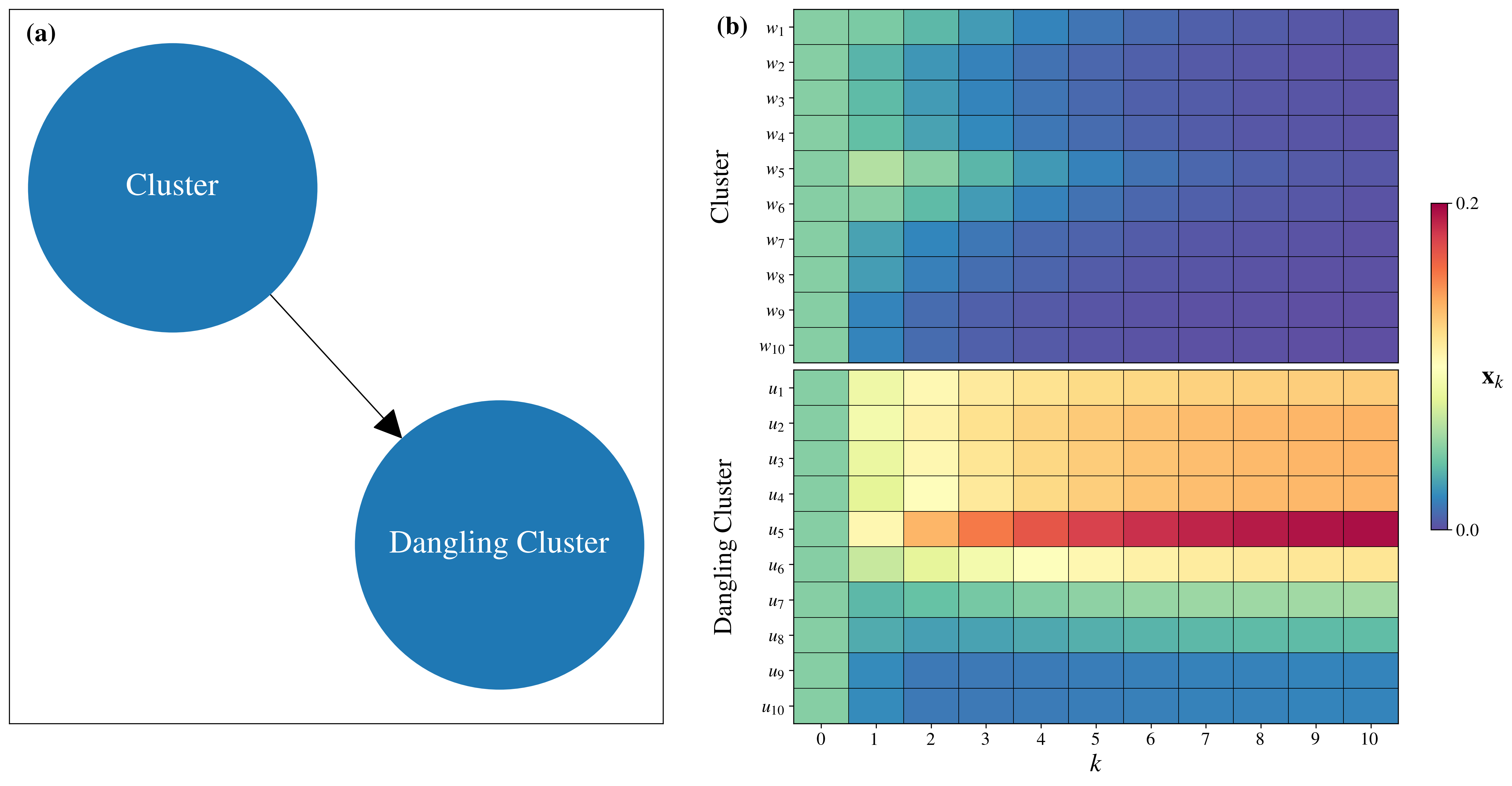}
        \caption{A numerical illustration of the converge for the Zero-block column and the General reducible cases.}
        \label{fig:dangling-cluster}
    \end{figure}

Initially, the personalization vector $\bold{v}\in\R^{20\times1}$ is considered to be uniform, that is,  $\bold{v}=\displaystyle\frac{1}{20}\bold{e}=(0.05\,,\dots\,,\,0.05)^T$ with $\bold{e} = (1,\dots,1)^T\in\R^{20\times1}$, as can be observed in the first column ($k=0$), formed by green rectangles, in Figure \ref{fig:dangling-cluster}(b). As the iteration $\bold{v}^TX^k(\lambda)$ is calculated, the color of each node in the directed graph begins to change, with blue color representing values close to $0$ and red color indicating values close to $0.2$. As it can be observed in Figure \ref{fig:dangling-cluster}(b), the redistribution of the $1$-norm of the uniform personalization vector $\bold{v}$ is such that the vector $\bold{v}^TX^k(\lambda)$ with $k=10$ shows a significant shift in the values, which are very close to zero for all nodes in the Cluster part of the network. In this sense, the entire initial $1$-norm of the personalization vector ends up being redistributed among the nodes in the Dangling Cluster part of the network. Notice that  the graph in the Dangling Cluster part is the same as in Figure \ref{fig:irreducible-case}(a) and  the left-hand Perron vector $\bold{c}\in\R^{10\times1}$ associated with the row-normalization matrix $P_A$ is $\bold{c}=\frac{1}{101}(13,14,14,14,20,12,6,4,2,2)^T$. The fact that the  fifth entry of  $\bold{c}$  exhibits the highest  value is consistent with the red color in Figure \ref{fig:dangling-cluster}(b), which corresponds to node $u_5$ in the Dangling Cluster aftert the $10^{th}$ iteration ($k=10$).\\

Now, once we have an idea of how the iteration of the matrix $X(\lambda)=(1-\lambda)(I_N-\lambda P_A)^{-1}$ behaves when the adjacency matrix $A$ is reducible (and so is  its row-normalization $P_A$), we are in a position to state and prove the results concerning the different cases mentioned above, starting with the \textbf{Diagonal case}. In what follows, for simplicity of notation, we will refer to the reduced matrix $SAS^T$ simply as $A$.\\

\begin{thm}\label{thm:diagonal-case}
Let $A$ be a non-negative reducible square matrix of order $N$ of the form
\begin{equation*}
    A=\left(\begin{array}{c|c} 
    A_1 & \bold{0} \\ 
    \hline
    \bold{0} & A_2 \end{array}\right),
\end{equation*}
where $A_1$ and $A_2$ are irreducible square matrices of order $n_1$ and $n_2$, respectively, with $n_1+n_2=N$. Let $P_{A_1}$ and $P_{A_2}$ be the row-normalization of matrices $A_1$ and $A_2$, respectively. 
For a fixed $\lambda\in (0,1)$ consider $X(\lambda)=(1-\lambda)(I_N-\lambda P_A)^{-1}$. Then for any personalization vector $\bold{v}\in\R^{N\times 1}$ with $\|\bold{v}\|_1=1$, the recursive sequence $\{\bold{x}_k\}_{k\geq0}\subset\R^{N\times1}$, with $\bold{x}_0=\bold{v}$, defined as 
\begin{equation*}
 \bold{x}_k^{T}=\bold{x}_{k-1}^{T}X(\lambda)=\bold{v}^TX(\lambda)^k\,,\quad (k\geq 1),
\end{equation*}
converges to the vector $\bold{x}^T=(\enspace\alpha_1\bold{c}_1^T\quad | \quad \alpha_2\bold{c}_2^T\enspace)$, for some $\alpha_1,\alpha_2\geq0$ with $\alpha_1+\alpha_2=1$, where $\bold{c}_1\in\R^{n_1\times 1}$ and $\bold{c}_2\in\R^{n_2\times 1}$ are the left-hand Perron vector for matrices $P_{A_1}$ and $P_{A_2}$, respectively.
\end{thm}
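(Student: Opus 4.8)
The plan is to reduce the statement to the irreducible case already settled in Theorem~\ref{thm:main-theorem} by exploiting the block structure of $A$. First I would observe that, because $A$ is block diagonal and carries no edges between the two blocks, the out-degree of every node coincides with its out-degree inside its own block; hence the row-normalization respects the decomposition, that is
\begin{equation*}
P_A=\left(\begin{array}{c|c} P_{A_1} & \bold{0} \\ \hline \bold{0} & P_{A_2} \end{array}\right).
\end{equation*}
Consequently $I_N-\lambda P_A$ is block diagonal and invertible (Lemma~\ref{lemma:resolvent-row-stochastic} applies to each irreducible block), and since the inverse of a block-diagonal matrix is block diagonal,
\begin{equation*}
X(\lambda)=\left(\begin{array}{c|c} X_1(\lambda) & \bold{0} \\ \hline \bold{0} & X_2(\lambda) \end{array}\right),\qquad X_i(\lambda)=(1-\lambda)(I_{n_i}-\lambda P_{A_i})^{-1}.
\end{equation*}
In particular $X(\lambda)^k$ is block diagonal with blocks $X_i(\lambda)^k$, so the iteration decouples completely across the two clusters.

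Next I would split the personalization vector conformally as $\bold{v}^T=(\,\bold{v}_1^T\mid \bold{v}_2^T\,)$ with $\bold{v}_i\in\R^{n_i\times1}$, $\bold{v}_i\ge 0$, and set $\alpha_i=\Vert\bold{v}_i\Vert_1$. Since $\bold{v}\ge0$ and $\Vert\bold{v}\Vert_1=1$ we immediately get $\alpha_1,\alpha_2\ge0$ and $\alpha_1+\alpha_2=1$. The block-diagonal form then yields $\bold{x}_k^T=(\,\bold{v}_1^TX_1(\lambda)^k\mid \bold{v}_2^TX_2(\lambda)^k\,)$, so it suffices to analyze each block on its own.

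To finish, for each index $i$ with $\alpha_i>0$ I would apply Theorem~\ref{thm:main-theorem} to the irreducible matrix $A_i$ with the normalized personalization vector $\bold{v}_i/\alpha_i$ (which has unit $1$-norm), obtaining $(\bold{v}_i/\alpha_i)^TX_i(\lambda)^k\to\bold{c}_i^T$; by linearity of the map $\bold{w}\mapsto\bold{w}^TX_i(\lambda)^k$ this gives $\bold{v}_i^TX_i(\lambda)^k\to\alpha_i\bold{c}_i^T$. If instead $\alpha_i=0$, then $\bold{v}_i=\bold{0}$ and the $i$-th block remains identically zero, which equals $\alpha_i\bold{c}_i^T$. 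Assembling the two blocks gives $\bold{x}_k^T\to(\,\alpha_1\bold{c}_1^T\mid\alpha_2\bold{c}_2^T\,)$, as claimed.

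The hard part is not any single computation but rather the bookkeeping needed to invoke Theorem~\ref{thm:main-theorem} correctly: that result is stated for a unit-norm (and implicitly strictly positive) personalization vector, whereas the restricted blocks $\bold{v}_i$ carry only a fraction $\alpha_i\le1$ of the mass and may even vanish. The key structural fact that makes the rescaling legitimate is that the two clusters never exchange mass under the iteration -- a consequence of the exact block-diagonal form of $X(\lambda)$ together with the $\Vert\cdot\Vert_1$-isometry of Remark~\ref{rmk:resolvent-isometry}, which guarantees that $\Vert\bold{v}_i^TX_i(\lambda)^k\Vert_1=\alpha_i$ is conserved for every $k$. This is precisely why the limiting weights $\alpha_i$ are simply the initial cluster masses $\Vert\bold{v}_i\Vert_1$ and need not be equal, in contrast with the strongly connected case.
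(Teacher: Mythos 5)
Your proposal is correct and follows essentially the same route as the paper: exploit the block-diagonal form of $P_A$ (hence of $X(\lambda)$ and its powers), split $\bold{v}$ conformally, and apply Theorem~\ref{thm:main-theorem} to each block after rescaling by $\alpha_i=\Vert\bold{v}_i\Vert_1$, exactly as in the paper's equation~(\ref{alphasdiagonal}). Your explicit treatment of the degenerate case $\alpha_i=0$ is a small added care the paper omits (it implicitly takes $\bold{v}>0$), but it does not change the argument.
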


\begin{proof}
Let $P_A$ be the row-stochastic reducible matrix obtained from $A$ by row-normalization.
$$P_A=\left(\begin{array}{c|c} 
P_{A_1} & \bold{0} \\
\hline
\bold{0} & P_{A_2} \end{array}\right).$$
As it was mentioned in Section \ref{Section:notation}, the Google matrix $G$ without dangling nodes given by
\begin{equation*}
G=\lambda P_A+(1-\lambda)\bold{ev}^T\in\mathbb{R}^{N\times N}
\end{equation*} 
is a row-stochastic square matrix for which there exists a unique positive vector $\pi\in\R^{N\times1}$ (called PageRank vector), $\pi^T\bold{e}=1$, satisfying the equation
\begin{equation}\label{eq:pagerank-diagonal-case}
    \pi^T=(1-\lambda)\mathbf{v}^T(I_N-\lambda P_A)^{-1}=\mathbf{v}^TX(\lambda),
\end{equation}
where $X(\lambda)=(1-\lambda)(I_N-\lambda P_A)^{-1}$. A straightforward computation shows that the inverse of the reducible square matrix $I_N-\lambda P_A$ is the reducible matrix given by
\begin{align*}
    (I_N-\lambda P_A)^{-1}=\left(\begin{array}{c|c} 
    (I_{n_1}- \lambda P_{A_1})^{-1} & \bold{0}\\ 
    \hline
    \bold{0} & (I_{n_2}-\lambda P_{A_2})^{-1} 
    \end{array}\right).
\end{align*}
Therefore, in this diagonal case, equation (\ref{eq:pagerank-diagonal-case}) can be rewritten as follows
\begin{align*}
    \pi^T
    =\bold{v}^T X(\lambda)
    =\bold{v}^T(1-\lambda)(I_N-\lambda P_A)^{-1}
    &=\bold{v}^T(1-\lambda)\left(\begin{array}{c|c} 
    (I_{n_1}- \lambda P_{A_1})^{-1} & \bold{0} \\ 
    \hline
    \bold{0} & (I_{n_2}-\lambda P_{A_2})^{-1} 
    \end{array}\right)\\
    &=\bold{v}^T\left(\begin{array}{c|c} 
    (1-\lambda)(I_{n_1}- \lambda P_{A_1})^{-1} & \bold{0} \\
    \hline
    \bold{0} & (1-\lambda)(I_{n_2}-\lambda P_{A_2})^{-1} 
    \end{array}\right)\\
    &=\bold{v}^T\left(\begin{array}{c|c} 
    X_1(\lambda) & \bold{0} \\ 
    \hline
    \bold{0} & X_2(\lambda)
    \end{array}\right),
\end{align*}
where $X_1(\lambda)=(1-\lambda)(I_{n_1}-\lambda P_{A_1})^{-1}$ and $X_2(\lambda)=(1-\lambda)(I_{n_2}-\lambda P_{A_2})^{-1}$. Now, let us consider a recursive sequence of vectors $\{\bold{x}_k^T\}_{k\geq0}$ in $\R^{N\times1}$ defined by
\begin{equation}\label{eq:sequence-xk-diagonal-case}
 \bold{x}_k^{T}=\bold{x}_{k-1}^{T}X(\lambda)=\bold{v}^TX(\lambda)^k\,,\quad (k\geq 1)\,,\,\qquad\text{with}\qquad \, \bold{x}_0=\bold{v}>0 \qquad\text{and}\qquad\Vert \bold{v}\Vert_1=1.
\end{equation}
Let $\bold{v}^T=(\enspace\bold{v}_1^T\quad|\quad\bold{v}_2^T\enspace)$ be a decomposition of the personalization vector $\bold{v}\in\R^{N\times1}$ where $\bold{v}_1\in\R^{n_1\times1}$ and $\bold{v}_2\in\R^{n_2\times1}$. Clearly  
\begin{equation*}
    \bold{x}_k^T
    =\bold{v}^TX^{k}(\lambda)=(\enspace \bold{v}_1^T\quad|\quad\bold{v}_2^T\enspace)
    \left(\begin{array}{c|c}
    X_1^k(\lambda) & \bold{0} \\
    \hline
    \bold{0} & X_2^k(\lambda)
    \end{array}\right)
    =    \left(\enspace\bold{v}_1^TX_1^k(\lambda)\quad\Big{|}\quad \bold{v}_1^T X_2^k(\lambda)\enspace \right).
\end{equation*}
Now, since $A_1$ and $A_2$ are non-negative irreducible square matrices, by Theorem \ref{thm:main-theorem} in Section \ref{Section:irreducible} we have
\begin{equation}\label{alphasdiagonal}
    \lim_{k\to\infty}\bold{v}_1^TX_1^k(\lambda)
    =\Vert \bold{v}_1^T\Vert_1\lim_{k\to\infty}\frac{\bold{v}_1^T}{\Vert \bold{v}_1^T\Vert_1}X_1^k(\lambda)=\Vert \bold{v}_1^T\Vert_1 \bold{c}_1^T \qquad,\qquad  \lim_{k\to\infty}\bold{v}_2^TX_2^k(\lambda)=
    \Vert \bold{v}_2^T\Vert_1 \lim_{k\to\infty}\frac{\bold{v}_2^T}{\Vert \bold{v}_2^T\Vert_1}X_2^k(\lambda)=\Vert \bold{v}_2^T\Vert_1 \bold{c}_2^T,
\end{equation}
where $\bold{c}_1\in\R^{n_1\times1}$ and $\bold{c}_2\in\R^{n_2\times1}$ are the left-hand Perron vectors for $P_{A_1}$ and $P_{A_2}$, respectively. Finally, the sequence $\{\bold{x}_k\}_{k\geq0}$ defined in equation (\ref{eq:sequence-xk-diagonal-case}) converges to a non-zero vector $\bold{x}\in\R^{N\times 1}$ with $\Vert \bold{x}\Vert_1=1$ such that
\begin{equation*}
    \bold{x}^T=\lim_{k\to\infty}\bold{x}_k^T
    =\lim_{k\to\infty}\bold{v}^TX^{k}(\lambda)
    =\lim_{k\to\infty}\left(\enspace\bold{v}_1^TX_1^k(\lambda)\quad\Big{|}\quad \bold{v}_1^T X_2^k(\lambda)\enspace\right) 
    =(\enspace\Vert \bold{v}_1^T\Vert_1\bold{c}_1^T\quad \Big{|} \quad \Vert \bold{v}_2^T\Vert_1\bold{c}_2^T\enspace).
\end{equation*}
In fact, since $\Vert\bold{c}_1\Vert_1=1$, $\Vert\bold{c}_2\Vert_1=1$ and $\Vert \bold{v}_1^T\Vert_1+\Vert \bold{v}_2^T\Vert_1=\Vert \bold{v}^T\Vert_1=1$, we conclude the vector $\bold{x}^T$ is a linear convex combination of the vectors $\left(\enspace\bold{c}_1^T\quad \Big{|}\quad \bold{0}_{n_2\times1}^T\enspace\right)$ and $\left(\enspace\bold{0}_{n_1\times1}^T\quad \Big{|} \quad \bold{c}_2^T\enspace\right)$.

\end{proof}

As with Theorem~\ref{thm:main-theorem} we can reinterpret  Theorem~\ref{thm:diagonal-case} to prove  the existence  of fixed points for the PageRank as a function $PR_\lambda: \Delta_N^+\longrightarrow \Delta_N^+$ for every damping factor $\lambda\in (0,1)$,  when the network considered has two completely disconnected strongly connected components. More precisely,\\

 \begin{coro}\label{cor:diagonal-case}
Let $\mathcal{G}=(V,E)$ be a directed graph with two completely disconnected strongly connected components $\mathcal{G}_1$ and $\mathcal{G}_2$, then for every damping factor $\lambda\in (0,1)$, there are infinite many fixed points of the PageRank as a function $PR_\lambda: \Delta_N^+\longrightarrow \Delta_N^+$. Furthermore, the set of such fixed points is the convex hull  in $\Delta_N^+$ of the left-hand Perron vectors of the matrices $P_{A_1}$, $P_{A_2}$ given by $\mathcal{G}_1$ and $\mathcal{G}_2$.\\
\end{coro}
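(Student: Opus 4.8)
The plan is to reduce the fixed-point condition for $PR_\lambda$ to a left-eigenvector condition for $P_A$ itself, and then to exploit the block-diagonal structure of $P_A$ together with Perron--Frobenius theory. Recall from equation (\ref{eq:pi-vector}) that $PR_\lambda(\bold{v})^T=\bold{v}^TX(\lambda)$, so a vector $\bold{v}\in\Delta_N^+$ is a fixed point of $PR_\lambda$ precisely when $\bold{v}^TX(\lambda)=\bold{v}^T$. First I would observe that this condition is equivalent to $\bold{v}^TP_A=\bold{v}^T$: multiplying $\bold{v}^T(1-\lambda)(I_N-\lambda P_A)^{-1}=\bold{v}^T$ on the right by $(I_N-\lambda P_A)$ gives $(1-\lambda)\bold{v}^T=\bold{v}^T-\lambda\bold{v}^TP_A$, which simplifies to $-\lambda\bold{v}^T=-\lambda\bold{v}^TP_A$; since $\lambda\neq 0$, this yields $\bold{v}^TP_A=\bold{v}^T$. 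Thus the fixed points of $PR_\lambda$ in $\Delta_N^+$ are exactly the probability vectors that are left eigenvectors of $P_A$ for the eigenvalue $1$.

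Next I would use the diagonal structure $P_A=\mathrm{diag}(P_{A_1},P_{A_2})$ arising from the two completely disconnected components $\mathcal{G}_1$ and $\mathcal{G}_2$. Writing $\bold{v}^T=(\enspace\bold{v}_1^T\enspace|\enspace\bold{v}_2^T\enspace)$ with $\bold{v}_1\in\R^{n_1\times 1}$ and $\bold{v}_2\in\R^{n_2\times 1}$, the equation $\bold{v}^TP_A=\bold{v}^T$ decouples into $\bold{v}_1^TP_{A_1}=\bold{v}_1^T$ and $\bold{v}_2^TP_{A_2}=\bold{v}_2^T$. Since $\mathcal{G}_1$ and $\mathcal{G}_2$ are strongly connected, $A_1$ and $A_2$ are irreducible (Lemma \ref{lemma:strongly-connected-irreducible}), so each $P_{A_i}$ is non-negative and irreducible with $\rho(P_{A_i})=1$; by the Perron--Frobenius Theorem the left eigenspace of $P_{A_i}$ for the eigenvalue $1$ is one-dimensional and spanned by the left-hand Perron vector $\bold{c}_i$. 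Hence $\bold{v}_i=\alpha_i\bold{c}_i$ for scalars $\alpha_1,\alpha_2$, so the left eigenspace of $P_A$ for $1$ is the two-dimensional span of $(\enspace\bold{c}_1^T\enspace|\enspace\bold{0}^T\enspace)$ and $(\enspace\bold{0}^T\enspace|\enspace\bold{c}_2^T\enspace)$.

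Finally I would impose the constraints defining $\Delta_N^+$. Since $\bold{c}_1>0$ and $\bold{c}_2>0$, strict positivity of $\bold{v}^T=(\enspace\alpha_1\bold{c}_1^T\enspace|\enspace\alpha_2\bold{c}_2^T\enspace)$ forces $\alpha_1>0$ and $\alpha_2>0$, while $\Vert\bold{c}_1\Vert_1=\Vert\bold{c}_2\Vert_1=1$ together with $\Vert\bold{v}\Vert_1=1$ force $\alpha_1+\alpha_2=1$. Conversely, every such vector satisfies $\bold{v}^TP_A=\bold{v}^T$ and, by the first paragraph, is a fixed point of $PR_\lambda$. Therefore the fixed-point set is exactly the one-parameter family $\{(\enspace\alpha_1\bold{c}_1^T\enspace|\enspace\alpha_2\bold{c}_2^T\enspace):\alpha_1,\alpha_2>0,\ \alpha_1+\alpha_2=1\}$, which is infinite and coincides with the convex hull of the two extended Perron vectors, intersected with $\Delta_N^+$.

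The algebraic reduction and the block decoupling are routine; the only genuinely delicate point is the interplay between the closed convex hull and the open simplex $\Delta_N^+$. The two extreme points $(\enspace\bold{c}_1^T\enspace|\enspace\bold{0}^T\enspace)$ and $(\enspace\bold{0}^T\enspace|\enspace\bold{c}_2^T\enspace)$ are left eigenvectors of $P_A$ but possess a vanishing block, hence do not lie in $\Delta_N^+$; this is precisely why the strict inequalities $\alpha_i>0$ appear and why the statement must be read as the convex hull of $\bold{c}_1$ and $\bold{c}_2$ \emph{restricted} to $\Delta_N^+$. I note that an alternative, equally valid route to the inclusion ``every fixed point has this form'' is to observe that a fixed point generates a constant sequence under the iteration of Theorem \ref{thm:diagonal-case}, which must then equal its own limit $(\enspace\Vert\bold{v}_1\Vert_1\bold{c}_1^T\enspace|\enspace\Vert\bold{v}_2\Vert_1\bold{c}_2^T\enspace)$.
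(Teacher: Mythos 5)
Your proof is correct, but it follows a genuinely different route from the paper's. The paper obtains this corollary as a direct reinterpretation of Theorem~\ref{thm:diagonal-case}: a fixed point generates a constant iteration sequence, which must coincide with its own limit $(\enspace\Vert\bold{v}_1\Vert_1\bold{c}_1^T\enspace|\enspace\Vert\bold{v}_2\Vert_1\bold{c}_2^T\enspace)$, so the fixed-point set is read off from the limit formula --- this is exactly the alternative you sketch in your last sentence. Your primary argument instead bypasses the dynamics entirely: you use the resolvent identity to show that $\bold{v}^TX(\lambda)=\bold{v}^T$ is equivalent to $\bold{v}^TP_A=\bold{v}^T$ (valid since $I_N-\lambda P_A$ is invertible and $\lambda\neq 0$), then decouple over the diagonal blocks and invoke the simplicity of the Perron eigenvalue of each irreducible block to identify the full left $1$-eigenspace of $P_A$, and finally intersect with $\Delta_N^+$. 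This is more elementary --- no convergence result is needed, only Perron--Frobenius and invertibility --- and it makes transparent two facts the paper leaves implicit: that the fixed-point set is independent of $\lambda$ (since the eigenvector condition $\bold{v}^TP_A=\bold{v}^T$ does not involve $\lambda$), and that the endpoints of the segment fail to be in $\Delta_N^+$ because they have a vanishing block, which is why the convex hull must be read inside the open simplex. What the paper's route buys in exchange is that the corollary comes for free once Theorem~\ref{thm:diagonal-case} is proved, and that theorem carries strictly more information (global convergence of the iteration from an arbitrary personalization vector, with the limit's weights determined by the initial mass distribution), which your eigenspace computation does not address and does not need.
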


In Figure \ref{fig:diagonal-case} we present a numerical simulation of Theorem \ref{thm:diagonal-case} which illustrates how the limit vector $\bold{x}$ depends on the initial $1$-norm distribution of the personalization vector in each cluster $A_1$ and $A_2$. For this, take $\mathcal{G}=(V,E)$  a directed ten-node  graph whose set of nodes $V=\{1,2,\dots,10\}$  is  divided into two disjoint groups: Cluster $A_1$ contains nodes $1$ through $5$, and cluster $A_2$ contains nodes $6$ through $10$ (see Figure \ref{fig:diagonal-case}(a)). If $P_{A_1}$ and $P_{A_2}$ denote the row-normalization of the adjacency matrix of cluster $A_1$ and $A_2$, respectively, a straightforward computation shows that $\bold{c_1}=\frac{1}{19}(1,8,4,2,4)^T$ is the left-hand Perron vector of $P_{A_1}$ and $\bold{c_2}=\frac{1}{29}(6,6,8,3,6)^T$ is the left-hand Perron vector of $P_{A_2}$.\\

The graph in Figure \ref{fig:diagonal-case}(a)  clearly is not strongly connected. In Figure \ref{fig:diagonal-case}(b) we consider two different cases: Example 1, where the 90\% of the $1$-norm of the personalization vector $\bold{v}\in\R^{10\times1}$ is distributed among nodes in cluster $A_1$ (bottom graph) and the remaining 10\% is distributed among nodes in cluster $A_2$ (top graph), and Example 2, where the distribution is exactly the opposite. Roughly speaking, both in Example 1 and Example 2, the part in each cluster of the limit vector $\bold{x}^T=\displaystyle\lim_{k\to\infty}\bold{x}_k^T$ of the recursive sequence $\{\bold{x}_k\}_{k\geq0}\in\R^{N\times1}$ defined in equation (\ref{eq:sequence-xk-diagonal-case}) is the same, up to a constant proportional to the $1$-norm distribution of the personalization vector among clusters $A_1$ and $A_2$. For instance, in Example 1, the second component of the limit vector $\bold{x}\in\R^{10\times1}$ belonging to cluster $A_1$ (bottom graph), is $0.9$ times the second component of the left-hand Perron vector $\bold{c_1}$ (that is, $0.9*(8/19)\approx0.379$), whereas in Example 2, the same component is only $0.1$ times this component (that is, $0.1*(8/19)\approx0.042$).

\begin{figure}[H]
        \centering
        \includegraphics[width=1\linewidth]{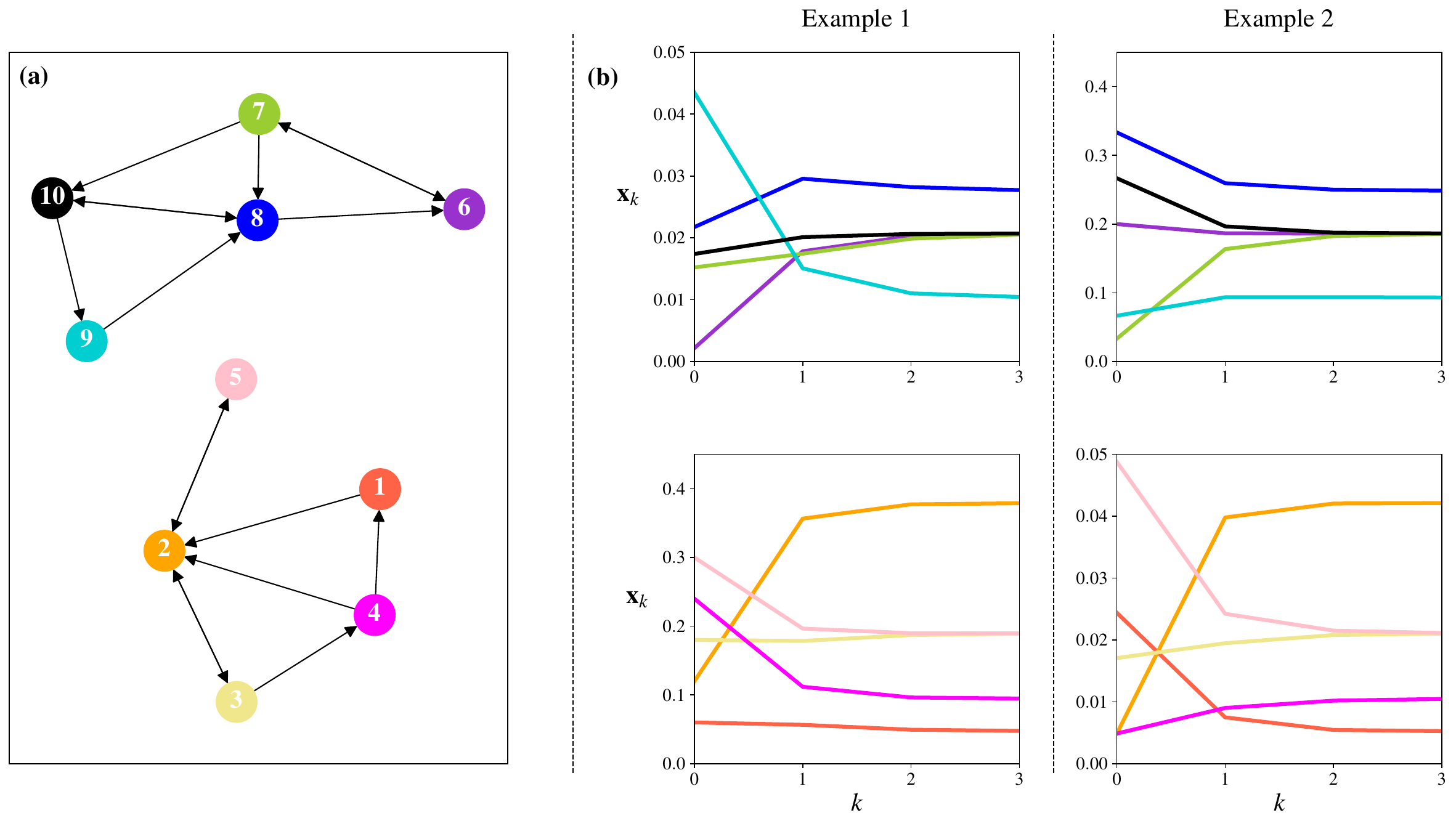}
        \caption{A numerical illustration of the converge for the Diagonal case in Theorem \ref{thm:diagonal-case}.}
        \label{fig:diagonal-case}
    \end{figure}
    
In the sequel, we will make use of the following remark concerning the matrix norm.\\

\begin{rmk}\label{rmk:operator-norm}
Given a square matrix $Q$ of order $N$, the matrix norm $\|Q\|_\infty=\max_{i=1,\dots,N}\sum_j|a_{ij}|$ coincides with the operator norm $\|Q\|=\max_{\|x\|_1=1}\|Qx\|_1$ associated to $\|\cdot\|_1$  (see \cite{Horn-Johnson}, Definition 5.6.1 and Example 5.6.5). Also  if we define the operator $\psi(x)\doteqdot x^TQ$, then the inequality  $\|\psi x\|_1\le \|\psi\|\ \|x\|_1$ becomes $\|x^TQ\|_1\le \|x^T\|_1\|Q\|_\infty=\|x^T\|_1\|Q\|$. In particular, if $Q$ is row-stochastic this means that $\|x^TQ\|_1\le \|x^T\|_1$.\\
\end{rmk}

Now, we proceed with the \textbf{Zero-block column case}. In contrast to Theorem \ref{thm:diagonal-case}, in this case, the iteration of the personalization vector converges to a vector whose first $n_1$-entries are equal to zero, while the remaining $n_2$-entries are the left-hand Perron vector $\bold{c}_2\in\R^{n_2\times1}$ of the row-normalization $P_{A_2}$, which corresponds to the dangling cluster part of the network. More precisely,\\

\begin{thm}\label{thm:zero-case}
Let $A$ be a non-negative reducible square matrix of order $N$ of the form
\begin{equation*}
    A=\left(\begin{array}{c|c} 
    \bold{0} & B \\ 
    \hline
    \bold{0} & A_2 \end{array}\right),
\end{equation*}
where $A_2$ is a non-negative and irreducible square matrix of order $n_2$ and $B$ is a non-negative matrix of size $n_1\times n_2$, with $n_1+n_2=N$. Let $P_{B}$ and $P_{A_2}$ be the row-normalization of matrices $B$ and $A_2$, respectively, defined in Section~\ref{Section:notation}. Then for any personalization vector $\bold{v}\in\R^{N\times 1}$ with $\|\bold{v}\|_1=1$, the recursive sequence $\{\bold{x}_k\}_{k\geq0}\subset\R^{N\times1}$, with $\bold{x}_0=\bold{v}$, defined as 
\begin{equation*}
 \bold{x}_k^{T}=\bold{x}_{k-1}^{T}X(\lambda)=\bold{v}^TX(\lambda)^k\,,\quad (k\geq 1),
\end{equation*}
converges to the vector $\bold{x}^T=(\enspace\bold{0}_{n_1\times 1}^T \quad | \quad \bold{c}_2^T\enspace)$, where $\bold{c}_2\in\R^{n_2\times1}$ is the left-hand Perron vector of the matrix $P_{A_2}$.
\end{thm}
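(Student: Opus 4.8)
The plan is to follow the block-matrix strategy of Theorem~\ref{thm:diagonal-case}, now exploiting the upper-triangular (rather than diagonal) structure. First I would record the row-normalization of $A$, which inherits the zero block-column,
\[
P_A=\left(\begin{array}{c|c} \bold{0} & P_B \\ \hline \bold{0} & P_{A_2}\end{array}\right),
\]
where $P_B$ and $P_{A_2}$ are row-stochastic; here the no-dangling-node hypothesis is precisely what guarantees that each of the first $n_1$ rows has positive out-degree, so that $P_B$ is genuinely the row-stochastic row-normalization of $B$. Applying the block upper-triangular inversion formula to $I_N-\lambda P_A$ then gives
\[
X(\lambda)=\left(\begin{array}{c|c}(1-\lambda)I_{n_1} & Y \\ \hline \bold{0} & X_2(\lambda)\end{array}\right),\qquad Y=\lambda P_B\,X_2(\lambda),
\]
with $X_2(\lambda)=(1-\lambda)(I_{n_2}-\lambda P_{A_2})^{-1}$ positive, irreducible and row-stochastic by Lemma~\ref{lemma:resolvent-row-stochastic}.

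Next I would raise this to the $k$-th power, which preserves the triangular shape,
\[
X(\lambda)^k=\left(\begin{array}{c|c}(1-\lambda)^k I_{n_1} & Y_k \\ \hline \bold{0} & X_2(\lambda)^k\end{array}\right),
\]
where the off-diagonal block satisfies the recursion $Y_k=(1-\lambda)^{k-1}Y+Y_{k-1}X_2(\lambda)$, hence the closed form $Y_k=\sum_{j=0}^{k-1}(1-\lambda)^{k-1-j}\,Y\,X_2(\lambda)^j$. Splitting $\bold{v}^T=(\bold{v}_1^T\mid\bold{v}_2^T)$ with $\bold{v}_1\in\R^{n_1\times1}$ and $\bold{v}_2\in\R^{n_2\times1}$, the iterate becomes
\[
\bold{x}_k^T=\left(\;(1-\lambda)^k\bold{v}_1^T\;\Big|\;\bold{v}_1^T Y_k+\bold{v}_2^T X_2(\lambda)^k\;\right).
\]
Since $0<1-\lambda<1$, the first block tends to $\bold{0}_{n_1\times1}^T$, and by Theorem~\ref{thm:main-theorem} applied to the irreducible matrix $P_{A_2}$ the term $\bold{v}_2^T X_2(\lambda)^k$ converges to $\|\bold{v}_2\|_1\,\bold{c}_2^T$.

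The crux is the limit of $\bold{v}_1^T Y_k$. Since $X_2(\lambda)$ is positive, each of its rows is driven to $\bold{c}_2^T$ by the Power Iteration argument behind Theorem~\ref{thm:main-theorem}, so $X_2(\lambda)^j\to L_2:=\bold{e}\,\bold{c}_2^T$; I would write $X_2(\lambda)^j=L_2+E_j$ with $E_j\to\bold{0}$. The geometric weights satisfy $\sum_{j=0}^{k-1}(1-\lambda)^{k-1-j}\to 1/\lambda$, while $X_2(\lambda)\bold{e}=\bold{e}$ and $P_B\bold{e}=\bold{e}$ give $Y L_2=\lambda\,\bold{e}\,\bold{c}_2^T$ (the $n_1\times n_2$ matrix with every row $\lambda\bold{c}_2^T$); hence the $L_2$-part of $Y_k$ converges to $\bold{e}\,\bold{c}_2^T$. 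The remaining term $\sum_{j=0}^{k-1}(1-\lambda)^{k-1-j}\,Y E_j$ is the convolution of a summable geometric sequence against a null sequence, so it vanishes: splitting the sum at a large index $J$, the tail is bounded by $\tfrac1\lambda\max_{j>J}\|Y E_j\|$ using the row-stochastic operator-norm estimate of Remark~\ref{rmk:operator-norm}, while the finitely many head terms carry factors $(1-\lambda)^{k-1-j}\to0$. Thus $Y_k\to\bold{e}\,\bold{c}_2^T$ and $\bold{v}_1^T Y_k\to\|\bold{v}_1\|_1\,\bold{c}_2^T$. Adding the two contributions and using $\|\bold{v}_1\|_1+\|\bold{v}_2\|_1=1$, the second block converges to $\bold{c}_2^T$, giving $\bold{x}_k^T\to(\,\bold{0}_{n_1\times1}^T\mid\bold{c}_2^T\,)$.

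I expect this convolution estimate to be the only delicate point; the rest is a direct transcription of the triangular block algebra. As a conceptual check (and an alternative route) I would note that the eigenvalues of $X(\lambda)$ are $1-\lambda$, of modulus $<1$, together with those of $X_2(\lambda)$, whose spectral radius $1$ is simple; therefore $X(\lambda)^k$ converges to the spectral projector onto the $1$-eigenspace, spanned on the right by $\bold{e}$ and on the left by $(\bold{0}^T\mid\bold{c}_2^T)$, i.e. to $\bold{e}\,(\bold{0}^T\mid\bold{c}_2^T)$, and $\bold{v}^T\bold{e}=1$ reproduces the same limit. Finally, since $X(\lambda)$ is row-stochastic we have $\|\bold{x}_k\|_1=1$ for every $k$, confirming that the mass drained from the vanishing first block is exactly what accumulates on the dangling cluster $A_2$.
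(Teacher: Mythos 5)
Your proof is correct, and although it starts from the same block algebra as the paper (identical row-normalization, identical block upper-triangular inverse, and your closed form $Y_k=\sum_{j=0}^{k-1}(1-\lambda)^{k-1-j}\,Y\,X_2(\lambda)^j$ is precisely the paper's $\lambda\sum_{i=1}^{k}(1-\lambda)^{k-i}P_BX_2^{i}(\lambda)$), the key analytic step is genuinely different. The paper never computes the limit of the off-diagonal block: it restarts the iteration at an index $K$ where $\Vert(\bold{x}_K)_1\Vert_1<\varepsilon$, then uses the row-stochastic norm estimate of Remark \ref{rmk:operator-norm} to show that the entire cross term has $1$-norm below $\varepsilon$, so that the second block is, up to small errors, $(\bold{x}_K)_2^TX_2^{k-K}(\lambda)\to\Vert(\bold{x}_K)_2\Vert_1\bold{c}_2^T$ with $1-\varepsilon<\Vert(\bold{x}_K)_2\Vert_1\le 1$. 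You instead identify the cross term's limit exactly: decomposing $X_2(\lambda)^j=\bold{e}\bold{c}_2^T+E_j$ and disposing of the convolution of the geometric weights against the null sequence $E_j$, you obtain $Y_k\to\bold{e}\bold{c}_2^T$, hence $\bold{v}_1^TY_k\to\Vert\bold{v}_1\Vert_1\bold{c}_2^T$, and the two pieces add up to $\bold{c}_2^T$. Your route is more informative: it yields the full matrix limit $X(\lambda)^k\to\bold{e}\,(\enspace\bold{0}_{n_1\times1}^T\;|\;\bold{c}_2^T\enspace)$, exhibits explicitly that the mass draining out of the first block is deposited on the dangling cluster in the proportions of $\bold{c}_2$, and your spectral-projector remark (eigenvalue $1$ simple, all other eigenvalues strictly inside the unit disc) is a valid independent proof. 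The price is a slightly stronger ingredient: the matrix-level limit $X_2(\lambda)^j\to\bold{e}\bold{c}_2^T$ does not follow verbatim from Theorem \ref{thm:main-theorem}, whose personalization vectors are positive, so you should invoke the positive-matrix power iteration of \cite[Theorem 8.2.8]{Horn-Johnson} applied row by row to the non-negative standard basis vectors. Conversely, the paper's cruder restart-and-bound trick is exactly the argument it recycles, with the auxiliary norm $\Vert\cdot\Vert_\rho$, in the general reducible case of Theorem \ref{thm:non-zero-case}, where the $(1,1)$ block of $X(\lambda)$ is no longer the scalar matrix $(1-\lambda)I_{n_1}$ and an exact cross-term limit would require noticeably more computation.
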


\begin{proof}
Let $P_A$ be the row-stochastic reducible matrix obtained from $A$ by row-normalization.
$$P_A=\left(\begin{array}{c|c}
\bold{0} & P_B \\ 
\hline
\bold{0} & P_{A_2} \end{array}\right).$$
 Notice that the ${1,2}$-block of $P_A$ coincides with $P_B$, the row-normalization of $B$, while  the ${2,2}$-block of $P_A$   coincides with $P_{A_2}$, the row-normalization of $A_2$,  and thus both $P_{B}$ and $P_{A_2}$ clearly are row-stochastic. As it was mentioned in Section \ref{Section:notation}, the Google matrix $G$ without dangling nodes defined as 
\begin{equation*}
G=\lambda P_A+(1-\lambda)\bold{ev}^T\in\mathbb{R}^{N\times N}
\end{equation*} 
is a row-stochastic and positive square matrix for which there exists a unique positive vector $\pi\in\R^{N\times1}$ (called PageRank vector), with $\pi^T\bold{e}=1$, satisfying the equation
\begin{equation}\label{eq:pagerank-zero-block-case}    \pi^T=(1-\lambda)\mathbf{v}^T(I_N-\lambda P_A)^{-1}=\mathbf{v}^TX(\lambda),
\end{equation}
where $X(\lambda)=(1-\lambda)(I_N-\lambda P_A)^{-1}$. A straightforward computation shows that the inverse of the reducible  matrix $I_N-\lambda P_A$ is the reducible matrix 
\begin{equation*}
(I_N-\lambda P_A)^{-1}
=\left(\begin{array}{c|c} 
I_{n_1} & \lambda P_B(I_{n_2}-\lambda P_{A_2})^{-1} \\
\hline
\bold{0} & (I_{n_2}-\lambda P_{A_2})^{-1} 
\end{array}\right).\\
\end{equation*}
Therefore, in this zero-block column case, equation (\ref{eq:pagerank-zero-block-case}) can be rewritten as follows
\begin{align*}
    \pi^T=\bold{v}^T X(\lambda)
    =\bold{v}^T(1-\lambda)(I_N-\lambda P_A)^{-1}
    &=\bold{v}^T(1-\lambda)\left(\begin{array}{c|c} 
    I_{n_1} & \lambda P_B(I_{n_2}-\lambda P_{A_2})^{-1} \\
    \hline
    \bold{0} & (I_{n_2}-\lambda P_{A_2})^{-1} 
    \end{array}\right)\\
    &=\bold{v}^T\left(\begin{array}{c|c}
    (1-\lambda)I_{n_1} & \lambda(1-\lambda)P_B(I_{n_2}-\lambda P_{A_2})^{-1} \\
   \hline
    \bold{0} & (1-\lambda)(I_{n_2}-\lambda P_{A_2})^{-1} 
    \end{array}\right)\\
    &=\bold{v}^T\left(\begin{array}{c|c} (1-\lambda)I_{n_1}  & \lambda P_B X_2(\lambda)  \\
    \hline
     \bold{0} & X_2(\lambda) \end{array}\right),
\end{align*}
where $X_2(\lambda) =(1-\lambda)(I_{n_2}-\lambda P_{A_2})^{-1}$. Now, let us consider a recursive sequence of vectors $\{\bold{x}_k^T\}_{k\geq0}$ defined by
\begin{equation}\label{eq:sequence-xk-zero-case}
 \bold{x}_k^{T}=\bold{x}_{k-1}^{T}X(\lambda)=\bold{x}_0^TX(\lambda)^k\,,\quad (k\geq 1)\,,\,\quad\text{with}\quad \, \bold{x}_0=\bold{v}>0\quad  \text{and}\quad \Vert \bold{v}\Vert_1=1.   
\end{equation}
Let $\bold{v}^T=(\enspace\bold{v}_1^T\quad|\quad \bold{v}_2^T\enspace)$ be a decomposition of vector $\bold{v}$ where $\bold{v}_1\in\R^{n_1\times1}$ and $\bold{v}_2\in\R^{n_2\times1}$.
Since
$$X^k(\lambda)
=\left(\begin{array}{c|c} 
(1-\lambda)I_{n_1}  & \lambda P_B X_2(\lambda)  \\
\hline
\bold{0} & X_2(\lambda) \end{array}\right)^k
=\left(\begin{array}{c|c} 
(1-\lambda)^k I_{n_1}  & \lambda\sum_{i=1}^{k} (1-\lambda)^{k-i} P_BX_2^{i}(\lambda)  \\
\hline
\bold{0} & X_2(\lambda)^k \end{array}\right),\quad (k\geq1).$$
we get
\begin{align*}
    \bold{x}_k^T=\bold{v}^TX^{k}(\lambda)
    &=(\enspace\bold{v}_1^T\quad |\quad \bold{v}_2^T \enspace)\left(
    \begin{array}{c|c} 
    (1-\lambda)^k I_{n_1}  & \lambda\sum_{i=1}^{k} (1-\lambda)^{k-i} P_BX_2^{i}(\lambda)  \\
    \hline
    \bold{0} & X_2(\lambda)^k \end{array}\right)\\\nonumber
     &\\
    &=\left(\enspace(1-\lambda)^k\bold{v}^T_1 \quad \Big{|} \quad \lambda \bold{v}^T_1\sum_{i=1}^{k} (1-\lambda)^{k-i} P_BX_2^{i}(\lambda) + \bold{v}^T_2 X^{k}_2(\lambda)\enspace\right),\quad (k\geq1).
\end{align*}
That is, the first $n_1$ entries of the  decomposition $\bold{x}_k^T=(\enspace(\bold{x}_k)^T_1\quad |\quad(\bold{x}_k)^T_2\enspace)$ are obtained from the expression $(\bold{x}_k)_1^T=(1-\lambda)^k\bold{v}^T_1$, while the remaining $n_2=N-n_1$ entries satisfy $(\bold{x}_k)_2^T=\lambda \bold{v}^T_1\sum_{i=1}^{k} (1-\lambda)^{k-i} P_BX_2^{i}(\lambda) + \bold{v}^T_2 X^{k}_2(\lambda)$.\\

Alternatively, the vector sequence in  (\ref{eq:sequence-xk-zero-case}) can be recursively expressed  as follows
\begin{align}
\label{eq:expression-pi-k-recursively-CASE2}\nonumber
    \bold{x}_{k+1}^T=\bold{x}_{k}^TX(\lambda)
    &=(\enspace(\bold{x}_k)_1^T\quad | \quad (\bold{x}_k)_2^T\enspace)\left(\begin{array}{c|c} 
    (1-\lambda)I_{n_1}  & \lambda P_B X_2(\lambda)\\
    \hline
    \bold{0} & X_2(\lambda) \end{array}\right)\\ \nonumber
    \\
    &=\left(\enspace(1-\lambda)(\bold{x}_k)_1^T \quad | \quad \lambda(\bold{x}_k)_1^TP_B X_2(\lambda) + (\bold{x}_k)_2^T X_2(\lambda)\enspace\right).
\end{align}
Observe that, since $\lambda\in (0,1)$, we have $\displaystyle\lim_{k\to\infty}(1-\lambda)^k\bold{v}^T_1=\bold{0}_{n_1\times1}^T$. Thus, for a given $\varepsilon>0$, there exists a positive integer $K\geq1$ such that for every $k\geq K$ the condition 
\begin{equation}\label{eq:bound1-zero-case}
\left\Vert (\bold{x}_k^T)_1\right\Vert_1=\left\Vert (1-\lambda)^k\bold{v}^T_1\right\Vert_1<\varepsilon   
\end{equation} holds. Then, from equation (\ref{eq:expression-pi-k-recursively-CASE2}), for every $k\geq K$ we can write the vector $\bold{x}_{k}^T$ as follows  
\begin{equation}\label{eq:recursive-expression}
\bold{x}_{k}^T=\bold{x}_{K}^TX^{k-K}(\lambda)=
\left(\enspace (1-\lambda)^{k-K}(\bold{x}_K)_1^T\quad \Big{|}\quad \lambda(\bold{x}_K)_1^T\displaystyle\sum_{i=1}^{k-K}(1-\lambda)^{k-K-i} P_BX_2^{i}(\lambda) + (\bold{x}_K)_2^T X_2^{k-K}(\lambda)\enspace\right).         
\end{equation}
That is, for $k\ge K$, the vector $\bold{x}_k^T$ can be written in terms of  $\bold{x}_K^T$. 
Now, we compute a bound for the $n_2$ components of the vector $\bold{x}_k^T$, that is, the first half of the right-hand part of the vector in equation (\ref{eq:recursive-expression}). Since $P_B$ and $X_{2}$ are row-stochastic matrices and $\lambda\in(0,1)$, we can use Remark \ref{rmk:operator-norm} to obtain 
\begin{align}\label{eq:bound2-zero-case}
\left\Vert \lambda(\bold{x}_K)^T_1\sum_{i=1}^{k-K}(1-\lambda)^{k-K-i} P_BX_2^{i}(\lambda)\right\Vert_1\nonumber
&\leq \lambda \left\Vert (\bold{x}_K)^T_1\right\Vert_1 \sum_{i=1}^{k-K} (1-\lambda)^{k-K-i} \Vert P_B\Vert \,\,\Vert X_2^{i}(\lambda)\Vert
=\lambda \left\Vert (\bold{x}_K)^T_1\right\Vert_1 \sum_{i=1}^{k-K}(1-\lambda)^{k-K-i}\\\nonumber
&=\lambda \left\Vert (\bold{x}_K)^T_1\right\Vert_1 \sum_{j=0}^{k-K-1} (1-\lambda)^j
\leq\lambda \left\Vert (\bold{x}_K)^T_1\right\Vert_1 \sum_{j=0}^{\infty} (1-\lambda)^j
=\lambda \left\Vert (\bold{x}_K)^T_1\right\Vert_1 \frac{1}{1-(1-\lambda)}\\
&=\lambda \left\Vert (\bold{x}_K)^T_1\right\Vert_1 \frac{1}{\lambda}=\left\Vert (\bold{x}_K)^T_1\right\Vert_1 <\varepsilon.
\end{align}
By Remark \ref{rmk:resolvent-isometry}, the sequence of vectors $\{\bold{x}_k^T\}_{k\geq0}$ defined in equation (\ref{eq:sequence-xk-zero-case}) satisfies    $\Vert \bold{x}_k\Vert_1=1$, for all $k\geq0$. 
In addition, from equation (\ref{eq:bound1-zero-case}) follows $1-\varepsilon<\Vert (\bold{x}_K)_2\Vert_1\leq 1$, since $X(\lambda)$ is row-stochastic. Now, for a fixed $\lambda\in (0,1)$ we apply Theorem \ref{thm:main-theorem} to the (non-negative irreducible) matrix $A_2$  and the index $K$ above
\begin{equation*}
    \lim_{k\to\infty}(\bold{x}_K)_2^TX_2^k(\lambda)=
    \Vert (\bold{x}_K)_2^T\Vert_1 \lim_{k\to\infty}\frac{(\bold{x}_K)_2^T}{\Vert (\bold{x}_K)_2^T\Vert_1}X_2^k(\lambda)=\Vert (\bold{x}_K)_2^T\Vert_1 \bold{c}_2^T,
\end{equation*}
where $\bold{c}_2\in\R^{n_2\times1}$ is the left-hand Perron vector for $P_{A_2}$. Therefore, for a given $\varepsilon>0$ and for the fixed vector $(\bold{x}_K)_2\in\R^{n_2\times1}$ with index $K$ is provided by the condition in equation (\ref{eq:bound1-zero-case}), there is a positive integer $L>1$ with $L>K$ such that for every $k>L$ we have 
\begin{equation}\label{eq:bound3-zero-case}
\left\Vert\enspace (\bold{x}_K)_2^T X_2^{k-K}(\lambda) -\Vert (\bold{x}_K)_2^T\Vert_1 \bold{c}_2^T\enspace \right\Vert_1<\varepsilon\qquad \text{with}\qquad 1-\varepsilon<\Vert (\bold{x}_K)_2\Vert_1\leq 1.
\end{equation}
In summary, combining equations (\ref{eq:bound1-zero-case}), (\ref{eq:bound2-zero-case}) and (\ref{eq:bound3-zero-case}) with the fact that $\lim_{\varepsilon\to0^+}\Vert (\bold{x}_K)_2\Vert_1=1$, we have proved that for every $\varepsilon>0$, there exists a positive integer $k$ large enough such that the estimates

\begin{equation}\label{Zerocase}
\left\Vert(\bold{x}_k^T)_1\right\Vert_1<\varepsilon \qquad \text{and}\qquad \left\Vert\enspace (\bold{x}_k)_2^T - \bold{c}_2^T\enspace \right\Vert_1<\varepsilon\quad
\end{equation} 
hold for the vector decomposition $\bold{x}_k^T=(\enspace (\bold{x}_k)_1^T\quad | \quad  (\bold{x}_k)_2^T\enspace)$. Therefore, the sequence $\{\bold{x}_k\}_{k\geq0}$ in equation (\ref{eq:sequence-xk-zero-case}) converges to a non-zero vector $\bold{x}\in\R^{N\times 1}$ with $\Vert\bold{x}\Vert_1=1$ such that
\begin{equation*}
    \bold{x}^T=\lim_{k\to\infty}\bold{v}^TX^{k}(\lambda)
    =\lim_{k\to\infty}\bold{x}_k^T
    =\lim_{k\to\infty}\left(\enspace (\bold{x}_k)_1^T\quad|\quad (\bold{x}_k)_2^T \enspace\right)
    =(\enspace \bold{0}_{n_1\times 1}^T \quad | \quad \bold{c}_2^T\enspace),
\end{equation*}
where $\bold{c}_2\in \R^{n_2\times1}$ is the left-hand Perron vector associated with the non-negative and irreducible square matrix $P_{A_2}$.

\end{proof}

In this case, Theorem~\ref{thm:diagonal-case} shows that if we consider a non-strongly connected directed graph $\mathcal{G}=(V,E)$ with a strongly connected component and some source nodes, then, for any damping factor $\lambda\in (0,1)$,  PageRank has no fixed points as a function $PR_\lambda:\,~\Delta_N^+\,~\to\,~\Delta_N^+$ , but it has a unique fixed point if we consider it as a function $PR_\lambda: \Delta_N\longrightarrow \Delta_N$ where
\[
\Delta_N=\left\{\bold{x}=(x_1,\cdots,x_N)^T\in\R^{N\times1}\,:\,\enspace x_1+\cdots+x_N=1,\, x_i\ge 0,\, {\text{for all }}1\le i\le N \,\right\},
\]
as the following result shows.\\

\begin{coro}\label{cor:source-connected}
Let $\mathcal{G}=(V,E)$ be a non-strongly connected directed graph  such that it has only one strongly connected component $\mathcal{G}_2$ and the rest nodes are source nodes. Then, for every damping factor $\lambda\in (0,1)$, $PR_\lambda: \Delta_N^+\longrightarrow \Delta_N^+$ has no fixed points but  $PR_\lambda: \Delta_N\longrightarrow \Delta_N$ has a unique fixed point. Furthermore, this fixed point is of the form $\bold{x}^T=(\enspace\bold{0}_{n_1\times 1}^T \quad | \quad \bold{c}_2^T\enspace)$, where $\bold{c}_2\in\R^{n_2\times1}$ is the left-hand Perron vector of the row-normalization square matrix $P_{A_2}$ of order $n_2$ associated to the component $\mathcal{G}_2$.\\
\end{coro}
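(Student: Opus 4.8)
The plan is to recognize that the graph-theoretic hypotheses place $A$ (after a suitable relabeling of nodes) precisely in the \textbf{Zero-block column case} of Theorem~\ref{thm:zero-case}, and then to read the fixed points off the block structure of $X(\lambda)$. First I would translate the hypotheses into matrix form. Since the $n_1$ source nodes have no inlinks, no edge can point into them: in particular there is no edge among the source nodes themselves (it would create an inlink) and no edge from the component $\mathcal{G}_2$ back into them. As $\mathcal{G}$ has no dangling nodes, every source node carries at least one outlink, and that link can only land in $\mathcal{G}_2$. Hence, after permuting,
\[
A=\left(\begin{array}{c|c}\bold{0} & B\\ \hline \bold{0} & A_2\end{array}\right),
\]
with $A_2$ irreducible (the single strongly connected component) and $B\neq\bold{0}$, which is exactly the setting of Theorem~\ref{thm:zero-case}. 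From its proof I would import the block form
\[
X(\lambda)=\left(\begin{array}{c|c}(1-\lambda)I_{n_1} & \lambda P_B X_2(\lambda)\\ \hline \bold{0} & X_2(\lambda)\end{array}\right),\qquad X_2(\lambda)=(1-\lambda)(I_{n_2}-\lambda P_{A_2})^{-1}.
\]

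Next I would characterize fixed points. Recalling from (\ref{eq:pi-vector}) that $PR_\lambda(\bold{v})^T=\bold{v}^TX(\lambda)$, a fixed point is exactly a probability vector with $\bold{v}^TX(\lambda)=\bold{v}^T$. Writing $\bold{v}^T=(\,\bold{v}_1^T\mid\bold{v}_2^T\,)$ and using the block form, the first block of this equation reads $(1-\lambda)\bold{v}_1^T=\bold{v}_1^T$; since $\lambda\in(0,1)$ this forces $\bold{v}_1=\bold{0}$. This single observation already settles the non-existence claim: any fixed point must have its first $n_1$ entries equal to zero, which is incompatible with strict positivity, so $PR_\lambda:\Delta_N^+\to\Delta_N^+$ has no fixed point.

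Finally I would establish existence and uniqueness in the closed simplex. Setting $\bold{v}_1=\bold{0}$, the second block collapses to $\bold{v}_2^TX_2(\lambda)=\bold{v}_2^T$, i.e. $\bold{v}_2$ is a nonnegative left eigenvector of $X_2(\lambda)$ for the eigenvalue $1$. Since $A_2$ is irreducible, Lemma~\ref{lemma:resolvent-row-stochastic} makes $X_2(\lambda)$ positive, irreducible and row-stochastic, so by Perron--Frobenius its only nonnegative normalized left eigenvector is the left-hand Perron vector, which (as in the proof of Theorem~\ref{thm:main-theorem}) coincides with $\bold{c}_2$, the left-hand Perron vector of $P_{A_2}$. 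The normalization $\|\bold{v}\|_1=\|\bold{v}_2\|_1=1$ then pins $\bold{v}_2=\bold{c}_2$, so $\bold{x}^T=(\,\bold{0}_{n_1\times1}^T\mid\bold{c}_2^T\,)$ is the unique element of $\Delta_N$ solving $\bold{x}^TX(\lambda)=\bold{x}^T$; since $\bold{c}_2>0$ it does lie in $\Delta_N$, and a direct substitution confirms it is a genuine fixed point.

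The block algebra and the Perron--Frobenius bookkeeping are routine; the one step needing genuine care is the first, namely justifying that \emph{one strongly connected component plus source nodes, with no dangling nodes} forces the clean structure with zero diagonal block, zero lower block, and $B\neq\bold{0}$ — everything downstream is then an immediate reading of the two block equations. It is also worth flagging that existence comes essentially for free from Theorem~\ref{thm:zero-case}: its limit vector $(\,\bold{0}^T\mid\bold{c}_2^T\,)$ is automatically $X(\lambda)$-invariant (being a limit of the iterates $\bold{v}^TX(\lambda)^k$), hence a fixed point; so the only truly new content of this corollary is the \emph{uniqueness} in $\Delta_N$ and the \emph{non-existence} in the open simplex $\Delta_N^+$, both of which the $\bold{v}_1=\bold{0}$ observation delivers at once.
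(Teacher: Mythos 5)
Your proposal is correct, and it reaches the conclusion by a route that differs in a useful way from the paper's. The paper gives no separate proof of this corollary: it is presented as a direct reinterpretation of Theorem~\ref{thm:zero-case}, so the implicit argument is dynamical --- every iterate sequence $\bold{v}^TX(\lambda)^k$ converges to $(\,\bold{0}_{n_1\times 1}^T \mid \bold{c}_2^T\,)$, hence any fixed point, being a constant orbit, must equal that limit, and the limit is itself a fixed point by continuity of the iteration. Your argument is static: you solve the eigenvector equation $\bold{v}^TX(\lambda)=\bold{v}^T$ blockwise, extracting $\bold{v}_1=\bold{0}$ from $(1-\lambda)\bold{v}_1^T=\bold{v}_1^T$ and then $\bold{v}_2=\bold{c}_2$ from Perron--Frobenius applied to the positive, irreducible, row-stochastic matrix $X_2(\lambda)$ (Lemma~\ref{lemma:resolvent-row-stochastic}), via the eigenvector correspondence between $X_2(\lambda)$ and $P_{A_2}$ established in the proof of Theorem~\ref{thm:main-theorem}. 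Both arguments are sound, and your graph-to-matrix translation (sources receive no inlinks from anywhere, forcing the zero diagonal and lower blocks; absence of dangling nodes forces every row of $B$ to be nonzero, so $P_B$ is well defined) is exactly the step the paper leaves tacit. Your approach buys one genuine advantage: the dynamical route, read literally, requires applying Theorem~\ref{thm:zero-case} to initial vectors on the boundary of the simplex (a candidate fixed point in $\Delta_N$ has zero entries), whereas that theorem is stated for personalization vectors, which in the paper's conventions are strictly positive; one must separately note that the power-iteration argument survives for nonnegative nonzero initial data. Your blockwise computation never iterates, so it sidesteps this issue and delivers non-existence in $\Delta_N^+$ and uniqueness in $\Delta_N$ in one stroke. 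What the paper's route buys in exchange is uniformity: the same convergence theorem, used as a black box, also yields Corollary~\ref{cor:twoConnectedComponents} and the general Corollary~\ref{cor:genral-case} with no additional algebra.
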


Finally, we present the \textbf{General reducible case} in which, as in the previous result, the iteration of the personalization vector converges to the vector $(\enspace\bold{0}_{n_1\times 1}^T \quad | \quad \bold{c}_2^T\enspace)$, where $\bold{c}_2\in\R^{n_2\times1}$ is the left-hand Perron vector of the matrix $P_{A_2}$.  We remark that although the arguments to show that the first $n_1$-entries tend to zero are only slightly different from  the ones  in Theorem \ref{thm:zero-case}, the conclusion for the remaining entries requires quite a harder work. Before stating this result,  recall that for real matrices $A=(a_{ij})_{i,j}$ and $B=(b_{ij})_{i,j}$ of the same order, the inequality $A\geq B$ means that $a_{ij}\geq b_{ij}$ for all $i,j$.\\

Now, we are in position to state the convergence result for the \textbf{General reducible case}.\\

\begin{thm}\label{thm:non-zero-case}
Let $A$ be a non-negative reducible square matrix of order $N$ of the form
\begin{equation*}
    A=\left(\begin{array}{c|c} 
    A_1 & B \\ 
    \hline
    \bold{0} & A_2 \end{array}\right),
\end{equation*}
where $A_1$ and $A_2$ are non-negative irreducible square matrices of order $n_1$ and $n_2$, respectively, and $B$ is a non-negative, different from zero, matrix of size $n_1\times n_2$, with $n_1+n_2=N$. Let $P_{A}$ and $P_{A_2}$ be the row-normalization of matrices $A$ and $A_2$, respectively. 
For a fixed $\lambda\in (0,1)$, let consider $X(\lambda)=(1-\lambda)(I_N-\lambda P_A)^{-1}$. Then for any personalization vector $\bold{v}\in\R^{N\times 1}$ with $\|\bold{v}\|_1=1$, the recursive sequence $\{\bold{x}_k\}_{k\geq0}\subset\R^{N\times1}$, with $\bold{x}_0=\bold{v}$, defined as 
\begin{equation*}
 \bold{x}_k^{T}=\bold{x}_{k-1}^{T}X(\lambda)=\bold{v}^TX(\lambda)^k\,,\quad (k\geq 1),
\end{equation*}
converges to the vector $\bold{x}^T=(\enspace\bold{0}_{n_1\times1}^T \quad | \quad \bold{c}_2^T\enspace)$, where $\bold{c}_2\in\R^{n_2\times1}$ is the left-hand Perron vector of the matrix $P_{A_2}$.
\end{thm}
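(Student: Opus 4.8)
The plan is to reduce the problem to the block‑triangular structure already exploited in Theorem~\ref{thm:zero-case}, the essential new ingredient being that the $(1,1)$-block of $P_A$ is now a \emph{strictly sub-stochastic} irreducible matrix. First I would write the row-normalization in block form
\begin{equation*}
P_A=\left(\begin{array}{c|c} P_1 & C \\ \hline \bold{0} & P_{A_2}\end{array}\right),
\end{equation*}
where $P_1\in\R^{n_1\times n_1}$ is $A_1$ divided row-wise by the \emph{full} out-degrees of the first $n_1$ nodes (which now include the out-links recorded in $B$), $C\in\R^{n_1\times n_2}$ is the corresponding normalization of $B$, and the $(2,2)$-block is exactly $P_{A_2}$ because the nodes of $A_2$ have no out-links leaving their block. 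Since $A_1$ is irreducible, $P_1$ keeps the same zero pattern and is therefore irreducible (Lemma~\ref{lemma:strongly-connected-irreducible}); moreover $P_1\bold{e}\le\bold{e}$ with strict inequality in every row meeting a non-zero row of $B$, so $P_1$ is row-sub-stochastic with row sums that are not all equal.

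The key step is the spectral estimate $\rho(P_1)<1$. This follows from $\min_i\sum_j(P_1)_{ij}\le\rho(P_1)\le\max_i\sum_j(P_1)_{ij}$ (\cite[Theorem 8.1.22]{Horn-Johnson}) together with the fact that for an irreducible non-negative matrix equality on the right forces all row sums to coincide; as $B\neq\bold{0}$ they do not, whence $\rho(P_1)<\max_i\sum_j(P_1)_{ij}\le 1$. Using the block-triangular inverse I would then write
\begin{equation*}
X(\lambda)=\left(\begin{array}{c|c} X_1 & M \\ \hline \bold{0} & X_2\end{array}\right),\qquad X_1=(1-\lambda)(I_{n_1}-\lambda P_1)^{-1},\quad X_2=(1-\lambda)(I_{n_2}-\lambda P_{A_2})^{-1},
\end{equation*}
with $M=\tfrac{\lambda}{1-\lambda}X_1CX_2$. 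Since $\rho(P_1)<1$, the eigenvalues $(1-\lambda)(1-\lambda\beta)^{-1}$ of $X_1$ (with $\beta$ in the spectrum of $P_1$) all have modulus strictly below $1$, so $\rho(X_1)<1$; in fact, because $X_1\bold{e}=\bold{e}-M\bold{e}$ and $M$ is \emph{positive} (the resolvents $(I_{n_1}-\lambda P_1)^{-1}$ and $(I_{n_2}-\lambda P_{A_2})^{-1}$ are positive by irreducibility and $C\neq\bold{0}$), every row sum of $X_1$ is strictly below $1$ and hence $\Vert X_1\Vert_\infty<1$. This strict bound is the precise analogue of the elementary relation $X_1=(1-\lambda)I_{n_1}$ available in Theorem~\ref{thm:zero-case}, and it is where the extra work is concentrated.

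With this in hand the argument mirrors Theorem~\ref{thm:zero-case}. Writing $\bold{x}_k^T=(\,(\bold{x}_k)_1^T\mid(\bold{x}_k)_2^T\,)$ and $\bold{v}^T=(\,\bold{v}_1^T\mid\bold{v}_2^T\,)$, the block structure gives $(\bold{x}_k)_1^T=\bold{v}_1^TX_1^k$, so by Remark~\ref{rmk:operator-norm} and $\Vert X_1\Vert_\infty<1$ one has $\Vert(\bold{x}_k)_1\Vert_1\le\Vert\bold{v}_1\Vert_1\,\Vert X_1\Vert_\infty^{\,k}\to 0$, which settles the first $n_1$ entries. Since $\bold{x}_k\ge\bold{0}$ and $X(\lambda)$ is row-stochastic (Remark~\ref{rmk:resolvent-isometry}), $\Vert\bold{x}_k\Vert_1=1$, hence $\Vert(\bold{x}_k)_2\Vert_1\to 1$. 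For the second block I would freeze a large index $K$ and use $\bold{x}_k^T=\bold{x}_K^TX^{k-K}(\lambda)$, obtaining
\begin{equation*}
(\bold{x}_k)_2^T=(\bold{x}_K)_1^T\Big(\sum_{i=0}^{k-K-1}X_1^{\,k-K-1-i}MX_2^{\,i}\Big)+(\bold{x}_K)_2^TX_2^{\,k-K}.
\end{equation*}
The cross term is controlled uniformly in $k$ by Remark~\ref{rmk:operator-norm}, the row-stochasticity $\Vert X_2^i\Vert=1$, and $\Vert X_1^j\Vert\le\Vert X_1\Vert_\infty^{\,j}$, since
\begin{equation*}
\Big\Vert(\bold{x}_K)_1^T\sum_{i=0}^{k-K-1}X_1^{\,k-K-1-i}MX_2^{\,i}\Big\Vert_1\le\Vert(\bold{x}_K)_1\Vert_1\,\Vert M\Vert\sum_{j\ge 0}\Vert X_1\Vert_\infty^{\,j}=\Vert(\bold{x}_K)_1\Vert_1\,\frac{\Vert M\Vert}{1-\Vert X_1\Vert_\infty},
\end{equation*}
which is arbitrarily small once $K$ is large. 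Applying Theorem~\ref{thm:main-theorem} to the irreducible matrix $A_2$ gives $(\bold{x}_K)_2^TX_2^{\,k-K}\to\Vert(\bold{x}_K)_2\Vert_1\,\bold{c}_2^T$ as $k\to\infty$, with $\Vert(\bold{x}_K)_2\Vert_1\to 1$. Combining the two estimates exactly as in Theorem~\ref{thm:zero-case} shows that for every $\varepsilon>0$, $\Vert(\bold{x}_k)_2^T-\bold{c}_2^T\Vert_1<\varepsilon$ for all large $k$, so $\bold{x}_k^T\to(\,\bold{0}_{n_1\times1}^T\mid\bold{c}_2^T\,)$.

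I expect the main obstacle to be the spectral estimate $\rho(P_1)<1$ and the resulting strict bound $\Vert X_1\Vert_\infty<1$: these make the geometric series for the cross term summable and, conceptually, encode the fact that all of the $\ell^1$-mass initially sitting in the cluster $A_1$ is drained through $B$ into the dangling cluster $A_2$. Once that is secured, the remaining $\varepsilon$-bookkeeping is a routine adaptation of Theorem~\ref{thm:zero-case}, the only change being that the explicit factor $(1-\lambda)^k$ there is replaced by the operator bound $\Vert X_1\Vert_\infty^{\,k}$.
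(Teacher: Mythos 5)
Your proposal is correct and follows the same overall skeleton as the paper's proof (block-triangular decomposition of $P_A$ and of $X(\lambda)$, draining of the first block, geometric-series control of the cross term, then Theorem~\ref{thm:main-theorem} applied to the dangling block and the same $\varepsilon$-bookkeeping as in Theorem~\ref{thm:zero-case}), but the key analytic step is handled by a genuinely different and notably simpler device. The paper proves $\rho(Q_{A_1})<1$ by perturbing the sub-stochastic $(1,1)$-block to a stochastic matrix and invoking strict monotonicity of the spectral radius under entrywise domination for irreducible matrices; since that only yields a spectral-radius bound, the paper must then manufacture a special matrix norm $\Vert\cdot\Vert_\rho$ with $\Vert X_1(\lambda)\Vert_\rho\le\rho(X_1(\lambda))+\varepsilon<1$ (Horn--Johnson, Lemma 5.6.10) and drag norm-equivalence constants $a,b$ through the geometric-series estimate. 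You instead prove the stronger statement $\Vert X_1\Vert_\infty<1$ directly: $X(\lambda)$ is row-stochastic, its off-diagonal block $M=\tfrac{\lambda}{1-\lambda}X_1CX_2$ is entrywise positive (positivity of both resolvents by irreducibility, together with $C\neq\bold{0}$), hence $X_1\bold{e}=\bold{e}-M\bold{e}<\bold{e}$ in every entry. This makes the cross-term series summable in the natural operator norm of Remark~\ref{rmk:operator-norm}, eliminates the $\Vert\cdot\Vert_\rho$ and norm-equivalence machinery entirely, and yields the explicit decay $\Vert(\bold{x}_k)_1\Vert_1\le\Vert\bold{v}_1\Vert_1\Vert X_1\Vert_\infty^{\,k}$, which is the exact analogue of the factor $(1-\lambda)^k$ in Theorem~\ref{thm:zero-case}. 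What the paper's route buys is independence from the row-stochasticity of the ambient matrix (it works from the spectral data of the block alone); what your route buys is brevity and an effective, normwise rate.

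One sentence of your spectral estimate is wrong as stated, although the conclusion survives (and is in any case superseded by your norm bound $\Vert X_1\Vert_\infty<1$, which does not rely on it). You claim that $B\neq\bold{0}$ forces the row sums of $P_1$ not to coincide. That is not true: if every row of $B$ is non-zero, all row sums of $P_1$ can be equal --- for instance two nodes, each with exactly one link inside $A_1$ and one link into $B$, give $P_1$ with both row sums equal to $1/2$. The correct dichotomy is: either all row sums of $P_1$ coincide, in which case their common value equals $\rho(P_1)$ and is $<1$ because joint row-stochasticity and $B\neq\bold{0}$ force at least one row sum below $1$; or they do not coincide, and the strict form of the row-sum bounds for irreducible matrices gives $\rho(P_1)<\max_i\sum_j(P_1)_{ij}\le 1$. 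Either way $\rho(P_1)<1$, so the repair is immediate.
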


\begin{proof} 
Let $P_A$ be the row-stochastic reducible matrix obtained from $A$ by row-normalization. 
$$P_A=\left(\begin{array}{c|c} 
Q_{A_1} & Q_B \\
\hline
\bold{0} & P_{A_2} \end{array}\right).$$
Notice that $P_{A_2}$ is row-stochastic while matrices $Q_{A_1}$ and $Q_B$ are not, still they are  jointly row-stochastic, that is,
\begin{equation*}
\sum_{j=1}^N \left(\begin{array}{c|c}
Q_{A_1} & Q_B
\end{array}\right)
_{i,j}=\sum_{j=1}^{n_1}\left(Q_{A_1}\right)_{i,j}\,+\, \sum_{j=1}^{n_2}\left(Q_B\right)_{i,j}=1,\quad \text{for all } i=1,2,\dots, n_1.
\end{equation*}
As it was mentioned in Section \ref{Section:notation}, the Google matrix $G$ without dangling nodes defined as
\begin{equation}
G=\lambda P_A+(1-\lambda)\bold{ev}^T\in\mathbb{R}^{N\times N}
\end{equation} 
is a row-stochastic square matrix for which there exists a unique positive vector $\pi\in\R^{n\times1}$ (called PageRank vector), with $\pi^T\bold{e}=1$, satisfying the equation
\begin{equation}
    \pi^T=(1-\lambda)\mathbf{v}^T(I_N-\lambda P_A)^{-1}=\mathbf{v}^TX(\lambda),\label{eq:pagerank}
\end{equation}
where $X(\lambda)=(1-\lambda)(I_N-\lambda P_A)^{-1}$. As above, the inverse of the reducible square matrix $I_N-\lambda P_A$ is the reducible matrix 
\begin{equation*}
(I_N-\lambda P_A)^{-1}=\left(\begin{array}{c|c} 
(I_{n_1}- \lambda Q_{A_1}) & -\lambda Q_{B}\\
\hline 
\bold{0} & (I_{n_2}-\lambda P_{A_2}) 
\end{array}\right)^{-1}
=\left(\begin{array}{c|c} 
(I_{n_1}- \lambda Q_{A_1})^{-1} & \lambda(I_{n_1}-\lambda Q_{A_1})^{-1}Q_B(I_{n_2}-\lambda P_{A_2})^{-1} \\
\hline 
\bold{0} & (I_{n_2}-\lambda P_{A_2})^{-1} 
\end{array}\right). \\
\end{equation*}
Therefore, equation (\ref{eq:pagerank}) can be rewritten as follows
\begin{align*}
    \pi^T=\bold{v}^T X(\lambda)
    &=\bold{v}^T(1-\lambda)(I_N-\lambda Q_A)^{-1}\\
    &=\bold{v}^T(1-\lambda)\left(\begin{array}{c|c}
(I_{n_1}- \lambda Q_{A_1})^{-1} & \lambda(I_{n_1}-\lambda Q_{A_1})^{-1}Q_B(I_{n_2}-\lambda P_{A_2})^{-1} \\
\hline
\bold{0} & (I_{n_2}-\lambda P_{A_2})^{-1} 
\end{array}\right)\\
    &=\bold{v}^T\left(\begin{array}{c|c}
(1-\lambda)(I_{n_1}- \lambda Q_{A_1})^{-1} & \lambda(1-\lambda)(I_{n_1}-\lambda Q_{A_1})^{-1}Q_B(I_{n_2}-\lambda P_{A_2})^{-1} \\
\hline
\bold{0} & (1-\lambda)(I_{n_2}-\lambda P_{A_2})^{-1} 
\end{array}\right)\\
    &= \bold{v}^T\left(\begin{array}{c|c} 
    X_1(\lambda)  & \lambda X_1(\lambda)Q_B(I-\lambda P_{A_2})^{-1}  \\
     \hline
     \bold{0} & X_2(\lambda) \end{array}\right)
     =\bold{v}^T\left(\begin{array}{c|c}
     X_1(\lambda)  & \lambda(1-\lambda)^{-1}X_1(\lambda)Q_B X_2(\lambda)  \\
     \hline
     \bold{0} & X_2(\lambda) \end{array}\right)
\end{align*}
where $X_1(\lambda) =(1-\lambda)(I_{n_1}- \lambda Q_{A_1})^{-1}$ (the existence of which if justified as in Proof of Step 1 below) and $X_2(\lambda) =(1-\lambda)(I_{n_2}-\lambda P_{A_2})^{-1}$. Now, let us consider a recursive sequence of vectors $\{\bold{x}_k^T\}_{k\geq0}$ defined by 
\begin{equation}\label{eq:sequence-xk-non-zero-case}
  \bold{x}_k^{T}=\bold{x}_{k-1}^{T}X(\lambda)\,,\quad(k\geq 1)\,,\,\qquad\text{with}\qquad \, \bold{x}_0=\bold{v}>0 \qquad\text{and}\qquad\Vert \bold{v}\Vert_1=1.   
\end{equation}
Let $\bold{v}^T=(\enspace\bold{v}_1^T\quad | \quad \bold{v}_2^T\enspace)$ be a decomposition of vector $\bold{v}$ where $\bold{v}_1\in\R^{n_1\times1}$ and  $\bold{v}_2\in\R^{n_2\times1}$. Since
$$X^k(\lambda)
=\left(\begin{array}{c|c} 
X_1(\lambda)  & \lambda(1-\lambda)^{-1}X_1(\lambda)Q_B X_2(\lambda)\\
 \hline
\bold{0} & X_2(\lambda) \end{array}\right)^k
=\left(\begin{array}{c|c} X_1(\lambda)^k  & \lambda(1-\lambda)^{-1}\sum_{i=1}^{k} X^{k-i+1}_1(\lambda) Q_BX_2^{i}(\lambda)  \\
\hline
\bold{0} & X_2(\lambda)^k \end{array}\right), \enspace k\geq1,$$
we have
\begin{align*}
    \bold{x}_k^T=\bold{v}^TX^{k}(\lambda)
    &=(\enspace\bold{v}_1^T\quad |\quad \bold{v}_2^T \enspace)\left(\begin{array}{c|c} X_1(\lambda)^k  & \lambda(1-\lambda)^{-1}\sum_{i=1}^{k} X^{k-i+1}_1(\lambda) Q_BX_2^{i}(\lambda)  \\
    \hline
    \bold{0} & X_2(\lambda)^k \end{array}\right)\\\nonumber
     &\\
    &=\left(\enspace\bold{v}^T_1X_1(\lambda)^k \quad | \quad \lambda(1-\lambda)^{-1} \bold{v}^T_1\sum_{i=1}^{k} X^{k-i+1}_1(\lambda) Q_BX_2^{i}(\lambda) + \bold{v}^T_2 X^{k}_2(\lambda)\enspace\right).
\end{align*}
That is, in the decomposition $\bold{x}_k^T=(\enspace(\bold{x}_k)^T_1\quad |\quad(\bold{x}_k)^T_2\enspace)$, the first $n_1$ entries of the vector $\bold{x}_k^T$ satisfy $(\bold{x}_k)_1^T=\bold{v}^T_1X_1(\lambda)^k$, while the remaining  $n_2=N-n_1$ entries  satisfy $(\bold{x}_k)_2^T=\lambda(1-\lambda)^{-1} \bold{v}^T_1\sum_{i=1}^{k} X^{k-i+1}_1(\lambda) Q_BX_2^{i}(\lambda) + \bold{v}^T_2 X^{k}_2(\lambda)$.\\

Alternatively, the vector sequence defined in equation (\ref{eq:sequence-xk-non-zero-case}) can be recursively expressed  as follows
\begin{align}
\label{eq:expression-pi-n-recursively}\nonumber
    \bold{x}_{k+1}^T
    &=\bold{x}_{k}^TX(\lambda)=(\enspace(\bold{x}_k)_1^T\quad |\quad  (\bold{x}_k)_2^T\enspace)
    \left(\begin{array}{c|c} 
    X_1(\lambda)  & \lambda(1-\lambda)^{-1}X_1(\lambda)Q_B X_2(\lambda)\\
    \hline
    \bold{0} & X_2(\lambda) 
     \end{array}\right)\\\nonumber
     \\
     &=\left(\enspace(\bold{x}_k)_1^T X_1(\lambda)\quad | \quad \lambda(1-\lambda)^{-1}(\bold{x}_k)_1^TX_1(\lambda)Q_B X_2(\lambda) + (\bold{x}_k)_2^T X_2(\lambda)\enspace\right)
\end{align}

With equation (\ref{eq:expression-pi-n-recursively}) at hand, the idea is to prove that the sequence $\{\bold{x}_k^T\}_{k\geq0}$ defined in equation (\ref{eq:sequence-xk-non-zero-case}) converges to the vector $\bold{x}^T=(\enspace \bold{0}_{n_1\times1}^T\quad | \quad \bold{c}_2^T\enspace)$, where $\bold{c}_2\in \R^{n_2\times1}$ is the left-hand Perron vector associated with the non-negative and irreducible square matrix $P_{A_2}$.\\

The proof of this fact is based on the following two steps:
\begin{itemize}
    \item [\textbf{Step1.}] If we consider the decomposition $\bold{x}_k^T=(\enspace(\bold{x}_k)^T_1\quad | \quad (\bold{x}_k)^T_2\enspace)$ with $(\bold{x}_k)_1\in\R^{n_1\times 1}$, then $\displaystyle\lim_{k\to\infty}(\bold{x}_k)^T_1=\bold{0}_{n_1\times1}^T$.\\Equivalently, $\displaystyle\lim_{k\to\infty}\bold{v}^T_1 X_1^k(\lambda)=\bold{0}_{n_1\times1}^T$. 
    \item [\textbf{Step2.}] For every $\varepsilon>0$, there exists a positive integer $K\geq1$ (from \textbf{Step 1}) such that for every integer $k\geq K$ the following condition 
    $$\displaystyle\left\Vert \frac{\lambda}{1-\lambda}(\bold{x}_K)^T_1\sum_{i=1}^{k-K}X_1^{k-K-i+1}(\lambda) Q_BX_2^{i}(\lambda)\right\Vert_1<\varepsilon$$
    is fulfilled.
\end{itemize}

We remark that these two steps basically align with those in Theorem \ref{thm:zero-case}, although the arguments employed here are  more sophisticated.\\

\underline{\textbf{Proof of Step 1}}: It is clear that we can  perturb the irreducible matrix $Q_{A_1}$ with some positive $E\ge 0$ to produce a matrix $P=Q_{A_1}+E=(p_{ij})$ 
satisfying:
\begin{itemize}
    \item[(C1)] $P$ is row-stochastic, and
    \item[(C2)] $0\leq Q_{A_1} \leq P$ with $Q_{A_1} \neq P$.
\end{itemize}

Firstly, observe that $\rho(P)=1$ is a consequence of the inequality $\min_j\sum_{i}p_{ij}\leq\rho(P)\leq\max_j\sum_{i}p_{ij}$ and condition (C1). Now, since  $Q_{A_1}$ is a non-negative irreducible square matrix and $Q_{A_1} \neq P$ we have $\rho(Q_{A_1})<\rho(P)=1$ (see \cite[Corollary 3.3.29]{BemannPlemmons}). We next show that $X_1(\lambda)>0$ is irreducible and $X_1(\lambda)
=(1-\lambda)(I_{n_1}-\lambda Q_{A_1})^{-1}\lneq (1-\lambda)(I_{n_1}-\lambda P)^{-1}$. Indeed, since $Q_{A_1}\geq0$ is irreducible and $\rho(Q_{A_1})<1$, we have 
$X_1(\lambda)
>0$ for $0<\lambda<1$ (see \cite[Theorem 3, Section 3, Chapter XIII]{Gantmacher}). Moreover, since its Neumann expansion  is 
\begin{equation}\label{eq:X_1_tilde}
X_1(\lambda)
=(1-\lambda)(I_{n_1}- \lambda Q_{A_1})^{-1}
=(1-\lambda)\sum_{i=0}^{\infty}\left(\lambda Q_{A_1}\right)^i
=(1-\lambda)\left[I_{n_1}+(\lambda Q_{A_1})+ (\lambda Q_{A_1})^2+\dots\right], 
\end{equation}
as in Lemma \ref{lemma:resolvent-row-stochastic}, the series in equation (\ref{eq:X_1_tilde}) is irreducible since it is the sum an irreducible matrix $Q_{A_1}$ and a non-negative matrix (see \cite[Theorem 1]{SchwarzB}). The inequality $X_1(\lambda)\lneq (1-\lambda)(I_{n_1}-\lambda P)^{-1}$ trivially follows from the Neumann expansions and $Q_{A_1}\lneq P$ in condition (C2).

Moreover, since $(1-\lambda)(I_{n_1}- \lambda P)^{-1}$ is positive and row-stochastic (by Lemma \ref{lemma:resolvent-row-stochastic}) we conclude that the  spectral radius $\rho(\,(1-\lambda)(I_{n_1}- \lambda P)^{-1}\,)=1$, which implies thethe strict inequality $\rho(X_1(\lambda))<\rho(\,(1-\lambda)(I_{n_1}- \lambda P)^{-1}\,)=1$ (see \cite[Corollary 3.3.29]{BemannPlemmons}). Therefore, we conclude that $\displaystyle\lim_{k\to\infty}X_1^k(\lambda)=0$ (see \cite[Theorem 5.6.12]{Horn-Johnson}) and \textbf{Step 1} is proved.\\

At this point, observe that \textbf{Step 1} implies that for the sequence $\{\bold{x}_k^T\}_{k\geq0}$ defined recursively by equation (\ref{eq:sequence-xk-non-zero-case}) and a given $\varepsilon>0$, there exists a positive integer $K\geq1$ such that for every $k\geq K$ the condition 
\begin{equation}\label{eq:bound1-non-zero-case}
\Vert(\bold{x}_k)^T_1 \Vert_1<\varepsilon  
\end{equation}
is fulfilled.

Before proceeding with the proof of \textbf{Step 2}, let us provide some details regarding the upper bound for the norm of the matrix $\sum_{i=1}^{k-K}X_1^{k-K-i+1}(\lambda) Q_BX_2^{i}(\lambda)$. Firstly, by Remark \ref{rmk:operator-norm}, it is clear that $\Vert Q_B\Vert\leq1 $. On the other hand, since $P_{A_2}$ is row-stochastic with $\rho(P_{A_2})=1$, for fixed $\lambda\in(0,1)$, the matrix $X_2(\lambda)=(1-\lambda)(I_{n_2}-\lambda P_{A_2})^{-1}$ exists with Neumann expansion
$$X_2(\lambda)=(1-\lambda)(I_{n_2}-\lambda P_{A_2})^{-1}=(1-\lambda)\sum_{i=0}^\infty\,(\lambda P_{A_2})^i=(1-\lambda)\sum_{i=0}^\infty\,\lambda^i P_{A_2}^i,$$
which implies that $\Vert X_2(\lambda)\Vert\leq 1$. Regarding $X_1(\lambda)$, notice that from the proof of \textbf{Step 1} follows that $\rho(X_1(\lambda))<1$. Therefore, for a given $\varepsilon>0$, there exists a matrix norm $\Vert \cdot\Vert_\rho$ such that 
\begin{equation}
\label{eq:upper-bound-X-1-tilde}
\rho(X_1(\lambda))\leq \Vert X_1(\lambda)\Vert_\rho\leq \rho(X_1(\lambda))+\varepsilon\text{\qquad with\qquad }\rho(X_1(\lambda))+\varepsilon<1,
\end{equation}
(see \cite[Lemma 5.6.10]{Horn-Johnson})). Equation (\ref{eq:upper-bound-X-1-tilde}) will be crucial in the proof of \textbf{Step 2}.\\ 

Finally, recall that any two norms defined on a finite-dimensional space are equivalent (see \cite[Corollary 434]{Gockenbach}). In our case, for square matrices of order $n_1$, there exist positive constants $a,b\in\R$ with $0<a<b$ such that  
\begin{equation}
\label{eq:norm-equivalence}
   a\Vert \cdot\Vert\leq \Vert \cdot\Vert_\rho\leq b\Vert \cdot\Vert.
\end{equation}

With these ingredients at hand, we continue with the the proof of \textbf{Step 2}.\\

\underline{\textbf{Proof of Step 2}}: From \textbf{Step 1} and the argument of the preceding paragraph, for a given $\varepsilon>0$, there exist:

\begin{itemize}
    \item[(C3)] A norm matrix $\Vert \cdot\Vert_\rho$ such that $\rho(X_1(\lambda))\leq \Vert X_1(\lambda)\Vert_\rho\leq \rho(X_1(\lambda))+\varepsilon$ with 
$\rho(X_1(\lambda))+\varepsilon\leq\rho(X_1(\lambda))^{1/2}<1$, and
    \item[(C4)] A positive integer $K\geq 1$ such that $\Vert(\bold{x}_k)^T_1 \Vert_1<\varepsilon a(1-\rho(X_1(\lambda))^{1/2})(1-\lambda)/\lambda$ for every $k\geq K$, where the vector decomposition $\bold{x}_k^T=(\enspace(\bold{x}_k)^T_1\quad |\quad (\bold{x}_k)^T_2\enspace)$ with $(\bold{x}_k)_1\in\R^{n_1\times 1}$ is considered and $a>0$ is the constant in equation (\ref{eq:norm-equivalence}).
\end{itemize}

From equation (\ref{eq:expression-pi-n-recursively}), for every $k\geq K$ we can write the vector $\bold{x}_{k}^T$ as follows  
\begin{equation}\label{eq:recursive-expression-general}
\bold{x}_{k}^T=\bold{x}_{K}^TX^{k-K}(\lambda)=\left(\enspace(\bold{x}_K)_1^T X^{k-K}_1(\lambda)\quad |\quad \lambda(1-\lambda)^{-1}(\bold{x}_K)_1^T\displaystyle\sum_{i=1}^{k-K}X_1^{k-K-i+1}(\lambda) Q_BX_2^{i}(\lambda) + (\bold{x}_K)_2^T X_2^{k-K}(\lambda)\enspace\right).         
\end{equation}
That is, the vector $\bold{x}_k^T$ can be written in terms of  $\bold{x}_K^T$ for the fixed index $K$ provided by condition (C4). Now, we compute a bound for the $n_2$ components of the vector $\bold{x}_k^T$, that is, the first half of the right-hand part of the vector in equation (\ref{eq:recursive-expression-general}).
Since $\Vert Q_B\Vert\leq 1$ and $\Vert X_2(\lambda) \Vert\leq 1$ we can use norm matrix equivalence in equation (\ref{eq:norm-equivalence}) together with Remark \ref{rmk:operator-norm}  to obtain
\begin{align}
\label{eq:bound2-non-zero-case}
\left\Vert \frac{\lambda}{1-\lambda}(\bold{x}_K)^T_1\sum_{i=1}^{k-K}X_1^{k-K-i+1}(\lambda) Q_BX_2^{i}(\lambda)\right\Vert_1\nonumber
&\leq \frac{\lambda}{1-\lambda}\left\Vert (\bold{x}_K)^T_1\right\Vert_1 \sum_{i=1}^{k-K} \Vert X_1^{k-K-i+1}(\lambda) \Vert\enspace\Vert Q_B\Vert\enspace\Vert X_2^{i}(\lambda)\Vert\\\nonumber
&\leq \frac{\lambda}{1-\lambda}\left\Vert (\bold{x}_K)^T_1\right\Vert_1 \sum_{i=1}^{k-K} \frac{1}{a}\Vert X_1^{k-K-i+1}(\lambda)\Vert_\rho\\\nonumber
&\leq \frac{\lambda}{1-\lambda}\left\Vert (\bold{x}_K)^T_1\right\Vert_1 \sum_{j=1}^{k-K} \frac{1}{a}\Vert X_1^{j}(\lambda)\Vert_\rho\\\nonumber
&\leq \frac{\lambda}{1-\lambda}\left\Vert (\bold{x}_K)^T_1\right\Vert_1 \frac{1}{a} \sum_{j=1}^{k-K} \Vert X_1(\lambda)\Vert^j_\rho\\\nonumber
&\leq \frac{\lambda}{1-\lambda}\left\Vert (\bold{x}_K)^T_1\right\Vert_1 \frac{1}{a} \sum_{j=1}^{k-K}\left(\,\rho(X_1(\lambda))+\varepsilon\,\right)^j\\\nonumber
&\leq \frac{\lambda}{1-\lambda}\left\Vert (\bold{x}_K)^T_1\right\Vert_1 \frac{1}{a} \sum_{j=0}^{\infty}\left(\,\rho(X_1(\lambda))\,\right)^{j/2}\\
&=\frac{\lambda}{1-\lambda}\frac{\left\Vert (\bold{x}_K)^T_1\right\Vert_1}{a}\cdot\frac{1}{1-\rho(X_1(\lambda))^{1/2}}<\varepsilon,
\end{align}
where the last two inequalities follow from conditions (C3) and (C4).\\

By Remark \ref{rmk:resolvent-isometry}, the sequence of vectors $\{\bold{x}_k^T\}_{k\geq0}$ defined in equation (\ref{eq:sequence-xk-non-zero-case}) satisfies  $\Vert \bold{x}_k\Vert_1=1$, for all $k\geq0$. As above, the estimate in equation  (\ref{eq:bound1-non-zero-case}) and the row-stochasticity of $X(\lambda)$ imply that $1-\varepsilon<\Vert (\bold{x}_K)_2\Vert_1\leq 1$. Now, we proceed as in the proof of Theorem \ref{thm:zero-case}. For $\lambda\in (0,1)$ fixed and $K$ as in  condition (C4), we apply Theorem \ref{thm:main-theorem} to the (non-negative irreducible) matrix  $A_2$ and obtain
\begin{equation*}
    \lim_{k\to\infty}(\bold{x}_K)_2^TX_2^k(\lambda)=
    \left\Vert (\bold{x}_K)_2^T\right\Vert_1 \lim_{k\to\infty}\frac{(\bold{x}_K)_2^T}{\Vert (\bold{x}_K)_2^T\Vert_1}X_2^k(\lambda)=\Vert (\bold{x}_K)_2^T\Vert_1 \bold{c}_2^T,
\end{equation*}
where $\bold{c}_2\in\R^{n_2\times1}$ is the left-hand Perron vector of $P_{A_2}$. Therefore, for a given $\varepsilon>0$ and for the fixed vector $(\bold{x}_K)_2\in\R^{n_2\times1}$, there is a positive integer $L>1$ with $L>K$ such that for every $k>L$ we have 
\begin{equation}\label{eq:bound3-non-zero-case}
\left\Vert\enspace (\bold{x}_K)_2^T X_2^{k-K}(\lambda) -\Vert (\bold{x}_K)_2^T\Vert_1 \bold{c}_2^T\enspace \right\Vert_1<\varepsilon\qquad \text{and}\qquad 1-\varepsilon<\Vert (\bold{x}_K)_2\Vert_1\leq 1.
\end{equation}


In summary, combining \textbf{Step 1}, \textbf{Step 2} and equation (\ref{eq:bound3-non-zero-case}) 
we have proved (essentially) that for every $\varepsilon>0$, there exists a positive integer $K$ large enough such that for every $k\geq K$ we have
\begin{equation}\label{Generalcase}
\left\Vert(\bold{x}_k^T)_1\right\Vert_1<\varepsilon \qquad \text{and}\qquad \left\Vert\enspace (\bold{x}_k)_2^T - \Vert (\bold{x}_K)_2^T\Vert_1 \bold{c}_2^T\enspace \right\Vert_1<\varepsilon 
\end{equation}
where $\bold{x}_k^T=(\enspace (\bold{x}_k)_1^T\quad | \quad  (\bold{x}_k)_2^T\enspace)$. Therefore, the recursive sequence $\{\bold{x}_k\}_{k\geq0}$ defined in equation (\ref{eq:sequence-xk-non-zero-case}) converges to a non-zero vector $\bold{x}\in\R^{N\times 1}$ with $\Vert \bold{x}\Vert_1=1$,
\begin{equation*}
    \bold{x}^T=\lim_{k\to\infty}\bold{v}^TX^{k}(\lambda)
    =\lim_{k\to\infty}\bold{x}_k^T
    =\lim_{k\to\infty}\left(\enspace (\bold{x}_k)_1^T\quad|\quad (\bold{x}_k)_2^T \enspace\right)
    =(\enspace \bold{0}_{n_1\times1}^T \quad | \quad \bold{c}_2^T\enspace),
\end{equation*}
where $\bold{c}_2\in \R^{n_2\times1}$ is the left-hand Perron vector associated with the non-negative and irreducible square matrix $P_{A_2}$.\\

\end{proof}

Note that in this case, Theorem~\ref{thm:non-zero-case} proves that if we consider a non-strongly connected directed graph $\mathcal{G}=(V,E)$ with two interconnected strongly connected components, then, similarly to Corollary~\ref{cor:source-connected}, for a given damping factor $\lambda\in (0,1)$, the PageRank has no fixed points as a function $PR_\lambda: \Delta_N^+\longrightarrow \Delta_N^+$ but it has a unique fixed point when  considered as a function $PR_\lambda: \Delta_N\longrightarrow \Delta_N$, where
\[
\Delta_N=\left\{\bold{x}=(x_1,\cdots,x_N)^T\in\R^{N\times1}\,:\enspace x_1+\cdots+x_N=1,\, x_i\ge 0,\, {\text{for all }}1\le i\le N \,\right\}.
\]

\begin{coro}\label{cor:twoConnectedComponents}
Let $\mathcal{G}=(V,E)$ be a non-strongly connected directed graph  with two strongly connected components $\mathcal{G}_1$, $\mathcal{G}_2$ such that some nodes of $\mathcal{G}_1$ are connected to some nodes of $\mathcal{G}_2$, but no node of $\mathcal{G}_2$ is connected to any node in $\mathcal{G}_1$. Then, for every damping factor $\lambda\in (0,1)$, the operator $PR_\lambda: \Delta_N^+\longrightarrow \Delta_N^+$ has no fixed points but  $PR_\lambda: \Delta_N\longrightarrow \Delta_N$ has a unique fixed point. Furthermore, this fixed point is of the form $\bold{x}^T=(\enspace\bold{0}_{n_1\times 1}^T \quad | \quad \bold{c}_2^T\enspace)$, where $\bold{c}_2\in\R^{n_2\times1}$ is the row-normalization square matrix $P_{A_2}$ of order $n_2$ associated to the component $\mathcal{G}_2$.\\
\end{coro}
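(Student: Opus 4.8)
The plan is to read the hypotheses as the matrix situation already treated in Theorem~\ref{thm:non-zero-case} and then to convert the convergence statement there into a fixed-point statement. First I would fix an ordering of the nodes that lists $\mathcal{G}_1$ before $\mathcal{G}_2$; since $\mathcal{G}_1,\mathcal{G}_2$ are strongly connected and every inter-component link runs from $\mathcal{G}_1$ to $\mathcal{G}_2$, Lemma~\ref{lemma:strongly-connected-irreducible} produces a permutation $S$ with $SAS^T=\left(\begin{smallmatrix}A_1 & B\\ \bold{0} & A_2\end{smallmatrix}\right)$, where $A_1,A_2$ are irreducible and $B\neq\bold{0}$. This is exactly the standing hypothesis of Theorem~\ref{thm:non-zero-case}. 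I would also record two elementary facts: by equation~(\ref{eq:pi-vector}), a vector $\bold{v}$ is a fixed point of $PR_\lambda$ precisely when $\bold{v}^TX(\lambda)=\bold{v}^T$, i.e. when $\bold{v}$ is a normalized non-negative left eigenvector of $X(\lambda)$ for the eigenvalue $1$; and since $X(\lambda)\geq\bold{0}$ and $X(\lambda)\bold{e}=\bold{e}$ (row-stochasticity follows from $P_A\bold{e}=\bold{e}$ exactly as in Lemma~\ref{lemma:resolvent-row-stochastic}), the map $PR_\lambda$ is a well-defined continuous self-map of the closed simplex $\Delta_N$.

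Next I would pin down every such eigenvector directly from the block-triangular form of $X(\lambda)$ obtained in the proof of Theorem~\ref{thm:non-zero-case}, namely $X(\lambda)=\left(\begin{smallmatrix}X_1(\lambda) & \ast\\ \bold{0} & X_2(\lambda)\end{smallmatrix}\right)$. Writing a fixed point as $\bold{v}^T=(\enspace\bold{v}_1^T\mid\bold{v}_2^T\enspace)$, the first block column of the equation $\bold{v}^TX(\lambda)=\bold{v}^T$ reads $\bold{v}_1^TX_1(\lambda)=\bold{v}_1^T$. Here I would invoke the key spectral estimate $\rho(X_1(\lambda))<1$ established in Step~1 of Theorem~\ref{thm:non-zero-case}: since $1$ is then not an eigenvalue of $X_1(\lambda)$, the matrix $I_{n_1}-X_1(\lambda)$ is invertible and $\bold{v}_1=\bold{0}$. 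The second block column now collapses to $\bold{v}_2^TX_2(\lambda)=\bold{v}_2^T$, and by Lemma~\ref{lemma:resolvent-row-stochastic} the matrix $X_2(\lambda)$ is positive, irreducible and row-stochastic, so Perron--Frobenius gives a one-dimensional eigenspace for the eigenvalue $1$ spanned by the left-hand Perron vector $\bold{c}_2$ of $P_{A_2}$. The normalization $\Vert\bold{v}\Vert_1=\Vert\bold{v}_2\Vert_1=1$ then forces $\bold{v}_2=\bold{c}_2$, so the only possible fixed point is $\bold{x}^T=(\enspace\bold{0}_{n_1\times1}^T\mid\bold{c}_2^T\enspace)$.

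To finish I would establish existence and separate the two simplices. For existence, either I would note that Theorem~\ref{thm:non-zero-case} makes the orbit $\bold{v}^TX(\lambda)^k$ converge to $\bold{x}^T$ and then use continuity of $PR_\lambda$ (the limit of an orbit of a continuous map is a fixed point), or I would verify directly through the Neumann series that $\bold{c}_2^TX_2(\lambda)=(1-\lambda)\sum_{i\geq0}\lambda^i\bold{c}_2^TP_{A_2}^i=\bold{c}_2^T$, using $\bold{c}_2^TP_{A_2}=\bold{c}_2^T$. Finally, since $\bold{x}$ has its first $n_1$ coordinates equal to zero it belongs to $\Delta_N$ but not to the open simplex $\Delta_N^+$; together with the uniqueness established above, this yields that $PR_\lambda:\Delta_N^+\to\Delta_N^+$ has no fixed point, while $PR_\lambda:\Delta_N\to\Delta_N$ has the single fixed point $\bold{x}$.

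The one genuinely non-routine ingredient is the strict inequality $\rho(X_1(\lambda))<1$, which is what annihilates the first block and thereby forces the whole $\ell^1$-mass onto the dangling component $\mathcal{G}_2$; everything else is bookkeeping on the block decomposition together with the standard uniqueness of the Perron direction of $X_2(\lambda)$. Fortunately this inequality is already supplied by Step~1 of Theorem~\ref{thm:non-zero-case} (via perturbing the substochastic block $Q_{A_1}$ to a row-stochastic matrix and comparing spectral radii), so no new spectral work is needed here.
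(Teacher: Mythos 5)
Your proposal is correct, but it reaches the corollary by a genuinely different route than the paper does. The paper treats this statement as an immediate restatement of Theorem~\ref{thm:non-zero-case}: a fixed point of $PR_\lambda$ has a constant orbit $\bold{v}^TX(\lambda)^k=\bold{v}^T$, the theorem forces every orbit to converge to $(\enspace\bold{0}_{n_1\times1}^T\mid\bold{c}_2^T\enspace)$, so this is the only candidate and, being the limit of its own orbit under a continuous map, is itself fixed; no separate argument is written out. You instead characterize fixed points statically, as non-negative normalized left $1$-eigenvectors of $X(\lambda)$, and exploit the block-triangular form: the spectral bound $\rho(X_1(\lambda))<1$ (borrowed from Step~1 of the theorem's proof) annihilates the $\mathcal{G}_1$-block, and Perron--Frobenius applied to the positive, irreducible, row-stochastic matrix $X_2(\lambda)$ (Lemma~\ref{lemma:resolvent-row-stochastic}) pins the $\mathcal{G}_2$-block to $\bold{c}_2$, with existence checked directly from the Neumann series. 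What your route buys: it bypasses entirely the delicate Step~2 of Theorem~\ref{thm:non-zero-case} (the auxiliary norm $\Vert\cdot\Vert_\rho$, the equivalence constants, the $\varepsilon$-bookkeeping), which is needed to prove convergence of arbitrary orbits but not to locate fixed points; it also handles boundary vectors of $\Delta_N$ automatically, whereas the paper's deduction implicitly requires the theorem's convergence for starting vectors that are merely non-negative (the theorem's proof takes $\bold{x}_0=\bold{v}>0$) --- a minor point, since that proof goes through unchanged, but one your argument sidesteps. What you lose is the dynamical content: the paper's theorem additionally shows that every feedback-PageRank iteration converges to this fixed point, which is the phenomenon the paper is ultimately after.
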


In the next section, we will use the results of Section \ref{Section:reducible} to investigate the convergence of the iteration of the PageRank vector under the action of the matrix $X(\lambda)$ in the case that the network $\mathcal{G}=(V,E)$ is not strongly connected and has  dangling clusters (recall that a dangling cluster is a group of nodes in $V$ with no outgoing links to other clusters within the network). In this setting, we will show that the iterations of the PageRank vector under the matrix $X(\lambda)$ converges, in the spirit of Theorem \ref{thm:diagonal-case}, to a vector which only depends on the left-hand Perron vector of the dangling clusters in the network.


\section{The case of non-strongly connected networks with dangling clusters}
\label{Section:dangling-cluster}
Let $P_A$ be the row-stochastic matrix obtained via row-normalization from some reducible square matrix $A$ of order $N$. The normal form of $P_A$ (see \cite[Section 2.3]{Varga})) is obtained after composing with an $N\times N$ permutation square matrix $S$ as follows
\begin{equation}
\label{eq:permutation-matrix}
    SP_AS^T = \left(\begin{matrix} 
    R_{1,1} & R_{1,2} & \dots  &R_{1,m} \\
    \bold{0}      & R_{2,2} & \dots  &R_{2,m}\\
    \vdots & \vdots & \ddots & \vdots\\
    \bold{0}      &  \bold{0}     &\dots   & R_{m,m}
    \end{matrix}\right),
\end{equation}
where each submatrix $R_{j,j}$, $1\leq j\leq m$, is either irreducible or a null matrix of order $1$.\\

As it was mentioned in Figure \ref{fig:dangling-cluster} of Section \ref{Section:reducible}, we define  a \textit{dangling cluster} as a group of nodes with no out-links to any other cluster of the network. In terms of the permutation matrix in equation (\ref{eq:permutation-matrix}), if $R_{i,i}$ is associated to a dangling cluster, then $R_{i,i}$ is a non-zero square matrix of order $n_i$ (the number of nodes inside this cluster) and $R_{i,j}=0$ for all $j=1,\dots,m$ with $j\neq i$.\\ 

In our case, let $\mathcal{G}=(V,E)$ be a directed graph where $V=\{1,2,\dots,N\}$ is the set of nodes. Let $M\geq1$ be the number of dangling clusters of the directed graph $\mathcal{G}$, where the connections among their nodes are given by matrices $D_i's$. If $L\geq1$ denotes the number of non-dangling clusters in $\mathcal{G}$, we can permute rows and columns of the matrix $P_A$, grouping together the nodes belonging to the same connected component and listing the dangling clusters in the last rows. Therefore, there exists a permutation square matrix $S$ of order $N$ such that $P_A$ can be expressed in the reduced form
\renewcommand{\arraystretch}{1.5}
\NiceMatrixOptions{columns-width=13mm,margin=0.9em}
\begin{equation}\label{eq:dangling-cluster} 
SP_AS^T =\begin{pNiceMatrix}
    \Block[draw]{}{Q_{1,1}}     & Q_{1,2}                   & \dots     &Q_{1,L}                    &Q_{1,L+1}                  &Q_{1,L+2}                  &\dots      &Q_{1,L+M}\\
    \bold{0}                    & \Block[draw]{}{Q_{2,2}}   & \dots     &Q_{2,L}                    &Q_{2,L+1}                  &Q_{2,L+2}                  &\dots      &Q_{2,L+M}\\
    \vdots                      & \vdots                    &\ddots     & \vdots                    &\vdots                     &\vdots                     &\ddots     &\vdots\\
    \bold{0}                    &  \bold{0}                 &\dots      & \Block[draw]{}{Q_{L, L}}  &Q_{L ,L+1}                 &Q_{L ,L+2}                 &\dots      &Q_{L,L+M}\\
    \bold{0}                    &  \bold{0}                 &\dots      & \bold{0}                  &\Block[draw]{}{D_{1}}      &\bold{0}                   &\dots      &\bold{0}\\
    \bold{0}                    &  \bold{0}                 &\dots      & \bold{0}                  &\bold{0}                   &\Block[draw]{}{D_{2}}      &\dots      &\bold{0}\\
    \bold{0}                    &  \bold{0}                 &\dots      & \bold{0}                  &\bold{0}                   &\bold{0}                   &\dots      &\bold{0}\\
    \vdots                      & \vdots                    &           & \vdots                    &\vdots                     &\vdots                     &\ddots     &\vdots\\
    \bold{0}                    &  \bold{0}                 &\dots      & \bold{0}                  &\bold{0}                   &\bold{0}                   &\dots      &\Block[draw]{}{D_M}\\
\end{pNiceMatrix},
\end{equation}
where the diagonal blocks $Q_{i,i}$, $i=1,2,\dots,L$, and $D_j$, $j=1,2,\dots,M$, are either irreducible or a null matrix of order $1$. In fact, due to the lack of dangling nodes, the block matrices $D_j$ cannot be a null matrix of order $1$ for any $j=1,2,\dots,M$. For simplicity of notation, we simply write the reduced form above $SP_AS^T$ as $P_A$.\\

With this reduced form for the row-normalization matrix $P_A$ in hand, we proceed to state the result for the case of a reducible matrix $P_A$ with dangling clusters.\\

\begin{thm}\label{thm:genral-case}
Let $A$ be a non-negative reducible square matrix of order $N$ and let denote by $P_A$ its row-normalization. Assume that $P_A$ is written in the reduced form given by equation (\ref{eq:dangling-cluster}), where the diagonal block matrices $Q_{i,i}$ are of order $n_i$, for $i=1,2,\dots,L$, and $D_j$ corresponds to the $j^{th}$-dangling cluster of order $m_j$, for $j=1,2,\dots,M$. For a fixed $\lambda\in (0,1)$, consider $X(\lambda)=(1-\lambda)(I_N-\lambda P_A)^{-1}$. Then for any personalization vector $\bold{v}\in\R^{N\times 1}$ with $\|\bold{v}\|_1=1$, the recursive sequence $\{\bold{x}_k\}_{k\geq0}\subset\R^{N\times1}$, with $\bold{x}_0=\bold{v}$, defined as 
\begin{equation*}
 \bold{x}_k^{T}=\bold{x}_{k-1}^{T}X(\lambda)=\bold{v}^TX(\lambda)^k\,,\quad (k\geq 1),
\end{equation*}
converges to the vector 
$$\bold{x}^T=(\enspace \bold{0}_{n_1\times1}^T\quad|\quad \bold{0}_{n_2\times1}^T\quad|\dots|\quad \bold{0}_{n_L\times1}^T \quad|\quad \alpha_1\bold{c}_1^T\,\,\quad |\quad \alpha_2\bold{c}_2^T\,\,\quad|\dots|\quad \alpha_M\bold{c}_M^T\enspace),$$
for some $\alpha_1,\alpha_2,\dots, \alpha_M \geq0$ with $\alpha_1+\alpha_2+\dots+\alpha_M=1$ and where $\bold{c}_1\in\R^{m_1\times1}, \bold{c}_2\in\R^{m_2\times1},\dots,\bold{c}_M\in\R^{m_M\times1}$ are the left-hand Perron vectors of the row-normalized  irreducible  matrices $D_1, D_2,\dots, D_M$, respectively.
\end{thm}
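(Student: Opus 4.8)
The plan is to reduce the $(L+M)$-block structure to the two-block setup of Theorem~\ref{thm:non-zero-case}, collecting all non-dangling clusters into a single leading block and all dangling clusters into a single trailing block, and then to resolve the trailing block by the diagonal mechanism of Theorem~\ref{thm:diagonal-case}. First I would write $P_A=\left(\begin{smallmatrix} Q & R \\ \bold{0} & D\end{smallmatrix}\right)$, where $Q$ (of order $n=n_1+\dots+n_L$) gathers the blocks $Q_{i,j}$ of the non-dangling clusters, $D=\mathrm{diag}(D_1,\dots,D_M)$ (of order $m=m_1+\dots+m_M$) is the block-diagonal part built from the dangling clusters, and $R$ encodes the coupling. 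Using $P_A\bold{e}=\bold{e}$ one checks that each $D_j$ is row-stochastic (a dangling cluster retains all of its outgoing mass) and irreducible, so by Lemma~\ref{lemma:resolvent-row-stochastic} each $X_{D_j}(\lambda)=(1-\lambda)(I_{m_j}-\lambda D_j)^{-1}$ is positive, irreducible and row-stochastic, and $X_D(\lambda)=(1-\lambda)(I_m-\lambda D)^{-1}=\mathrm{diag}(X_{D_1}(\lambda),\dots,X_{D_M}(\lambda))$. As in the previous theorems, $X(\lambda)$ is non-negative and row-stochastic, hence a $\|\cdot\|_1$-isometry on the positive cone (Remark~\ref{rmk:resolvent-isometry}), so $\|\bold{x}_k\|_1=1$ for all $k$.

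The first key step (the analog of \textbf{Step 1} in Theorem~\ref{thm:non-zero-case}) is to prove that the non-dangling part of $\bold{x}_k$ vanishes, which amounts to $\rho(X_Q(\lambda))<1$ for $X_Q(\lambda)=(1-\lambda)(I_n-\lambda Q)^{-1}$. Here the perturbation argument of Theorem~\ref{thm:non-zero-case} does not apply verbatim because $Q$ is itself reducible; instead I would argue directly that $Q$ is sub-stochastic with $\rho(Q)<1$. Indeed $Q\bold{e}=\bold{e}-R\bold{e}\le\bold{e}$, and since every non-dangling node admits a directed path to some dangling cluster, for each row $i$ there is an exponent $\ell_i\le N$ with $(Q^{\ell_i}\bold{e})_i<1$; as the row-sums $Q^k\bold{e}$ are non-increasing (because $Q\ge\bold{0}$ and $Q\bold{e}\le\bold{e}$ give $Q^{k+1}\bold{e}=Q^k(Q\bold{e})\le Q^k\bold{e}$), choosing $k_0=\max_i\ell_i$ yields $\|Q^{k_0}\|_\infty<1$ and thus $\rho(Q)<1$. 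Since the eigenvalues of $X_Q(\lambda)$ are $(1-\lambda)(1-\lambda\mu)^{-1}$ with $\mu$ ranging over the spectrum of $Q$, and $|1-\lambda\mu|\ge 1-\lambda\rho(Q)>0$, one gets $\rho(X_Q(\lambda))\le (1-\lambda)(1-\lambda\rho(Q))^{-1}<1$, whence $X_Q(\lambda)^k\to\bold{0}$ and $(\bold{x}_k)_Q^T=\bold{v}_Q^TX_Q(\lambda)^k\to\bold{0}_{n\times1}^T$.

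Next I would track the mass in each dangling block. Writing $\gamma^{(k)}=\|(\bold{x}_k)_Q\|_1$ and $\beta_j^{(k)}=\|(\bold{x}_k)_{D_j}\|_1$, the block-diagonality of $D$ forces mass to flow only from $Q$ into the $D_j$'s and never out, so the one-step update $(\bold{x}_{k+1})_{D_j}^T=[\text{non-negative feed from }(\bold{x}_k)_Q]+(\bold{x}_k)_{D_j}^TX_{D_j}(\lambda)$ together with the isometry $\|(\bold{x}_k)_{D_j}^TX_{D_j}(\lambda)\|_1=\beta_j^{(k)}$ shows each $\beta_j^{(k)}$ is non-decreasing and bounded by $1$, hence converges to some $\alpha_j\ge0$; from $\gamma^{(k)}\to0$ and $\gamma^{(k)}+\sum_j\beta_j^{(k)}=1$ we get $\sum_j\alpha_j=1$ and the uniform estimate $0\le\alpha_j-\beta_j^{(K)}\le\gamma^{(K)}$.

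Finally, fixing $\varepsilon>0$ and $K$ with $\gamma^{(K)}<\varepsilon$, I would split, for $k\ge K$, $(\bold{x}_k)_{D_j}^T=E_j^{(k)}+(\bold{x}_K)_{D_j}^TX_{D_j}(\lambda)^{k-K}$, where $E_j^{(k)}$ is the contribution fed from $(\bold{x}_K)_Q$ after step $K$; exactly as in \textbf{Step 2} of Theorem~\ref{thm:non-zero-case} (geometric summation using $\|X_Q(\lambda)\|_\rho<1$, $\|R\|\le1$ and $\|X_{D_j}(\lambda)\|\le1$) this gives $\|E_j^{(k)}\|_1<\varepsilon$, while Theorem~\ref{thm:main-theorem} applied inside the irreducible block $D_j$ yields $(\bold{x}_K)_{D_j}^TX_{D_j}(\lambda)^{k-K}\to\beta_j^{(K)}\bold{c}_j^T$ as $k\to\infty$. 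Combining these with $|\alpha_j-\beta_j^{(K)}|<\varepsilon$ gives $(\bold{x}_k)_{D_j}^T\to\alpha_j\bold{c}_j^T$, and stacking the blocks produces the claimed limit. I expect the main obstacle to be precisely this last synchronization: the $\alpha_j$ are global quantities accumulated along the whole orbit, so one must reconcile the within-block convergence (which naturally produces $\beta_j^{(K)}\bold{c}_j^T$) with the residual feed after time $K$, and the estimate $0\le\alpha_j-\beta_j^{(K)}\le\gamma^{(K)}$ is what makes the two scales compatible. Establishing $\rho(Q)<1$ for the reducible non-dangling block is the other point requiring an argument genuinely different from the earlier cases.
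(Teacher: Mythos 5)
Your proposal is correct, but it takes a genuinely different route from the paper in its key first stage. The paper never collapses the non-dangling part into a single block: its \textbf{Stage 1} peels off the $L$ non-dangling clusters one at a time, using the nested resolvent structure of equation (\ref{eq:recursive-resolvent}) and invoking Theorem \ref{thm:zero-case} or Theorem \ref{thm:non-zero-case} at each step (which is why it needs each diagonal block to be irreducible or null, and a chain of thresholds $K_1<K_2<\dots<K_L$); only then does its \textbf{Stage 2} run the diagonal argument of Theorem \ref{thm:diagonal-case} on $\mathrm{diag}(D_1,\dots,D_M)$. You instead treat the whole reducible block $Q$ at once and prove $\rho(Q)<1$ directly, via sub-stochasticity $Q\bold{e}\le\bold{e}$, monotonicity of the row sums $Q^k\bold{e}$, and the reachability fact that every non-dangling node has a path (of length at most $N$) into some dangling cluster, giving $\|Q^{k_0}\|_\infty<1$; this correctly sidesteps the perturbation comparison of Theorem \ref{thm:non-zero-case}, which indeed does not apply verbatim to a reducible $Q$, and it replaces the paper's $L$-step induction with a single uniform spectral estimate. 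Your handling of the limit coefficients is also tighter than the paper's: the paper's existence of the $\alpha_j$ is relegated to a closing Remark (monotonicity of $\|(\bold{x}_k)_{L+j}\|_1$) and its ``Claim'' is justified by a somewhat convoluted $\varepsilon$-modification of vectors supported on the dangling clusters, whereas your estimate $0\le\alpha_j-\beta_j^{(K)}\le\gamma^{(K)}$ does the synchronization between the global mass accumulation and the within-block power iteration explicitly and quantitatively. What the paper's route buys is direct reuse of its two-block theorems in a form that mirrors the normal form (\ref{eq:dangling-cluster}); what yours buys is a shorter, more elementary argument whose only non-trivial new ingredient is the reachability lemma, plus a cleaner emergence of the convex coefficients $\alpha_j$ with $\sum_j\alpha_j=1$. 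One small point to make airtight in a final write-up: when applying Theorem \ref{thm:main-theorem} inside a block $D_j$ you should note the degenerate case $\beta_j^{(K)}=0$ (where the term is identically zero and the claimed limit $\beta_j^{(K)}\bold{c}_j^T$ holds trivially), since Theorem \ref{thm:main-theorem} as stated requires a normalized starting vector.
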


\begin{proof}

Since the proof of this result is based on a recursive argument, let us consider the sequence of matrices $\{P_{i,i}\}_{i\geq1}$ defined by 
\begin{equation}\label{eq:block-decomposition}
    P_{i,i}=\left(\begin{array}{l|c c c} 
	Q_{i,i} &  & Q_{B_i} &\\ 
	\hline 
    \\
     \scaleto{\bold{0}}{10pt}&  &\scaleto{P_{i+1,i+1}}{15pt} & \\
     \\
\end{array}\right)=
\left(\begin{array}{l|c c c c c c }
Q_{i,i}     & Q_{i,i+1}     &\dots      &Q_{i,L}    &Q_{i,L+1}      &\dots    &Q_{i,L+M}\\
\hline 
\bold{0}    &Q_{i+1,i+1}    &\dots      &Q_{i+1,L}  &Q_{i+1,L+1}    &\dots    &Q_{i+1,L+M} \\

\vdots      &\vdots         &\ddots     &\vdots     &\vdots         &\dots    &\vdots\\
\bold{0}    &\bold{0}       &\dots      &Q_{L,L}    &Q_{L,L+1}      &\dots    &Q_{L, L+M} \\
\bold{0}    &\bold{0}       &\dots      &\bold{0}   &D_{1}          &\dots    &\bold{0} \\
\vdots      &\vdots         &\dots      &\vdots     &\vdots         &\ddots   &\bold{0} \\
\bold{0}    &\bold{0}       &\dots      &\bold{0}   &\bold{0}       &\dots    &D_M \\
\end{array}\right),\quad (i\geq1),  
\end{equation}
with $P_{1,1}=P_A$, $Q_{B_i}\neq0$ and $Q_{i,i}$ is a non-negative and irreducible square matrix of order $n_i$ or a null matrix of order $1$, for $i=1,2\dots,L$. As in the proof of Theorem \ref{thm:non-zero-case} notice  that $Q_{B_i}$ is not necessarily row-stochastic, but $Q_{i,i}$ and $Q_{B_i}$ are jointly row-stochastic in the following sense
\begin{equation*}
\sum_{t=s}^N \left(\begin{array}{c|c}
Q_{i,i} & Q_{B_i}
\end{array}\right)
_{s,t}=1
,\quad \text{for all $s$ running in the appropriate rows}.
\end{equation*}
 On the other hand, the matrix $P_{i+1,i+1}$ is row-stochastic for $i=1,2,\dots,L$. 

Observe that, with the reduced form in equation (\ref{eq:block-decomposition}), the resolvent matrix $X(\lambda)=(1-\lambda)(I_N-\lambda P_A)^{-1}$ can be recursively described  as follows
    \begin{equation}
    \label{eq:recursive-resolvent}
    X_{j}(\lambda)=\left(\begin{array}{c|c c c} 
	(1-\lambda)(I_{n_j}-\lambda Q_{j,j})^{-1}  &  &\lambda (I_{n_j}-\lambda Q_{j,j})^{-1}Q_{B_j}X_{j+1}(\lambda)&\\ 
	\hline 
    \\
     \scaleto{\bold{0}}{10pt}&  &\scaleto{X_{j+1}(\lambda)}{18pt} & \\
     \\
    \end{array}\right),\quad (j\geq1),
    \end{equation}
with $X_1(\lambda)=X(\lambda)$.\\

The proof of this result is given in two  stages:
\begin{itemize}
    \item[\textbf{Stage 1}]: This stage consists of $L$ steps (where $L$ is the number of $Q_{i,i}$ blocks in $P_A$). In each step, we apply a diagonal argument (either Theorem \ref{thm:zero-case} or Theorem \ref{thm:non-zero-case} of Section \ref{Section:reducible}) starting with the block matrix decomposition $Q_{1,1}$, $Q_{B_1}$ and $P_{2,2}$. For each $i=1,\dots, L$ and for a fixed $\varepsilon>0$, we show that there exists a positive integer $K_{i}>0$ such that the $n_1,n_2,\dots ,n_i$ components of the personalization vector $\bold{v}\in\R^{N\times1}$ under the action of the matrix $X(\lambda)^{K_i}$ in Step $i$ have $1$-norm strictly less than $\varepsilon$. This is done as follows,
    \begin{itemize}
        \item For $i=1,\dots, L-1$, the vector $\bold{v}^TX(\lambda)^{K_i}$ is decomposed into components of size $n_1,n_2\dots,n_{i}$, and finally into two components of size $n_{i+1}$ and $N-n_1-...-n_{i+1}$ for which we apply the resolvent matrix $X_{i+1}(\lambda)$, and
        \item For $i=L$, the vector $\bold{v}^TX(\lambda)^{K_L}$ is decomposed into the first components of size $n_1,n_2\dots,n_{L}$, and finally in a unique component of size $N-n_1-...-n_{L}$, belonging to the dangling part of the network $\mathcal{G}$, where we apply the \textbf{Stage 2}. 
    \end{itemize}
    The idea behind this stage is illustrated in Figure \ref{fig:iterative-argument}.

\begin{figure}[h!]
\centering

\tikzset{every picture/.style={line width=0.75pt}} 

\begin{tikzpicture}[x=0.75pt,y=0.75pt,yscale=-1,xscale=1]

\draw [color={rgb, 255:red, 208; green, 2; blue, 27 }  ,draw opacity=1 ][line width=0.75]    (550,18) .. controls (570,25) and (570,55) .. (550,65) ;
\draw [shift={(545,68)}, rotate = 330] [fill={rgb, 255:red, 208; green, 2; blue, 27 }  ,fill opacity=1 ][line width=0.08]  [draw opacity=0] (8.93,-4.29) -- (0,0) -- (8.93,4.29) -- cycle    ;
\draw [color={rgb, 255:red, 208; green, 2; blue, 27 }  ,draw opacity=1 ][line width=0.75]    (550,18) .. controls (590,32) and (590,95) .. (550,124) ;
\draw [shift={(545,127)}, rotate = 327] [fill={rgb, 255:red, 208; green, 2; blue, 27 }  ,fill opacity=1 ][line width=0.08]  [draw opacity=0] (8.93,-4.29) -- (0,0) -- (8.93,4.29) -- cycle    ;
\draw [color={rgb, 255:red, 208; green, 2; blue, 27 }  ,draw opacity=1 ][line width=0.75]    (550,18) .. controls (610,32) and (610,150) .. (550,195) ;
\draw [shift={(545,198)}, rotate = 325] [fill={rgb, 255:red, 208; green, 2; blue, 27 }  ,fill opacity=1 ][line width=0.08]  [draw opacity=0] (8.93,-4.29) -- (0,0) -- (8.93,4.29) -- cycle    ;

\draw (3.35,6) node [anchor=north west][inner sep=0.75pt]  [font=\small]  {$ \begin{array}{l}
\vspace*{3mm}
\ \ \ \ \ \ \ \ \ \ \ \mathbf{v}^{T} =( \enspace \underbrace{\hspace{4.7mm} \mathbf{v}_{1}^{T} \hspace{4.7mm}}_{{\displaystyle n_{1}}} \enspace | \ \ \underbrace{\hspace{49mm} \mathbf{v}_{2}^{T} \hspace{49mm}}_{\displaystyle N-n_{1}} \enspace )\\
\vspace*{3mm}
\mathbf{v}^{T} X( \lambda )^{K_{1}} =(\enspace\underbrace{\overbracket{\textcolor{white}{sss}\dots\textcolor{white}{sss}}^{\displaystyle\Vert\cdot\Vert_1<\varepsilon} }_{\displaystyle n_1}\enspace | \enspace\underbrace{\textcolor{white}{sss}\dots\textcolor{white}{sss}}_{\displaystyle n_2}\enspace|\enspace\underbrace{\hspace{40mm} \dots \hspace{40mm} }_{\displaystyle N-n_1-n_2}\enspace)\\

\mathbf{v}^{T} X( \lambda )^{K_{2}} =(\enspace\underbrace{\overbracket{\textcolor{white}{sss}\dots\textcolor{white}{sss}}^{\displaystyle\Vert\cdot\Vert_1<\varepsilon} }_{\displaystyle n_1}\enspace |\enspace\underbrace{\overbracket{\textcolor{white}{sss}\dots\textcolor{white}{sss}}^{\displaystyle\Vert\cdot\Vert_1<\varepsilon} }_{\displaystyle n_2}\enspace | \enspace\underbrace{\textcolor{white}{sss}\dots\textcolor{white}{sss}}_{\displaystyle n_3}\enspace|\enspace\underbrace{\hspace{31.1mm}\dots \hspace{31.1mm}}_{\displaystyle N-n_{1} -n_{2} -n_{3}} \enspace )\\
\hspace{5mm}\vdots \hspace{18mm} \vdots \hspace{17mm} \vdots \hspace{17mm} \vdots \hspace{43mm} \vdots \\
\mathbf{v}^{T} X( \lambda )^{K_{L}} =(\enspace\underbrace{\overbracket{\textcolor{white}{sss}\dots\textcolor{white}{sss}}^{\displaystyle\Vert\cdot\Vert_1<\varepsilon} }_{\displaystyle n_1}\enspace |\enspace\underbrace{\overbracket{\textcolor{white}{sss}\dots\textcolor{white}{sss}}^{\displaystyle\Vert\cdot\Vert_1<\varepsilon} }_{\displaystyle n_2}\enspace |\enspace\underbrace{\overbracket{\textcolor{white}{sss}\dots\textcolor{white}{sss}}^{\displaystyle\Vert\cdot\Vert_1<\varepsilon} }_{\displaystyle n_3}\enspace |\enspace\dots\enspace|\enspace\underbrace{\overbracket{\textcolor{white}{sss}\dots\textcolor{white}{sss}}^{\displaystyle\Vert\cdot\Vert_1<\varepsilon} }_{\displaystyle n_{L-1}}\enspace | \enspace\underbrace{\overbracket{\textcolor{white}{sss}\dots\textcolor{white}{sss}}^{\displaystyle\Vert\cdot\Vert_1<\varepsilon} }_{\displaystyle n_{L}}\enspace |\enspace\underbrace{\hspace{9mm}\dots \hspace{9mm}}_{\displaystyle N-n_1-\dots-n_L} \enspace )\\
\end{array}$};

\end{tikzpicture}
\caption{An illustration of the main idea presented in \textbf{Stage 1}.}\label{fig:iterative-argument}
\end{figure}
    \item[\textbf{Stage 2}]: In this stage  we apply the argument given in  Theorem \ref{thm:diagonal-case} to the diagonal part of the matrix $P_A$ located in the last $M$ rows, which corresponds to the dangling clusters of the directed graph $\mathcal{G}$.
\end{itemize}
In what follows, and for simplicity of notation, we will always denote by $I$ the identity matrix with the corresponding order on each case. As it was described above, we proceed with the \textbf{Stage 1} of the proof.
\begin{itemize}
    \item[]\textbf{Step 1 of Stage 1}: Consider the matrix $P_{1,1}=P_A$ which is in terms of $Q_{1,1}$, $Q_{B_1}$, and $P_{2,2}$ in the block decomposition of equation (\ref{eq:block-decomposition}). Regarding $Q_{1,1}$, we have the following two cases:
\begin{enumerate}
    \item[1.1] \textbf{Case $Q_{1,1}$ is a null matrix of order $1$}: From Theorem \ref{thm:zero-case} in Section \ref{Section:reducible} and the recursive equation (\ref{eq:recursive-resolvent}), the resolvent matrix defined by $X_1(\lambda)=X(\lambda)=(1-\lambda)(I-\lambda P_A)^{-1}$ has the form
    \begin{equation*}
    X_1(\lambda)
    =\left(\begin{array}{c|c c c} 
	(1-\lambda)  &  &\lambda Q_{B_1}X_2(\lambda) &\\
    
	\hline 
    \\
    \scaleto{\bold{0}}{10pt}&  &\scaleto{X_2(\lambda)}{18pt} & \\
     \\
\end{array}\right), 
     \end{equation*}   
   where $X_2(\lambda)=(1-\lambda)(I-\lambda P_{2,2})^{-1}$.
   \item[1.2] \textbf{Case $Q_{1,1}$ is a irreducible square matrix of order $n_1$}: From Theorem \ref{thm:non-zero-case} in Section \ref{Section:reducible} and the recursive equation (\ref{eq:recursive-resolvent}), we see that 
    \begin{equation*}
    X_1(\lambda)=\left(\begin{array}{c|c c c}
	(1-\lambda)(I-\lambda Q_{1,1})^{-1}  &  &\lambda (I-\lambda Q_{1,1})^{-1}Q_{B_1}X_2(\lambda)&\\ 
	\hline 
    \\
     \scaleto{\bold{0}}{10pt}&  &\scaleto{X_2(\lambda)}{18pt} & \\
     \\
    \end{array}\right),
    \end{equation*}
   where $X_2(\lambda)=(1-\lambda)(I-\lambda P_{2,2})^{-1}$ and the identity matrix has a different size than in the Case 1.1.
\end{enumerate}

To proceed  with the proof of \textbf{Step 1}, we consider the recursive sequence of vectors $\{\bold{x}_k^T\}_{k\geq0}$ given by 
\begin{equation}\label{eq:sequence-xk-general-case}
 \bold{x}_k^{T}=\bold{x}_{k-1}^{T}X(\lambda)\,,\,(k\geq 1)\qquad\text{with}\qquad \, \bold{x}_0=\bold{y_1} \text{\qquad and\qquad} \Vert \bold{y_1}\Vert_1=1, 
\end{equation}
and let $\bold{y_1}^T=(\enspace(\bold{y_1})_{1}^T\quad|\quad(\bold{y_1})_2^T\enspace)$ be the decomposition of vector $\bold{y_1}$ where $(\bold{y_1})_1\in\R^{n_1\times1}$ and call $\bold{z}_1:=(\bold{y_1})_2\in\R^{(N-n_1)\times1}$ (the second half of the vector $\bold{y}_1$), with $n_1$ the order of the block $Q_{1,1}$. Observe that $n_1=1$ when $Q_{1,1}$ is a null matrix of order $1$. Regardless of what matrix-type is $Q_{1,1}$ (see Section \ref{Section:reducible}),  since $X(\lambda)=X_1(\lambda)$, we see that for a given $\varepsilon>0$, there exits a positive integer $K_1\geq 1$ such that for every $k\geq K_1$ the vector
 \begin{equation*}
 \bold{y_2}^T=\bold{x}_{K_{1}}^TX^{k-K_{1}}(\lambda)=\bold{y_1}^T X_1^k(\lambda)=(\enspace(\bold{y_1})_1^T\quad|\quad(\bold{y_1})_2^T\enspace)X_1^k(\lambda)\qquad\text{with}\qquad \bold{y_2}^T=(\enspace(\bold{y_2})_{1}^T\quad|\quad(\bold{y_2})_2^T\enspace)
 \end{equation*}
satisfies the bound $\Vert (\bold{y_2})_{1}^T \Vert_1<\varepsilon$ with $(\bold{y_2})_1\in\R^{n_1\times1}$. For the next step of \textbf{Stage 1} we will focus  on the vector $\bold{z_2}:=(\bold{y_2})_2\in \R^{(N-n_1)\times 1}$ (the second half of the vector $\bold{y}_2$) and the resolvent matrix $X_2(\lambda)$.
\end{itemize}

The above argument is repeated $L$ times, where on the \textbf{Step j} we consider the decomposition of the vector $\bold{z_j}^T=(\enspace(\bold{z_j})_{1}^T\quad|\quad(\bold{z_j})_2^T\enspace)$ obtained from the \textbf{Step j-1} with $(\bold{z_j})_1\in\R^{n_j\times1}$ and $(\bold{z_j})_2\in\R^{(N-n_1-\,\dots\,-n_j)\times1}$, where $n_j$ is the order of the block $Q_{j,j}$. Therefore, regardless of what matrix-type $Q_{j,j}$ is, for a given $\varepsilon>0$, there exits a positive integer $K_j\geq 1$ with $K_j>K_{j-1}>\dots >K_1$ and such that for every $k\geq K_j$ the vector
 \begin{equation*}
 \bold{y_{j+1}}^T=\bold{z_{j}}^T X_j^k(\lambda)
 =(\enspace(\bold{z_j})_{1}^T\quad|\quad(\bold{z_j})_2^T\enspace)X_j^k(\lambda)\qquad\text{with}\qquad \bold{y_{j+1}}^T=(\enspace(\bold{y_{j+1}})_{1}^T\quad|\quad(\bold{y_{j+1}})_2^T\enspace)
 \end{equation*}
satisfies the bound $\Vert (\bold{y_{j+1}})_{1}^T \Vert_1<\varepsilon$ with $(\bold{y_{j+1}})_1\in\R^{n_j\times1}$.  For the next step of \textbf{Stage 1} we will focus, as above,   on the vector $\bold{z_{j+1}}:=(\bold{y_{j+1}})_2\in \R^{(N-n_1-\,\cdots\,-n_j)\times 1}$ (the second half of the vector $\bold{y_{j+1}}$) and the matrix resolvent matrix $X_{j+1}(\lambda)$.\\

\begin{itemize}
    \item[]\textbf{Step L of Stage 1}: Consider the matrix $P_{L,L}$ which is in terms of $Q_{L,L}$, $Q_{B_L}$, and $P_{L+1,L+1}$ in the block decomposition of equation (\ref{eq:block-decomposition}). Observe that $P_{L+1,L+1}$ is the diagonal matrix corresponding to the dangling cluster part of the directed graph $\mathcal{G}$. We have the following two cases for  $Q_{L,L}$:
\begin{enumerate}
    \item[$L$.1] \textbf{Case $Q_{L,L}$ is a null matrix of order $1$}: From Theorem \ref{thm:zero-case} in Section \ref{Section:reducible} and the recursive equation (\ref{eq:recursive-resolvent}), the resolvent matrix $X_L(\lambda)$ has the form
    \begin{equation*}
    X_L(\lambda)
    =\left(\begin{array}{c|c c c} 
	(1-\lambda)  &  &\lambda Q_{B_L}X_{L+1}(\lambda) &\\ 
	\hline 
    \\
     \scaleto{\bold{0}}{10pt}&  &\scaleto{X_{L+1}(\lambda)}{18pt} & \\
     \\
\end{array}\right), 
     \end{equation*}   
   where $X_{L+1}(\lambda)=(1-\lambda)(I-\lambda P_{L+1,L+1})^{-1}$.
   \item[$L$.2] \textbf{Case $Q_{L,L}$ is a irreducible square matrix of order $n_L$}: From Theorem \ref{thm:non-zero-case} in Section \ref{Section:reducible} and the recursive equation (\ref{eq:recursive-resolvent}), the resolvent matrix $X_L(\lambda)$ has the form
    \begin{equation*}
    X_L(\lambda)=\left(\begin{array}{c|c c c} 
	(1-\lambda)(I-\lambda Q_{L,L})^{-1}  &  &\lambda (I-\lambda Q_{L,L})^{-1}Q_{B_L}X_{L+1}(\lambda)&\\ 
	\hline 
    \\
    \scaleto{\bold{0}}{10pt}&  &\scaleto{X_{L+1}(\lambda)}{18pt} & \\
     \\
    \end{array}\right),
    \end{equation*}
   where $X_{L+1}(\lambda)$  as above and the identity matrix has a different size than in the case $L$.1.
\end{enumerate}
Now, for the vector $\bold{z_L}\in \R^{(N-n_1-\,\cdots\,-n_{L-1})\times 1}$ obtained in \textbf{Step L-1}, we consider the decomposition $\bold{z_L}^T=(\enspace(\bold{z_L})_{1}^T\quad|\quad(\bold{z_L})_2^T\enspace)$ with $(\bold{z_L})_1\in\R^{n_L\times1}$, $(\bold{z_L})_2\in\R^{(N-n_1-\,\dots\,-n_{L-1}-n_L)\times1}$ and $n_L$ is the order of the block $Q_{L,L}$. Therefore, regardless of what matrix $Q_{L,L}$ is, for a given $\varepsilon>0$, there exits a positive integer $K_L\geq 1$ with $K_L>K_{L-1}>\dots >K_1$ and such that for every $k\geq K_L$ the vector
 \begin{equation*}
 \bold{y_{L+1}}^T=\bold{z_{L}}^T X_L^k(\lambda)
 =(\enspace(\bold{z_L})_{1}^T\quad|\quad(\bold{z_L})_2^T\enspace)X_L^k(\lambda)\qquad\text{with}\qquad \bold{y_{L+1}}^T=(\enspace(\bold{y_{L+1}})_{1}^T\quad|\quad(\bold{y_{L+1}})_2^T\enspace)
 \end{equation*}
satisfies the bound $\Vert (\bold{y_{L+1}})_{1}^T \Vert_1<\varepsilon$ with $(\bold{y_{L+1}})_1\in\R^{n_L\times1}$. Now, as before, we focus on the vector $\bold{z_{L+1}}:=(\bold{y_{L+1}})_2\in \R^{(N-n_1-\,\cdots\,-n_{L-1}-n_L)\times 1}$ (the second half of the vector $\bold{y_{L+1}}$) and the resolvent matrix $X_{L+1}(\lambda)$. Now, since the matrix $X_{L+1}(\lambda)=(1-\lambda)(I-\lambda P_{L+1,L+1})^{-1}$ and $P_{L+1,L+1}$ is the diagonal matrix corresponding to the dangling cluster part of the directed graph $\mathcal{G}$, we can move on to \textbf{Stage 2} of the proof.
\end{itemize}

\textbf{Stage 2}: As it was mentioned above, this stage of the proof is based on a diagonal argument corresponding to the dangling cluster part of the directed graph $\mathcal{G}$. Let us denote by $m_j$ the size of each dangling cluster $D_j$, for $j=1,\dots,M$, located in the last rows  $M$ of the matrix $P_A$ in equation (\ref{eq:block-decomposition}).\\

Since $P_{L+1,L+1}=\text{diag}(D_1,\dots,D_M)$ we have
\begin{equation*}
X_{L+1}(\lambda)=(1-\lambda)(I-\lambda P_{L+1,L+1})^{-1}
=\left(\begin{matrix} 
    (1-\lambda)(I-\lambda D_1)^{-1}      & \dots  &\bold{0}\\
    \vdots                               & \ddots & \vdots\\
    \bold{0}                                    &\dots   & (1-\lambda)(I-\lambda D_M)^{-1} 
    \end{matrix}\right)
=\text{diag}(\, Y_1(\lambda),Y_2(\lambda),\dots,Y_M(\lambda)\,),
\end{equation*}
where $Y_j(\lambda)=(1-\lambda)(I-\lambda D_j)^{-1}$, for $j=1,2,\dots, M$. Now, for the vector $\bold{z_{L+1}}\in \R^{(N-n_1-\,\cdots\,-n_{L-1}-n_L)\times 1}$ obtained in \textbf{Step L} of \textbf{Stage 1}, we consider the decomposition 
\begin{equation*}
\bold{z_{L+1}}^T=(\enspace(\bold{z_{L+1}})_{1}^T\quad|\quad(\bold{z_{L+1}})_2^T\quad|\dots|\quad(\bold{z_{L+1}})_M^T\enspace)\qquad \text{with}\qquad  (\bold{z_{L+1}})_1\in\R^{m_1\times1},(\bold{z_{L+1}})_2\in\R^{m_2\times1},\dots, (\bold{z_{L+1}})_M\in\R^{m_M\times1},
\end{equation*}
and $m_1+m_2+\dots+m_M=N-n_1-\,\cdots\,-n_{L-1}-n_L$. Therefore, for every positive integer $k\geq 1$ we have
\begin{equation*}
\bold{z_{L+1}}^TX_{L+1}(\lambda)^k
=(\enspace(\bold{z_{L+1}})_{1}^T\quad|\dots|\quad(\bold{z_{L+1}})_M^T\enspace)
\left(\begin{matrix} 
    Y_1(\lambda)^k  & \dots  &\bold{0}\\
    \vdots          & \ddots & \vdots\\
    \bold{0}               &\dots   & Y_M(\lambda)^k
    \end{matrix}\right)
=(\enspace(\bold{z_{L+1}})_{1}^TY_1(\lambda)^k\quad|\dots|\quad(\bold{z_{L+1}})_M^TY_M(\lambda)^k\enspace).
\end{equation*}
\underline{Claim}

An application of Theorem \ref{thm:diagonal-case}, for a fixed $\varepsilon>0$, gives a positive integer $K_{L+1}\geq 1$ with $K_{L+1}>K_{L}>\dots >K_1$  such that for every $k\geq K_{L+1}$ the inequalities
$$\left\Vert(\bold{z_{L+1}})_{1}^TY_1(\lambda)^k - \alpha_1\bold{c}_1^T\right\Vert_1<\varepsilon,  \qquad 
\left\Vert(\bold{z_{L+1}})_{2}^TY_2(\lambda)^k - \alpha_2\bold{c}_2^T\right\Vert_1<\varepsilon,\qquad\dots,\quad \qquad
\left\Vert(\bold{z_{L+1}})_{M}^TY_M(\lambda)^k - \alpha_M\bold{c}_M^T\right\Vert_1<\varepsilon,$$ 
hold for some $\alpha_1,\alpha_2,\dots, \alpha_M\geq0$ with $\alpha_1+\alpha_2+\dots+\alpha_M=1$,  where $\bold{c}_1\in\R^{m_1\times1}, \bold{c}_2\in\R^{m_2\times1},\dots,\bold{c}_M\in\R^{m_M\times1}$ are the left-hand Perron vectors for the row-normalization of matrices $D_1, D_2,\dots,D_M$, respectively.\\

In order to justify this claim we need first to notice the following. Take a personalization vector $\bold{v}\in\R^{N\times1}$ with $\bold{v}>0$ and $\| \bold{v}\|_1=1$  supported on the nodes belonging to the dangling clusters, that is, all entries of $\bold{v}$ are zero except those belonging to nodes in the dangling clusters. If $\mathcal{G}_j$ denotes the $j$-dangling cluster $(j=1,\dots, M)$,  then call $\alpha_j=\|\bold{v_j}^T\|_1$ where $\bold{v_j}=\bold{v}\chi_{N_j}$ and $N_j$ is the set of nodes belonging to $\mathcal{G}_j$, i.e, $\bold{v}_j$ is the chunk of $\bold{v}$ supported on $\mathcal{G}_j$ (here $\chi_{N_j}$ stands for the indicator function of $N_j$). Evidently 

$$\bold{v}^T X(\lambda)^k=(\enspace \bold{0}\quad|\quad\bold{v_1}^TY_1(\lambda)^k\quad|\dots|\quad\bold{v_M}^TY_M(\lambda)^k\enspace)=(\enspace \bold{0}\quad|\quad\alpha_1\frac{\bold{v_1}^T}{\|\bold{v_1}^T\|_1}Y_1(\lambda)^k\quad|\dots|\quad \alpha_M\frac{\bold{v_M}^T}{\|\bold{v_M}^T\|_1}Y_M(\lambda)^k\enspace)$$

As in (\ref{alphasdiagonal}), where the power method was used, we see that 

$$\bold{v}^T X(\lambda)^k\underset{k}{\longrightarrow} (\underbrace{\enspace \bold{0}_{n_1\times 1}^T\quad|\quad \bold{0}_{n_2\times 1}^T\quad|\dots|\quad \bold{0}_{n_L\times1}^T}_{\text{non-dangling clusters}} \quad|\quad \underbrace{\alpha_1\bold{c}_1^T\,\,\quad |\quad \alpha_2\bold{c}_2^T\,\,\quad|\dots|\quad \alpha_M\bold{c}_M^T\enspace}_{\text{dangling clusters $D_1,\dots, D_M$}}),$$
where $\bold{c}_1\in\R^{m_1\times1}, \bold{c}_2\in\R^{m_2\times1},\dots,\bold{c}_M\in\R^{m_M\times1}$ are the left-hand Perron vectors for matrices $D_1, D_2,\dots, D_M$, respectively.

This will be used now to prove our claim as follows. For a given $\varepsilon>0$ we have proved that there is $K_{L+1}\in\mathbb{N}$ such that for $k\ge K_{L+1}$ we have $\|\ \bold{v}^TX(\lambda)^k\,-\,(\enspace \bold{0}\quad|\quad \bold{z_{L+1}}^T\enspace)\ \|_1<\varepsilon$. 
Hence, since $X(\lambda)$ is row-stochastic we obtain (see Remark \ref{rmk:operator-norm})
$$\|\ \bold{v}^TX(\lambda)^{k+j}\,-\,(\quad \bold{0}\quad|\quad \bold{z_{L+1}}^TX(\lambda)^j\quad)\ \|_1<\varepsilon,$$ for all $j$. 
In order to have a vector of norm one we just need to  $\varepsilon$-modify the second half of $(\quad\bold{0}\quad|\quad \bold{z_{L+1}}^T\enspace)$ 
into $(\quad \bold{0}\quad|\quad {\bold{\tilde{z}_{L+1}}}^{\quad T}\enspace)$ so that $\|\ (\quad \bold{0}\quad|\quad \bold{\tilde{z}_{L+1}}^{\quad T}\enspace)\ \|_1=1$ and  $\|\enspace\bold{\tilde{z}_{L+1}}^{\quad T}-\bold{{z}_{L+1}}^{T}\ \|_1<\varepsilon$. From the remarks above follows

$$(\enspace \bold 0\quad|\quad \bold{\tilde{z}_{L+1}}^{\quad T}\ ) X(\lambda)^j\underset{j}{\longrightarrow} (\underbrace{\enspace \bold{0}_{n_1\times 1}^T\quad|\quad \bold{0}_{n_2\times 1}^T\quad|\dots|\quad \bold{0}_{n_L\times1}^T}_{\text{non-dangling clusters}} \quad|\quad \underbrace{\alpha_1\bold{c}_1^T\,\,\quad |\quad \alpha_2\bold{c}_2^T\,\,\quad|\dots|\quad \alpha_M\bold{c}_M^T\enspace}_{\text{dangling clusters $D_1,\dots, D_M$}}),$$
and thus the same happens with   $(\enspace \bold{0}\quad|\quad \bold{z_{L+1}}^T\ )X(\lambda)^j$ and $\bold{v}^TX(\lambda)^{k+j}$.





All the previous work can be now put  together 
to conclude that the sequence $\{\bold{x}_k\}_{k\geq0}$ in equation (\ref{eq:sequence-xk-general-case}) converges to a non-zero vector $\bold{x}\in\R^{N\times 1}$ with $\Vert\bold{x}\Vert_1=1$ such that
\begin{align*}
    \bold{x}^T=\lim_{k\to\infty}\bold{v}^TX^{k}(\lambda)
    =\lim_{k\to\infty}
    (\underbrace{\enspace (\bold{x}_k)_1^T\quad|\quad (\bold{x}_k)_2^T \quad|\dots|\quad (\bold{x}_k)_L^T}_{\text{non-dangling clusters}}\quad&|\quad \underbrace{(\bold{x}_k)_{L+1}^T\quad|\quad (\bold{x}_k)_{L+2}^T \quad|\dots|\quad (\bold{x}_k)_{L+M}^T\enspace}_{\text{dangling cluster $D_1,\dots, D_M$}})\\
    =(\underbrace{\enspace \bold{0}_{n_1\times1}^T\quad|\quad \bold{0}_{n_2\times1}^T\quad|\dots|\quad \bold{0}_{n_L\times1}^T}_{\text{non-dangling clusters}} \quad&|\quad \underbrace{\alpha_1\bold{c}_1^T\,\,\quad |\quad \alpha_2\bold{c}_2^T\,\,\quad|\dots|\quad \alpha_M\bold{c}_M^T\enspace}_{\text{dangling clusters $D_1,\dots, D_M$}}),
\end{align*}
for some $\alpha_1,\alpha_2,\dots, \alpha_M\geq0$ with $\alpha_1+\alpha_2+\dots+\alpha_M=1$ and where $\bold{c}_1\in\R^{m_1\times1}, \bold{c}_2\in\R^{m_2\times1},\dots,\bold{c}_M\in\R^{m_M\times1}$ are the left-hand Perron vectors for matrices $D_1, D_2,\dots, D_M$, respectively.\\

\begin{rmk}
After close inspection we see that $(\bold{x}_{k})_{L+j}^TY_j(\lambda)\le (\bold{x}_{k+1})_{L+j}^T$ which implies  $\|(\bold{x}_k)_{L+j}^T\|_1\le\|(\bold{x}_{k+1})_{L+j}^T\|_1$ for  $j=1,\dots, M$. This together with $\|(\bold{x}_k)_{L+j}^T\|_1\le 1$ for all $k$ and $j=1,\dots, M$ also justifies the existence of $\alpha_j\in\R$ with $0\leq\alpha_j\leq 1$ satisfying $\displaystyle\lim_{k\to\infty}\|(\bold{x}_{k})_{L+j}^T\|_1=\alpha_j$.
\end{rmk}
\end{proof}

 We make the important although evident remark that for every arbitrary choice of $(\alpha_1,\dots,\alpha _M)$ there is a personalization vector $\bold{v}\in\R^{N\times1}$ as above supported on the nodes of the dangling clusters  satisfying $\alpha_i=\|\bold{v_i}\|_1$ for all $i=1,\dots, M$ (see ($iv$) in Corollary \ref{cor:genral-case} below). \\

Similarly to the previous cases, Theorem~\ref{thm:genral-case} can be reinterpreted to completely solve the problem of existence and uniqueness of fixed points of the PageRank for a general non-strongly connected network $\mathcal{G}=(V,E)$ with  a given damping factor $\lambda\in (0,1)$. More precisely,\\
 \begin{coro}\label{cor:genral-case}
Let $\mathcal{G}=(V,E)$ be a non-strongly connected graph with $M\ge 1$ dangling clusters. Then, 
\begin{itemize}
 \item[{\it (i)}] $PR_\lambda: \Delta_N^+\longrightarrow \Delta_N^+$ has some fixed point if and only if $\mathcal{G}$ has no non-dangling clusters.
 \item[{\it (ii)}]  If $\mathcal{G}$ has some non-dangling clusters, then  $PR_\lambda: \Delta_N\longrightarrow \Delta_N$ has some fixed points.
 \item[{\it (iii)}] $PR_\lambda$ has a unique fixed point if and only if $\mathcal{G}$ has only one dangling cluster.
 \item[{\it (iv)}] If $\mathcal{G}$ has $M>1$ dangling clusters, then the set of fixed points of $PR_\lambda$ is the convex hull  in $\Delta_N^+$ (or $\Delta_N$) of the left-hand Perron vectors of the matrices the row-normalized and irreducible of matrices $D_1, D_2,\dots, D_M$ given in Theorem~\ref{thm:genral-case}.
\end{itemize}
\end{coro}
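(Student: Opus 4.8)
The plan is to identify the fixed-point set of $PR_\lambda$ with a single explicitly described polytope and then read off (i)--(iv) from its geometry. First I would translate the fixed-point condition into linear algebra: by equation (\ref{eq:pi-vector}) we have $PR_\lambda(\bold{v})^T=\bold{v}^TX(\lambda)$, so $\bold{v}$ is a fixed point exactly when $\bold{v}^TX(\lambda)=\bold{v}^T$; that is, the fixed points are the left eigenvectors of $X(\lambda)$ for the eigenvalue $1$ lying in the relevant simplex. As observed in the proof of Theorem~\ref{thm:main-theorem}, $X(\lambda)$ and $P_A$ share their eigenvectors and $(1-\lambda)(1-\lambda\beta)^{-1}=1$ iff $\beta=1$, so these coincide with the eigenvalue-$1$ left eigenvectors of $P_A$ in the simplex.

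Next I would pin down this eigenspace using the reduced block form (\ref{eq:dangling-cluster}). Writing $\bold{x}^T=(\enspace\bold{x}_{nd}^T\enspace|\enspace\bold{x}_{d}^T\enspace)$ for the non-dangling and dangling coordinates, the equation $\bold{x}^TP_A=\bold{x}^T$ restricted to the non-dangling columns involves only the upper-triangular block $Q_{nd}=(Q_{i,j})_{1\le i,j\le L}$, since the dangling rows of $P_A$ carry no entries outside their own diagonal block. Because every non-dangling diagonal block $Q_{i,i}$ is either a null block or an irreducible block with a deficient row (a non-dangling cluster has an out-link leaving it), the argument of Step~1 in the proof of Theorem~\ref{thm:non-zero-case} gives $\rho(Q_{i,i})<1$, whence $\rho(Q_{nd})=\max_{i\le L}\rho(Q_{i,i})<1$ and $I-Q_{nd}$ is invertible; therefore $\bold{x}_{nd}^T(I-Q_{nd})=\bold{0}$ forces $\bold{x}_{nd}=\bold{0}$. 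The surviving equation is $\bold{x}_d^T\,\mathrm{diag}(D_1,\dots,D_M)=\bold{x}_d^T$, i.e.\ $\bold{x}_{d,j}^TD_j=\bold{x}_{d,j}^T$ for each $j$; by irreducibility of $D_j$ and Perron--Frobenius its eigenvalue-$1$ left eigenspace is spanned by $\bold{c}_j$, so $\bold{x}_{d,j}=\beta_j\bold{c}_j$. Intersecting with the probability simplex ($\bold{x}\ge0$, $\|\bold{x}\|_1=1$, using $\|\bold{c}_j\|_1=1$) shows that the fixed-point set is exactly
$$\mathcal{F}=\Big\{\,(\enspace\bold{0}_{n_1\times1}^T\,|\dots|\,\bold{0}_{n_L\times1}^T\,|\,\alpha_1\bold{c}_1^T\,|\dots|\,\alpha_M\bold{c}_M^T\enspace)\;:\;\alpha_j\ge0,\ \textstyle\sum_{j}\alpha_j=1\,\Big\},$$
the convex hull of the $M$ vectors $\bold{p}_j$ obtained by placing $\bold{c}_j$ in the $j$-th dangling block and zeros elsewhere (this also reconfirms Theorem~\ref{thm:genral-case}, since every limit obtained there lands in $\mathcal{F}$).

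Finally I would deduce the four claims from the geometry of $\mathcal{F}$. A point of $\mathcal{F}$ is strictly positive precisely when there are no forced zero blocks, i.e.\ $L=0$ (no non-dangling clusters), with all $\alpha_j>0$; such a point exists iff $L=0$, giving (i), while when $L\ge1$ the same vectors still lie in $\Delta_N$, giving (ii). Since the $\bold{p}_j$ have pairwise disjoint supports they are distinct, so $\mathcal{F}$ is a singleton exactly when $M=1$ and is the nondegenerate hull $\mathrm{conv}\{\bold{p}_1,\dots,\bold{p}_M\}$ when $M>1$, which yields (iii) and (iv). The main obstacle is the transience step $\rho(Q_{nd})<1$: one must argue that no eigenvalue-$1$ left eigenvector can charge the non-dangling clusters, which is precisely the phenomenon that all mass drains into the dangling clusters, and it is here that the leakage of each non-dangling block and the spectral estimate borrowed from Theorem~\ref{thm:non-zero-case} are essential.
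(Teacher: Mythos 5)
Your proposal is correct, and it reaches the corollary by a genuinely different route than the paper. The paper obtains Corollary~\ref{cor:genral-case} as a reinterpretation of Theorem~\ref{thm:genral-case}, i.e.\ of the \emph{dynamic} statement that the iterates $\mathbf{v}^TX(\lambda)^k$ converge, proved there by the two-stage $\varepsilon$-argument (Stage~1 drains the $1$-norm out of the non-dangling blocks one at a time via Theorems~\ref{thm:zero-case} and~\ref{thm:non-zero-case}; Stage~2 applies the diagonal argument of Theorem~\ref{thm:diagonal-case} to the dangling part). You instead characterize the fixed-point set \emph{statically}: since $\mathbf{v}^TX(\lambda)=\mathbf{v}^T$ is equivalent to $\mathbf{v}^TP_A=\mathbf{v}^T$ (your eigenvector correspondence; even more directly, multiply through by $I_N-\lambda P_A$), the fixed points are exactly the eigenvalue-$1$ left eigenvectors of $P_A$ lying in the simplex, and the normal form (\ref{eq:dangling-cluster}) computes this eigenspace: the non-dangling coordinates must vanish because each non-dangling diagonal block is substochastic with a deficient row, hence $\rho(Q_{i,i})<1$ by the Berman--Plemmons comparison used in Step~1 of Theorem~\ref{thm:non-zero-case}, so $\rho$ of the block-triangular non-dangling part is below $1$, while the dangling coordinates decompose into nonnegative multiples of the Perron vectors $\mathbf{c}_j$. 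This trade has content in both directions. Your argument is shorter, avoids all the norm bookkeeping, and actually delivers claim {\it (iv)} more cleanly than the paper does: deducing that \emph{every} point of the convex hull is a fixed point, and that there are no others, from Theorem~\ref{thm:genral-case} alone requires the extra observations that a fixed point is the limit of its own constant orbit and that vectors supported on the dangling clusters with blocks proportional to $\mathbf{c}_j$ are genuinely fixed---points the paper only indicates in the remark preceding the corollary---whereas your eigenspace computation yields both inclusions simultaneously. What you lose is precisely what Theorem~\ref{thm:genral-case} is for: your proof says nothing about convergence of the feedback iteration from an arbitrary personalization vector, so it identifies the fixed points but cannot replace the paper's power-method route to computing them.
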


\section*{Acknowledgement}

This work has been supported by INCIBE/URJC Agreement M3386/2024/0031/001 within the framework of the Recovery, Transformation and Resilience Plan funds of the European Union (Next Generation EU) and by project M3707 (URJC Grant).












\end{document}